\documentclass[a4paper,10pt]{article}
\usepackage{a4wide}
\usepackage{amsfonts}
\usepackage{amssymb}
\usepackage{amsmath}
\usepackage{enumerate}
\usepackage{epsfig}
\usepackage{cases}
\usepackage{color}
\usepackage{psfrag}
\usepackage[numbers,sort&compress]{natbib}

\def\qed{}

\newcommand{\ms}{\mathrm{ms}} 
\newcommand{\mV}{\mathrm{mV}} 
\newcommand{\su}{\mathrm{su}} 

\newcommand{\apriori}{\textit{a~priori}}
\newcommand{\defacto}{\textit{de~facto}}
\newcommand{\eg}{e.g.}
\newcommand{\etal}{\textit{et~al.}}
\newcommand{\ie}{i.e.}
\renewcommand{\qed}{\rule{1ex}{1ex}}
\newcommand{\QED}{\hfill\qed}
\newcommand{\viceversa}{vice versa}

\renewcommand{\@}{\partial}
\newcommand{\const}{\mathrm{const}}
\renewcommand{\d}{\mathrm{d}}
\newcommand{\Df}[2]{\frac{\d #1}{\d #2}}
\newcommand{\DDf}[2]{\frac{\d^2 #1}{\d #2^2}}
\newcommand{\df}[2]{\frac{\partial #1}{\partial #2}}
\newcommand{\ddf}[2]{\frac{\partial^2 #1}{\partial #2^2}}
\newcommand{\emb}[1]{\underline{{#1}}}
\newcommand{\eq}[1]{(\ref{eq:#1})}
\newcommand{\eqlabel}[1]{\label{eq:#1}}
\newcommand{\Fig}[1]{Figure~\ref{#1}}
\newcommand{\fig}[1]{figure~\ref{#1}}
\newcommand{\figs}[1]{figures~\ref{#1}}
\newcommand{\figref}[1]{\ref{#1}}
\newcommand{\hash}{\#}
\newcommand{\mx}[1]{\mathbf{#1}}
\renewcommand{\O}[1]{\mathcal{O}\left(#1\right)}
\renewcommand{\o}[1]{o\left(#1\right)}
\newcommand{\Qstat}[1]{#1_{\infty}}
\newcommand{\Real}{\mathbb{R}}

\newcommand{\dblfigure}[3]{
  \begin{figure*}[t]
    \centerline{\resizebox{\textwidth}{!}{\includegraphics{#1}}}
    \caption[]{#2}
    \label{#3}
  \end{figure*}
}
\newcommand{\sglfigure}[3]{
  \begin{figure}[t]
    \centerline{\includegraphics{#1}}
    \caption[]{#2}
    \label{#3}
  \end{figure}
}
\newtheorem{prop}{Proposition}
\newtheorem{con}{Conjecture}
\newenvironment{proof}{\textbf{Proof}\ }{\QED}

\newcounter{assm} 
\newenvironment{assm}{\begingroup%
  \begin{enumerate}%
  \setcounter{enumi}{\value{assm}}%
}{%
  \setcounter{assm}{\value{enumi}}%
  \end{enumerate}\endgroup%
}
\setcounter{assm}{0}

\newcommand{\+}[3]{\def#1{{#2}}}
\+{\APD}{\mathrm{APD}}{Action potential duration}
\+{\A}{A}{ZFK rhs amplitude}
\+{\BCLmin}{\BCL_{\min}}{the minimal temporal period for which there are propagating waves}
\+{\BCL}{P}{the temporal period}
\+{\BesselI}{I}{the modified Bessel function of the 1st kind}
\+{\BesselK}{K}{the modified Bessel function of the 2nd kind}
\+{\Ca}{\mathrm{[Ca]}}{Ca++ ions concentration}
\+{\Cfast}{\C_{\textrm{fast}}}{faster solution for $\C$}
\+{\Ch}{\C^+}{max $\C$}
\+{\Clwr}{\underline{\C}}{lower bound for $\C$}
\+{\Cl}{\C^-}{min $\C$}
\+{\Cmax}{\C_\text{max}}{min $\C$}
\+{\Cmin}{\C_\text{min}}{max $\C$}
\+{\Cslow}{\C_{\textrm{slow}}}{slow solution for for $\C$}
\+{\Cupr}{\overline{\C}}{upper bound for $\C$}
\+{\C}{C}{non-dimensional wave speed}
\+{\DI}{\mathrm{DI}}{diastolic interval}
\+{\D}{D}{diffusion constant}
\+{\ENa}{\E_{\mathrm{Na}}}{Na reversal potential}
\+{\Ealpha}{\E_\alpha}{Prefront voltage}
\+{\East}{E_{\ast}}{location of max of the time-independent current in the caricature model}
\+{\Edagger}{E_{\dagger}}{location of min of the time-independent current in the caricature model}
\+{\Edel}{\beta}{inhom term coeff in slow piece $\E$ soln}
\+{\Edmp}{\alpha}{damping rate of $\E$ in slow pieces}
\+{\Edrei}{\overset{3}{\E}}{third case of the exact solution of caricature model}
\+{\Eeins}{\overset{1}{\E}}{first case of the exact solution of caricature model}
\+{\Efastalphamax}{{\Efastalpha}_{\max}}{max value of $\Efastalpha$ needed for front}
\+{\Efastalphamin}{{\Efastalpha}_{\min}}{min value of $\Efastalpha$ needed for front}
\+{\Efastalpha}{\Efast_{\alpha}}{prefront value of $\Efast$}
\+{\Efastomega}{\Efast_{\omega}}{postfront value of $\Efast$}
\+{\Efast}{V}{$\E$ value in the fast-time system}
\+{\Eh}{\E_h}{h-gate switch voltage}
\+{\EinfEdmp}{\gamma}{qstat val of $\E$ in a slow piece, times $\Edmp$}
\+{\Ejot}{\overset{\jp}{\E}}{$\jp$-piece of the CN slow solution for $\E$}
\+{\Em}{\E_m}{m-gate switch voltage}
\+{\Eomega}{\E_\omega}{Postfront voltage}
\+{\Eone}{E_1}{lower zero of the time-independent current in the caricature model}
\+{\Ethree}{E_3}{upper zero of the time-independent current in the	caricature model}
\+{\Etwo}{E_2}{middle zero of the time-independent current in the caricature model}
\+{\Ev}{\E_0}{pinning value of $\E$}
\+{\Ezwei}{\overset{2}{\E}}{second case of the exact solution of caricature model}
\+{\E}{E}{transmembrane voltage}
\+{\FF}{F}{IBessel's log log derivative}
\+{\Fone}{F_h}{constant value of $\fone$ in the caricature model}
\+{\Ftwo}{F_{n}}{constant value of $\ftwo$ in the caricature}
\+{\F}{F}{an arbirary function}
\+{\GG}{G}{Inverse of $\FF$}
\+{\GNaa}{g}{non-dimensional $\GNa$ }
\+{\GNa}{G_{Na}}{normalized max fast inward (Na) conductance}
\+{\Gtilde}{\tilde\G}{Caricature of $\G(\E)$}
\+{\Gtwol}{g_{21}}{nonzero value of $\gtwo(\E<\Edagger)$ in the caricature}
\+{\Gtwor}{g_{22}}{nonzero value of $\gtwo(\E\ge\Edagger)$ in the caricature}
\+{\G}{G}{time-independent current in modified model}
\+{\HHlwr}{\underline{\HH}}{lower bound for $\GNaa$ }
\+{\HH}{L}{Dispersion function}
\+{\Heav}{\theta}{Heaviside function}
\+{\IKi}{I_{\mathrm{K}_1}}{a potassium current}
\+{\INac}{I_{\mathrm{Na}_c}}{time-independent component of the inward (Na) current}
\+{\INa}{I_{\mathrm{Na}}}{time-dependent component of the fast inward (Na) current}
\+{\Il}{I_{\l}}{individual ionic current number $\l$}
\+{\Intv}{\mathcal{I}}{an interval of parameter values}
\+{\Islow}{I_{\Sigma}}{all slow currents}
\+{\Is}{I_s}{slow inward (Ca) current}
\+{\Ixi}{I_{x_1}}{another potassium current}
\+{\LambertW}{W}{Lambert's $W$-function}
\+{\Nss}{\mathcal{N}}{equation of the superslow manifold}
\+{\N}{N}{number of abstract dyn eqns}
\+{\RR}{r}{ratio of modified Bessel I's with successive indices}
\+{\SS}{S}{$\so$ as function of $\GNaa$}
\+{\T}{T}{fast time, $\t/\eps$}
\+{\Valpasth}{\V_{\alpha}^{2}}{Higher of the crit values of $\Valpha$, super-caricature}
\+{\Valpastl}{\V_{\alpha}^{1}}{Lower of the crit values of $\Valpha$, super-caricature}
\+{\Valpha}{\V_{\alpha}}{non-dimensional \Ealpha}
\+{\Vdagger}{\V_{\dagger}}{non-dimensional \Edagger}
\+{\Vomega}{\V_{\omega}}{Postfront voltage}
\+{\Vone}{\V_1}{dimensionless of $\Eone$}
\+{\V}{v}{non-dimensional voltage}
\+{\X}{X}{Zoomed space}
\+{\Z}{Z}{Zoomed travelling wave coord}
\+{\arb}{\theta}{integration constant in the slow solution}
\+{\auxvar}{\sigma}{a generic auxiliary variable with short scope}
\+{\a}{a}{Barkley a prameter}
\+{\b}{b}{Barkley b prameter}
\+{\cb}{\c_b}{Back speed}
\+{\cfunb}{\cfun_b}{Back speed as a function of the slow variable}
\+{\cfunf}{\cfun_f}{Front speed as a function of the slow variable}
\+{\cfun}{\mathcal{C}}{Wave speed as a function of the slow variable}
\+{\cf}{\c_f}{Front speed}
\+{\cinterval}{\Intv_{\c}}{interval $(\cmin,\cmax)$}
\+{\cmax}{\c_{\mathrm{max}}}{maximal possible $\c$}
\+{\cmin}{\c_{\mathrm{min}}}{minimal possible $\c$}
\+{\c}{c}{Wave speed}
\+{\dgate}{d}{a $\Is$ gate}
\+{\epstwo}{\epsilon_2}{the small parameter of the secondary embedding ($\n$ is slow)}
\+{\eps}{\epsilon}{the small parameter of the main embedding}
\+{\ff}{f}{left-hand side equation of the absolute minimum of $\HHlwr$}
\+{\fgate}{f}{another $\Is$ gate}
\+{\fhnalp}{\alpha}{Slow dynamics speed parameter in FHN model, NPK formulation}
\+{\fhnsq}{\auxvar}{an aux fn of $\fhnth$}
\+{\fhnth}{\beta}{Threshold parameter in FHN model, NPK formulation}
\+{\fu}{F_\u}{rate of change of $\u$ (tihkonov fast) variable}
\+{\fv}{F_\v}{rate of change of $\v$ (tikhonov slow) variable}
\+{\fy}{\mx{F}_{\y}}{rate of change of $\y$ variable(s)}
\+{\gNa}{g_{\mathrm{Na}}}{max $\INa$ conductance}
\+{\ggam}{\gamma}{in stead of $\C^2$}
\+{\hbar}{\Qstat{h}}{fast inward current inactivation gate quasistationar}
\+{\hfast}{H}{h value in the fast subsystem}
\+{\h}{h}{fast inward current inactivation gate}
\+{\jalp}{\j_{\alpha}}{value of $\j$ at the front in the slow pblm}
\+{\jbar}{\Qstat{j}}{fast inward current slow inactivation gate quasistat}
\+{\jfast}{J}{value of $\j$ in the fast pblm}
\+{\jmin}{\j_{\min}}{min value of $\jfast$ needed for front existence}
\+{\jp}{i}{piece index in the slow soln}
\+{\j}{j}{fast inward current slow inactivation gate}
\+{\kone}{k_1}{lower slope of $\Gtilde$}
\+{\kthree}{k_3}{upper slope of $\Gtilde$}
\+{\ktwo}{k_2}{middle slope of $\Gtilde$}
\+{\l}{l}{summation index in various places}
\+{\mbarcub}{m^3_\infty}{same as $\mbar^3$}
\+{\mbar}{\Qstat{m}}{fast inward current activation gate quasistationar}
\+{\m}{m}{fast inward current activation gate}
\+{\ndel}{m}{dev from qstat val of $\n$ in a slow soln piece}
\+{\ndrei}{\overset{3}{\n}}{third piece for slow CN $\n$ soln}
\+{\ninf}{\delta}{qstat val of $\n$ in a slow soln piece}
\+{\njot}{\overset{\jp}{\n}}{$\jp$-piece of CN slow $\n$ soln}
\+{\nmax}{\n_{\max}}{max value of $\n$ over the period}
\+{\nstar}{\n_*}{min value of $\n$ over the period}
\+{\nzwei}{\overset{2}{\n}}{second piece of CN slow $\n$ soln}
\+{\n}{n}{slow outward current activation gate}
\+{\pp}{p}{aux var in Proposition \ref{prop-one}}
\+{\p}{\fu}{N-shaped fast dynamics in general theory}
\+{\q}{\fv}{slow dynamics in general theory}
\+{\so}{\s_0}{the $\za=0$ value of $\s$}
\+{\s}{s}{auxiliary independent variable}
\+{\tast}{\zt_{\ast}}{in the caricature model, the time at which $\E(\tast)=\East$}
\+{\tauE}{\tau_{\E}}{relaxation timescale of $\E$}
\+{\tauh}{\tau_{\h}}{relaxation timescale of $\h$}
\+{\tauj}{\tau_{\j}}{relaxation timescale of $\j$}
\+{\taul}{\tau_{\l}}{relaxation timescale of $\l$-th var}
\+{\tauy}{\tau_{\y}}{relaxation timescale of $\y$}
\+{\ta}{\tau}{non-dimensional time}
\+{\tdagger}{\zt_{\dagger}}{in the caricature model, the time at which $\E(\tdagger)=\Edagger$}
\+{\tildegtwo}{\tilde{g_2}}{maximal negative slow outward (K) current}
\+{\ttwo}{\zt_2}{duration of the second half of the AP trajectory}
\+{\t}{t}{original (slow) time}
\+{\ualpb}{\ufast_{\alpha}^{b}}{pre-back $\u$}
\+{\ualpf}{\ufast_{\alpha}^{f}}{pre-front $\u$}
\+{\ualpha}{\ufast_\alpha}{pre-front value of $\u$ ? }
\+{\ufastb}{\ufast_b}{fast solution for the back}
\+{\ufastf}{\ufast_f}{fast solution for the front}
\+{\ufastv}{\ufast_0}{pinning value of $\ufast$}
\+{\ufast}{\Efast}{excitation variable in Tikhonov system - fast-time}
\+{\ufun}{\mathcal{\Efast}}{Wave profile as a function of the fast coord}
\+{\ulft}{\u_l}{left asymptotic value of $\u$}
\+{\umin}{\u_{-}}{left piece of $\u$-nulcline}
\+{\uomega}{\ufast_\omega}{post-front value of $\u$ ? }
\+{\uomgb}{\ufast_{\omega}^{b}}{post-back $\u$}
\+{\uomgf}{\ufast_{\omega}^{f}}{post-front $\u$}
\+{\uplumin}{\u_{\pm}}{the two stable branches of $\u$-nulcline}
\+{\uplu}{\u_{+}}{right piece of $\u$-nulcline}
\+{\urgt}{\u_r}{right asymptotic value of $\u$}
\+{\uslow}{\bar{\u}}{slow subsystem approximation of the Tikhonov solution}
\+{\uth}{\u_{\ast}}{thershold piece of $\u$-nulcline}
\+{\u}{E}{excitation variable in Tikhonov system}
\+{\vb}{\v_b}{the value of the slow var at the back}
\+{\vfastb}{\vfast_b}{the value of the slow var in the fast system at the back}
\+{\vfastfb}{\vfast_*}{$\vfastf$ or $\vfastb$ as the case may be}
\+{\vfastf}{\vfast_f}{the value of slow var in the fast system at the front}
\+{\vfast}{Y}{recovery variable in Tikhonov system - fast-time }
\+{\vfb}{\v_*}{$\vf$ or $\vb$ as the case may be}
\+{\vf}{\v_f}{the value of slow var at the front}
\+{\vinterval}{\Intv_{\v}}{interval $(\vmin,\vmax)$}
\+{\vmax}{\v_\mathrm{max}}{max of $\v$ range vrt $\p$ properties in Tikhonov systems}
\+{\vmin}{\v_\mathrm{min}}{min of $\v$ range vrt $\p$ properties in Tikhonov systems}
\+{\vslow}{\bar{\v}}{slow subsystem approximation of the Tikhonov solution}
\+{\vsubinterval}{\Intv'_{\v}}{interval $(\vsubmin,\vsubmax)$}
\+{\vsubmax}{\v'_\mathrm{max}}{min of $\v$ range vrt $\q$ properties in Tikhonov systems}
\+{\vsubmin}{\v'_\mathrm{min}}{min of $\v$ range vrt $\q$ properties in Tikhonov systems}
\+{\v}{y}{recovery variable in Tikhonov system}
\+{\w}{w}{abstract dynamic variable; aux var \eq{change-to-bessel}}
\+{\xa}{\xi}{non-dimensional space}
\+{\xone}{x_1}{the x1 gate}
\+{\x}{x}{space variable}
\+{\ybar}{\Qstat{\y}}{quasistat value of $\y$}
\+{\y}{\mx{y}}{all (all the other) gating variables}
\+{\zadagger}{\kappa}{point at which $\V=\Vdagger$}
\+{\za}{\zeta}{CN nondimensional travel coordinate}
\+{\zfb}{\z_*}{front or back position in the travel coord}
\+{\zt}{\eta}{CN time-like travel coordinate}
\+{\zz}{\Z}{inner z?}
\+{\z}{z}{Travelling wave coordinate}


\begin{document}

\title{Asymptotics of conduction velocity restitution  in models of
electrical excitation in the heart{}}
\author{R.~D.~Simitev$^1$ and V.~N.~Biktashev$^2$}
\maketitle

$^1$ Department of Mathematics,
     University of Glasgow, 
     Glasgow G12 8QW, UK

$^2$ Department of Mathematical Sciences,
     University of Liverpool, 
     Liverpool L69 7ZL, UK 
     
\begin{abstract}
  We extend a non-Tikhonov asymptotic embedding,
  proposed earlier, for calculation of conduction velocity restitution
  curves in ionic models of cardiac excitability. Conduction velocity restitution is
  the simplest nontrivial spatially extended problem in excitable
  media, and in the case of cardiac tissue it is an important tool for
  prediction of cardiac arrhythmias and fibrillation.  An idealized
  conduction velocity restitution curve requires solving a nonlinear
  eigenvalue problem with periodic boundary conditions, which in the
  cardiac case is very stiff and calls for the use of asymptotic
  methods. We compare asymptotics of restitution curves in four
  examples, two generic excitable media models, and two ionic cardiac
  models. The generic models include the classical FitzHugh-Nagumo
  model and its variation by Barkley. They are treated with standard
  singular perturbation techniques. The ionic models include a
  simplified ``caricature'' of the Noble (1962) model and the Beeler
  and Reuter (1977) model, which lead to
    non-Tikhonov problems where known asymptotic results do not
    apply. The Caricature Noble model is 
  considered with particular care to demonstrate the well-posedness of
  the corresponding boundary-value problem. The developed method
  for calculation of conduction velocity restitution
  is then applied to the Beeler-Reuter model. We discuss new
  mathematical features appearing in cardiac ionic models and possible
  applications of the developed method.
\end{abstract}

\textbf{Keywords} 
action potential;
traveling wave.

\setcounter{tocdepth}{1}
\tableofcontents
\pagebreak

\section{Introduction}
\label{intro}

\paragraph{Cardiac excitability models}
Hodgkin and Huxley's model of the electric properties of the
giant squid axon ~\cite{Hodgkin-Huxley-1952} was the first to  
describe in mathematical terms the exclusively biological
phenomenon of excitability. It started a revolution in science
well-worth the Nobel prize it  
was awarded. This achievement has been followed by the development
of a long sequence of mathematical models of heart excitability
starting from Noble's works~\cite{Noble-1960,Noble-1962}. Due to its
importance for biomedical applications, particularly for understanding 
and treatment of cardiac arrhythmias caused by pathologies of
electrical excitation and propagation, the mathematical
modelling direction has been under intensive development during
the last decades, and currently has reached clinical
applications and industrial scale. It lies in the heart of the
ambitious Physiome project~\cite{Physiome} which aims at a
mathematical description of the physiology of whole organisms. Due to
complexity of the models involved they are mainly used in numerical
computations and contribute a substantial load on the UK national
supercomputer facilities~\cite{Plank-etal-2006}.  

\paragraph{Stiffness of cardiac excitability models}
The computational complexity of cardiac models lies not only in the
complexity of the heart as a system, which compared to the brain is
relatively modest, but also in the essential stiffness of
cardiac equations. These equations have to describe very
sharp and fast excitation fronts where some processes happen on the
scale of tens of microseconds and 
micrometers, through to tissue and organ level, on the scale of
seconds and centimeters, thus covering several orders of magnitude.
Thus a challenge for applied mathematics is how to turn this 
stiffness from an adversary into an ally. A standard approach is to
treat small parameters, responsible for such stiffness, using
asymptotic rather than numerical methods. For the Hodgkin-Huxley
model, a simple caricature easily treatable mathematically has been
introduced by FitzHugh~\cite{FitzHugh-1961} and Nagumo
\etal~\cite{Nagumo-etal-1962}, which was based on a modification of
the classical van der Pol system of
equations~\cite{vanderPol-1920}. Asymptotic analysis of FitzHugh-Nagumo
type systems, a nice
summary of which can be found \eg\ in~\cite{Tyson-Keener-1988}, has
achieved remarkable success in describing, in qualitative terms, many
of the phenomena observed in more realistic, experiment based ionic
models. 

\paragraph{The traditional asymptotic approach}
The essence of the approach is separation of the dynamic variables
into ``fast and slow'', similar to the classical Tikhonov-Pontryagin
scheme~\cite{Tikhonov-1952,Pontryagin-1957,Mishchenko-Rozov-1980}, only in
a spatially extended context. A typical solution consists of moving,
fast and steep ``fronts'' and ``backs'' of excitation pulses,
located near codimension-one manifolds, \ie\ points in one spatial
dimension (1D), lines in two spatial dimensions (2D), and surfaces in
three spatial dimensions (3D), which are interspersed by smooth and slow
intervals.  During the fast fronts and backs, the slow variables
remain almost unchanged. During the slow intervals, the fast variables
remain very close to their quasi-stationary values determined by the
current values of the slow variables. The slow pieces are typically of
two kinds: with lower and with higher values of the transmembrane
voltage or a variable that corresponds to it. The lower-voltage,
``diastolic'' pieces are close to or include the ``resting state'',
representing excitable tissue which was not excited for a long time,
and the higher-voltage, ``systolic'' pieces represent the ``action
potential'' phase of the excitation. An extra feature of 2D and 3D is
the possibility of ``wave breaks'', which are of particular relevance
for cardiac arrhythmias. Such wave breaks are moving codimension-two
manifolds, \ie\ points in 2D and lines in 3D, where the fronts and
backs meet. It is essential that mathematically, fronts and backs are
objects of the same nature, differing only in the direction of motion:
at the fronts, the systolic phase advances; at the backs, the systolic
phase recedes, so the wave break is where the interface between the
phases is momentarily stalled. This description allows even some
analytical treatment of the motion of wave breaks in 2D, including
steadily rotating and meandering spiral waves of
excitation~\cite{Hakim-Karma-1999}. Conceivably, this asymptotic
description could be also used numerically within an appropriate
moving interface methodology.

\paragraph{The need for a non-Tikhonov embedding}
However, since models of FitzHugh-Nagumo type have been typically
postulated rather than derived from realistic ionic models of
cardiac excitation, the question about their quantitative
validity was not usually posed. Successful attempts to apply the same
singular perturbation technique as developed for systems of
  FitzHugh-Nagumo type, directly to detailed ionic models, have been
made, \eg~\cite{Keener-1991}, but this did not turn into a mainstream 
practical approach. We believe that the reason is that
systems of FitzHugh-Nagumo type are actually quite different, in
the asymptotic sense, from detailed ionic cardiac models, as they fail
completely to describe, even at a qualitative level, some important
properties of cardiac excitation, such as 
\begin{itemize}
\item slow reporarization,
\item slow subthreshold response,
\item fast accommodation,
\item variable peak voltage and
\item front dissipation, 
\end{itemize}\noindent%
all of which are experimentally well-established
and also successfully reproduced by detailed cardiac ionic models 
\cite{Biktashev-2002,Biktashev-etal-2008}. The slow repolarization
means that, although cardiac excitation pulses do indeed possess steep
fronts, they have no steep backs, at least not steep enough
compared  to the steepness of the fronts, anyway. Hence
interpretation of wave breaks in 2D and 3D as loci where fronts
meet backs is inapplicable to cardiac models for absence of
backs. Since propagation blocks and wave breaks are very
important in most applications of mathematical cardiology, there is no
much hope that the FitzHugh-Nagumo ideology could lead to a
practical numerical tool that could tame the stiffness of the cardiac
equations.   

In a recent series of works~\cite{%
  Biktashev-2002,%
  Biktashev-Suckley-2004,%
  Biktasheva-etal-2006,%
  Biktashev-etal-2008%
}, we have developed an analytical approach to cardiac equations based
on their special structure, different from the FitzHugh-Nagumo
paradigm, and taking into account small parameters actually present
(sometimes hidden) in the equations, rather than trying to force them
into the Procrustean bed of a classical scheme. 
Using the existence of large (or small)
values of 
some variables for model reduction and perturbation analysis is a
basic technique in applied mathematics. One well-known example is
the Quasi(Pseudo)-Steady-State approximation \cite{Heineken-etal-1967,Segel-Slemrod-1989},
which reduces the equations of an enzyme reaction to a singular
perturbation Tikhonov problem. Another prominent example is the wide
application of scale 
separation and model reduction techniques to problems of chemical
kinetics and reactive flows. Since in such problems the number of
reacting species is huge, this is typically done by computational
algorithms such as Computational Singular Perturbations (CSP),
Intrinsic Low-Dimensional Manifolds (ILDM), the Grad Moment method and
others \cite{Gorban-Karlin-2005,Kaper-2002}. It has been shown that these computational
techniques generate the asymptotic expansion of a slow invariant
manifold of a Tikhonov problem \cite{Kaper-2002,Kaper-2004}.

However, the application of asymptotic embedding techniques is not
restricted to Tikhonov problems, nor must it, \apriori, lead to
such. Indeed, our recent works  
\cite{%
  Biktashev-2002,%
  Biktashev-Suckley-2004,%
  Biktasheva-etal-2006,%
  Biktashev-etal-2008%
}
have clearly demonstrated that, to achieve a
physiologically correct asymptotics in realistic models of cardiac
excitation, a parameter embedding is needed which involves a large factor in front
of individual terms, but not the whole, of the right-hand side of some
equations (\eg\ the $\INa$ term in the transmembrane voltage evolution
equation), non-analytical, perhaps even discontinuous, asymptotic
limit of some right-hand sides (\eg\ the $\INa$ gating variables),
even though the original system is analytical, non-isolated equilibria
in the fast subsystem and dynamic variables which change their
character from fast to slow within one solution (\eg\ the
transmembrane voltage).

In particular, we have demonstrated that separate
consideration of the fast subsystem describing excitation front
produces a simple useful criterion of dynamic propagation block in a
modern cardiac model~\cite{Simitev-Biktashev-2006}, and the fast and
slow subsystems can be successfully matched to describe the action
potential as a singular limit in a single-cell (0D) variant of a
simple cardiac model~\cite{Biktashev-etal-2008}. The next step is to
combine the fast and slow description in a spatially extended
context. 

\paragraph{CV restitution curves: a spatiotemporal problem
 involving   fast and slow scales}
The aim of the current work is to make this next step. For this
purpose, we have chosen the simplest nontrivial spatially extended
problem that depends both on the fast and the slow processes: the
conduction velocity restitution curve.  This choice is also motivated
by the practical importance of restitution curves, which are of two
kinds. The action potential duration (APD) restitution curve is the
dependence of the APD on the duration of the preceding diastolic
interval (DI).  Nolasco and Dahlen~\cite{Nolasco-Dahlen-1968} noted
that in a single-cell setting and with a fixed period of
excitation, a slope of the APD(DI) curve greater than one
indicates instability of the even APD sequence. For
this reason the restitution curves are considered an important
instrument in understanding instabilities of excitation waves leading
to onset of cardiac arrhythmias. Later studies have demonstrated that
in a spatially extended context, another important tool is the
conduction velocity (CV) restitution curve, which describes the
dependence of CV on the preceding diastolic interval. The
CV(DI) dependence together with the APD(DI) dependence and the fact
that the overall period known in electrophysiology as Basic Cycle
Length is given by BCL=APD+DI, makes it possible to define the
CV(BCL) dependence, \ie\ relationship between the period of excitation
waves and their propagation speed, which is also known as the
dispersion relation in general wave theory. The CV(DI) curve
depends on the definition of the boundary between action potential phase and the
diastolic phase, which for cardiac excitation pulses is arbitrary for
lack of sharp backs. The CV(BCL) dependence is, on the contrary,
free from such arbitrariness and is well defined mathematically. So in
our study, we shall use this dependence as the restitution curve.

\paragraph{Types of restitution curves}
In view of the clinical importance of fibrillation, numerous
experimental, \eg~\cite{%
  Garfinkel-etal-2000,%
  Watanabe-etal-1995,%
  Boyett-Jewell-1978,%
  Nolasco-Dahlen-1968,%
  Chialvo-etal-1990%
} and numerical \eg~\cite{%
  Courtemanche-1996,%
  Karma-1994,%
  Courtemanche-Glass-Keener-1993,%
  Ito-Glass-1992,%
  Karma-etal-1994,%
  Watanabe-etal-2001%
} studies 
are concerned with tests of this hypothesis, and with
  measurements and computation of restitution curves
in various types of cardiac cells.  Measurement and computation of
restitution curves are not straightforward.  A number of different
experimental/numerical protocols are in use (see \eg~\cite{Schaeffer-etal-2007} and
references therein), which produce different curves and it is not
always
clear which is the most relevant one in a particular case.
For instance, in the so called
``dynamic'' protocol the tissue is paced at a given basic cycle length
until a periodic regime is established, and the APD, DI and CV of the
established pulses are recorded. Then the process is repeated with
other cycle lengths.  Another protocol is the ``S1-S2'' restitution
protocol, in which the tissue is paced at a fixed cycle length S1
until a periodic regime is reached, and is then perturbed by an
out-of-sequence stimulus (S2) and the response is recorded. The
preparation is then paced at a the same S1 until steady-state has been
reached again, and is then perturbed by a different S2. The curve so
measured depends on the choice of the S1 cycle length, and therefore
it is not even unique. Although used in electrophysiological practice,
these protocols have a number of drawbacks: they contain some
arbitrariness and thus lead to results which are not unique, they are
prone to systematic errors since it not easy to distinguish the
ultimate periodic regime from transient, they are time consuming and,
in the case of numerical simulations, computationally expensive since
a repeated solution of large systems of stiff nonlinear partial
differential equations is required. 

\paragraph{The aim of this study}
In the present study we have chosen to use an idealized
definition of the ``dynamic'' restitution protocol, \ie\ we consider
strictly periodic wave solutions, and study the dependence between the
BCL and CV of such solutions. This dependence is well defined
mathematically via 
solvability of the corresponding boundary-value problem with periodic
boundary conditions. This idea is not new but so far it has had only
limited application for the following two reasons. First, the
resulting boundary value problem is typically very stiff, with very
steep upstroke but slow prolonged plateau and recovery stages of a
typical cardiac action potential, and so its direct solution requires
considerable effort.  Secondly, the Tikhonov
asymptotic embeddings which are typically used to alleviate such
scale disparities fail to produce results which are even qualitatively
correct, as noted above. 

We wish to emphasize that the periodic boundary value problem approach
we advocate here is applicable both to cardiac models with
Tikhonov as well as with non-Tikhonov asymptotic
structure and in this work we illustrate both of these cases. However,
in the absence of a rigorous theory of the
non-Tikhonov case, we  make a special effort 
to investigate whether the resulting asymptotic boundary value
problem is well-posed. This is not obvious \apriori.

\paragraph{Structure of the paper}
In section \ref{BVP} we
formulate the periodic boundary-value problem which gives a general
method of computing CV restitution curves regardless of the asymptotic
structure of the particular cardiac model.  In
sections \ref{Tikhonov}, \ref{Barkley} and
\ref{secFHN} we apply the method to well-known models with
Tikhonov asymptotic structure in order to provide
simple illustrations. 
Section \ref{asymptotic} is central to the article. Here
we use a suitably reformulated version of the Noble model of cardiac
Purkinje fibers~\cite{Noble-1962} to illustrate the
non-Tikhonov asymptotic reduction in detail and to
investigate whether the resulting asymptotic boundary value problem is, indeed,
well-posed. In section \ref{applicationBR}, we calculate the full and
the asymptotic CV restitution curves of the Beeler-Reuter ventricular
model~\cite{Beeler-Reuter-1977} and demonstrate a good quantitative agreement.
Section \ref{discussion} provides concluding remarks and suggests
possible extensions of the work.

\section{Restitution curves: the boundary-value problem formulation}
\label{BVP}

A typical voltage-gated model of cardiac excitation and propagation
in a one-dimensional, homogeneous and isotropic medium has the form
of a reaction--diffusion system,
\begin{subequations}
\eqlabel{general}
\begin{gather}
\df{\E}{\t} = \sum_{\l} \Il(\E, \y) + \D\,\ddf{\E}{\x}, \\
\df{\y}{\t}= \fy(\E,\y),
\end{gather}
\end{subequations}
where $\x$ is the spatial coordinate, $\t$ is the time, 
$\E$ is transmembrane
voltage of the cardiocytes, the functions $\Il(\cdot)$ represent
individual transmembrane ionic currents, each conducted by a specific
type of transmembrane channel, the vector $\y$ includes a number of
``gating'' variables controlling the permittivity of the ionic channels
and the intra- and extracellular concentrations of ions involved, and 
$\D>0$ is a ``voltage diffusion constant'', depending on the electric
capacitance of cardiocytes and Ohmic contacts between them.
Note that $\D$ can be made equal to any positive value by rescaling
the spatial variable $\x$; we shall choose this scaling so that
$\D=1$, or related to the small parameter when considering
asymptotics. 
  This means that the dimensionality of $\x$ is that of
  $\t^{1/2}$. Correspondingly,   to compare our subsequent
  results with experimental data,
  lengths and speeds should be scaled up by the factor of $\D^{1/2}$, where
  $\D$ is the value of the voltage diffusion coefficient, dependinging
  on the properties of the given tissue and the direction of wave
  propagation. 

The number of gating variables, concentrations and the form of the
functions $\Il(\cdot)$ and $\fy(\cdot)$ are fitted to reproduce the very latest
experimental observations. As experimental methods improve, the models
evolve to be ever more complicated but the general form of the
reaction--diffusion system \eq{general} has hardly changed since 1962
when the first cardiac model was published by Noble~\cite{Noble-1962}.
A relatively recent but by no means ultimate list of cardiac models
can be found in the review~\cite{Clayton-2001} and confirms this
assertion.

CV restitution curves are typically computed by direct numerical
simulation of the partial differential equations \eq{general}
following a particular protocol. 
As argued above this is computationally expensive and
time consuming, and prone to systematic errors.
A more sound mathematical approach is to look for
solutions in the form of waves travelling with a constant velocity $\c>0$
and a fixed shape. This is guaranteed by the travelling wave ansatz $\F(\z)
= \F(\x-\c\t)$ for the dynamical variables $\F = \E, \y$, where
$\z=\x-\c\t$. Equations \eq{general} are then reduced to a system of
autonomous ordinary differential equations and the CV restitution
curve can be found from the periodic boundary value problem 
\begin{subequations}
\eqlabel{full}
\begin{gather}
  \DDf{\E}{\z} + \c\Df{\E}{\z} +\sum_{\l} \Il(\E, \y)=0, \\
  \c\Df{\y}{\z} + \fy(\E,\y) = 0, \\
  \E(0) =\E(\c\,\BCL), \quad 
  \left.\Df{\E}{\z}\right|_{\z=0} =
  \left.\Df{\E}{\z}\right|_{\z=\c\,\BCL}, \nonumber \\
  \y(0)=\y(\c\,\BCL), \quad
  \E(0)=\Ev, 
\end{gather}
\end{subequations}
where $\BCL$ is the temporal period of the waves. The last boundary condition
$\E(0)=\Ev$ is related to the translational invariance of the
problem. This condition allows the selection of a single solution out of
a one-parametric family of solutions differing from each other
only in their position along the $\z$ axis; thus the exact choice of
$\Ev$ is not essential,
as long as it is selected  within the range of values of $\E(\z)$.
Problem \eq{full} is of order $(\dim(\y)+2)$, where
$\dim(\y)$ is the dimension of the vector $\y$, and its general
solution includes $(\dim(\y)+2)$
arbitrary constants. In addition, the
problem involves two unknown parameters, $\c$ and $\BCL$.  On the
other hand, it has $(\dim(\y)+2)$ periodic boundary conditions plus
the last ``phase'' or ``pinning'' condition required to eliminate the translational
invariance of the system. Thus,  we have $(\dim(\y)+4)$ parameters and
$(\dim(\y)+3)$ constraints on them,  so the solution of the problem
should yield, in principle, a one-parameter family of solutions.
A projection of this family onto the $(\BCL,\c)$ plane is the sought
after ``ideal'' dynamic CV restitution curve describing the
dependence of the wave speed on the wave period.

\section{Outline of the singular perturbation theory of Tikhonov excitable
  systems}
\label{Tikhonov}

The method outlined in section \ref{BVP} is applicable to
any cardiac model but due to the inherent stiffness of cardiac
equations solution it is difficult for a numerical study 
if asymptotics are not exploited. Our first illustrations of the
method will involve the Barkley model \cite{Barkley-1991} 
and the FitzHugh-Nagumo system \cite{FitzHugh-1961,Nagumo-etal-1962}. 
We take advantage of the fact that these models have  well-known
asymptotic structures of a Tikhonov type which has been studied in a
number of works, \eg~%
\cite{%
  Keener-1980,%
  Tyson-Keener-1988,%
  Dockery-Keener-1989,%
  Murray-1991,%
  Keener-Sneyd-1991,%
  Meron-1992%
}
and thus they provide simple illustrations and set the context for our main
results. 

\subsection{Asymptotic reduction of the CV restitution boundary value
  problem for Tikhonov systems}

A typical model with a Tikhonov asymptotic structure has the form
\begin{subequations}                                        
\eqlabel{fhntype}
\begin{gather}
\df{\u}{\t} = \frac{1}{\eps}\p(\u,\v) + \eps \ddf{\u}{\x},      
       \eqlabel{fhntype1}\\
\df{\v}{\t} = \q(\u,\v),  
        \eqlabel{fhntype2}
\end{gather}
\end{subequations}
where $\u$ is interpreted as the voltage while $\v$ is taken to
represent all other variables of an ionic cardiac model.  We
assume, for simplicity, that the dynamical variables $\u$ and $\v$ are
scalar fields, which is true for the Barkley and the FitzHugh-Nagumo
models.  The small parameter $\eps\ll1$ specifies the asymptotic
structure of the system explicitly by indicating the relative
magnitude of the various terms in the model.
Here and in subsequent asymptotic formulations, the spatial scaling
is chosen so that the diffusion coefficient is equal to $\eps$; the
convenience of this choice will be evident shortly.
In order for equations
\eq{fhntype} to have excitable or oscillatory dynamics, certain
properties of the functions $\p(\cdot)$ and $\q(\cdot)$ need to be
assumed. We shall assume that in a certain interval of $\v$ values,
$\v\in\vinterval=(\vmin,\vmax)$, the following is true. 
\begin{assm}
\item\label{A-three-roots} Equation $\p(\u,\v)=0$, 
understood as an equation for $\u$ at a fixed $\v$, has three roots, namely
  $\umin(\v)<\uth(\v)<\uplu(\v)$. 
This equation defines the reduced slow manifold of
  the system \eq{fhntype} in the sense of the geometric singular
  perturbation theory \cite{JonesCKRT,Kaper1999}.
\item\label{A-stability} The roots have alternating stability in linear approximation,
  that is,
  $\@_\u\p\left(\umin(\v),\v\right)<0$, 
  $\@_\u\p\left(\uth(\v),\v\right)>0$ and
  $\@_\u\p\left(\uplu(\v),\v\right)<0$, 
where $\@_\u$ denotes a partial derivative.
\end{assm}
  Further, we assume that, in a possibly smaller interval
  $\v\in\vsubinterval=(\vsubmin,\vsubmax)\subseteq\vinterval$, the following is true.
\begin{assm}
\item\label{A-direction} The slow dynamics for $\v$ is growth
  for the lower root $\u=\umin(\v)$ and decrease for the upper root $\u=\uplu(\v)$, that is
  $\q(\umin(\v),\v)>0$ and $\q(\uplu(\v),\v)<0$.
\item\label{A-closed} The periodic wave solutions of interest only
  involve the interval $\v\in\vsubinterval$.
\end{assm}
These assumptions are true for the Barkley and the
FitzHugh-Nagumo models, and are illustrated in \figs{bkl}(a) and
\figref{fhn}(a) below. The last assumption~\ref{A-closed}, unlike the
first three, is difficult to formulate in \apriori\ terms, and we shall
discuss its implications as we obtain the relevant results below.

Due to assumption~\ref{A-three-roots}, the sets
$\left(\umin(\v),\v)\right)$ and $\left(\uplu(\v),\v)\right)$ are
disjoint in the $(\u,\v)$-plane. These two sets are known as the
\emph{diastolic} and the \emph{systolic} branches of the reduced slow
manifold, respectively.

To formulate the CV restitution boundary value problem
\eq{full} for equations \eq{fhntype}, we look for solutions in the
form of waves travelling with a constant velocity $\c$ and a fixed
shape \ie\ we assume the travelling wave ansatz  %
  $\z=\x-\c\t$, which gives the asymptotic boundary-value problem
  \begin{subequations}
    \begin{gather}
      \eps^2\DDf{\u}{\z} + \eps\c\Df{\u}{\z} + \p(\u,\v) = 0,           \eqlabel{fhntype-rc1}\\
      \c\Df{\v}{\z} + \q(\u,\v) = 0,                                    \eqlabel{fhntype-rc2} \\
      \u(0) =\u(\BCL)=\Ev, \quad
      \left.\Df{\u}{\z}\right|_{\z=0} =\left.\Df{\u}{\z}\right|_{\z=\BCL}, \quad
      \v(0)=\v(\BCL) .                                                   \eqlabel{fhntype-rc3}
    \end{gather}
    \eqlabel{fhntype-rc}
\end{subequations}
We first formulate the slow and fast subsystems corresponding to this
problem, and after that we will discuss matching and boundary conditions.

The slow-time subsystem is obtained immediately from equations
\eq{fhntype-rc} in the limit $\eps\to 0$, and has the form
\begin{subequations}
  \eqlabel{slow.fhntype}
\begin{gather}
\p(\uslow,\vslow) =0,
\eqlabel{slow.fhntype1}\\
\c\Df{\vslow}{\z} + \q(\uslow,\vslow)=0 ,
\eqlabel{slow.fhntype2}
\end{gather}
\end{subequations}
so a unique solution is obtained by imposing a
a single boundary condition, \eg
\begin{gather}
\eqlabel{slow.bc}
\vslow(0)=\vfb,
\end{gather}
where $\vfb$ is a constant. 
Here we use $\uslow(\z)=\lim_{\eps\to0}\u(\z)$ and
$\vslow(\z)=\lim_{\eps\to0}\v(\z)$ to denote the slow-subsystem
solution approximation and distinguish it from the exact solution.

The fast-time subsystem is obtained from equations \eq{fhntype-rc} by first
rescaling the traveling wave coordinate, %
$\zz=(\z-\zfb)/\eps$, where $\zfb=\const$ is the position of the jump
(front or back) in the slow wave coordinate,
and then
taking the limit $\eps\to0$, which gives the equations
\begin{subequations}                       
  \eqlabel{fhntype.fast}
\begin{gather}
\DDf{\ufast}{\zz} + \c\Df{\ufast}{\zz}+ \p(\ufast,\vfast)=0,     \eqlabel{fhntype.fast1}\\
\Df{\vfast}{\zz} = 0 ,                               \eqlabel{fhntype.fast2}
\end{gather}
\end{subequations}
the boundary conditions for which can be taken in the form
\begin{gather}
  \eqlabel{fhntype.BC}
  \ufast(-\infty)=\ulft, \quad
  \ufast(+\infty)=\urgt, \quad 
  \ufast(0)=\ufastv, \\ 
  \left.\Df{\ufast}{\zz}\right|_{\zz\to+\infty} = 0, \quad
  \vfast(-\infty)=\vfastfb. \nonumber
\end{gather}

Above we have introduced $\Efast(\zz)=\lim_{\eps\to0}\E(\zfb+\eps\Z)$
and $\vfast(\zz)=\lim_{\eps\to0}\v(\zfb+\eps\Z)$ for the
fast-subsystem approximation to explicitly distinguish it from the
solution in the original slow coordinate.  The arbitrary constant
$\ufastv$ is assumed in the range of $\ufast$, and is used to define
the position $\zfb$ of the jump in terms of the slow wave coordinate
$\z$, so that $\E(\zfb)=\ufastv$.  Note that equations
\eq{fhntype.fast} are obtained in that form without the need of
  $\eps$-dependent scaling of the speed $\c$ only if the spatial
scaling depending on $\eps$ is chosen as in equation \eq{fhntype},
which is the reason for that choice.

\subsection{Solution of the fast subsystem}

Due to equation \eq{fhntype.fast2}, $\vfast$ is a first integral, and
then equation~\eq{fhntype.fast1} together with the boundary conditions
\eq{fhntype.BC} present an eigenvalue problem for the profile
$\ufast(\Z)$ and velocity $\c$ of a trigger wave, depending on
$\vfast=\vfastfb$ as a parameter. It also depends, of course, on
the values of the voltage to the left and to the right of the
front, 
$\ulft$ and $\urgt$, which should be the two stable roots of
$\p(\cdot,\vfastfb)$, \ie\
$\{\ulft,\urgt\}=\{\umin(\vfastfb),\uplu(\vfastfb)\}$. 
Under the assumptions made about function $\p(\cdot,\vfastfb)$ and
with an appropriate choice of the pinning value $\ufastv$, the 
fast-time boundary-value problem for the trigger wave has a unique
solution, which is guaranteed by a result due to Aronson and
Weinberger~\cite[Theorem 4.1]{Aronson-Weinberger-1978}.
We denote this unique solution by
\begin{gather}
  \ufast(\Z) = \ufun(\Z;\vfastfb,\ulft,\urgt);
                                                \eqlabel{fhntype-profile-front}
\end{gather}
and the corresponding propagation speed by
\begin{equation}
  \c = \cfun(\vfastfb;\ulft,\urgt). 
                                                \eqlabel{fhntype-speed-front}
\end{equation}
The boundary value problem \eq{fhntype.fast2}, \eq{fhntype.BC} is
invariant with respect to simultaneous transformation $\Z\to-\Z$,
$\c\to-\c$, $\ulft\leftrightarrow\urgt$. Hence, it follows that
\begin{equation} 
  \ufun(\Z;\vfastfb,\urgt,\ulft) = \ufun(-\Z;\vfastfb,\ulft,\urgt) 
                                                \eqlabel{fhntype-profile-front-back}
\end{equation} 
and
\begin{equation} 
  \cfun(\vfastfb;\urgt,\ulft) = -\cfun(\vfastfb;\ulft,\urgt).
                                                \eqlabel{fhntype-speed-front-back}
\end{equation} 
Note that these formal solutions can be with positive as well as
negative $\c$; we are, however, only interested in the waves
propagating rightwards, $\c>0$. 
Now we can discuss fronts and backs as two different
types of trigger waves. 
\begin{itemize}
\item Suppose that for some $\vfastf$ we have
  $\cfun(\vfastf;\uplu(\vfastf),\umin(\vfastf))>0$.  This means that we
  have a forward propagating trigger wave that switches the system from
  the lower quasi-equilibrium $\urgt=\umin(\vfastf)$ to the upper
  quasi-equilibrium $\ulft=\uplu(\vfastf)$. We will call this type
    of fast solution a front.
\item Now suppose that for some $\vfastb$ we have
  $\cfun(\vfastb;\uplu(\vfastb),\umin(\vfastb))<0$.  This means that an
  up-jump trigger wave does not propagate forwards but retracts
  backwards, and is not suitable for us as we are interested in forward
  propagating waves, $\c>0$. However, due to \eq{fhntype-speed-front},
  we know that we then have
  $\cfun(\vfastb;\umin(\vfastb),\uplu(\vfastb))>0$, that is there is a
  forward propagating down-jump trigger wave switching from the upper
  quasi-equilibrium $\urgt=\uplu(\vfastb)$ to the lower
  quasi-equilibrium $\ulft=\umin(\vfastb)$. We will call this type
  of fast solution a back.
\end{itemize}

\subsection{Solution of the slow subsystem}

Equation \eq{slow.fhntype1} implies that $\uslow=\uplumin(\vslow)$ or
$\uslow=\uth(\vslow)$. By assumption~\ref{A-stability}, the latter
solution is the unstable branch of the reduced slow manifold
$\p(\uslow,\vslow)=0$ while $\uplumin(\vslow)$ are the stable ones.
Hence ignoring the possibility of ``canard'' solutions that involve
the unstable branch, we must solve
\begin{gather}
\eqlabel{slowequ}
\c \Df{\vslow}{\z} +  \q\left(\uplumin(\vslow),\v\right) = 0,
\end{gather}
which is separable and can be easily integrated, giving the (spatial) length of
the piece of a solution say between $\vslow(\z_1)=\v_1$ and
$\vslow(z_2)=\v_2$ as
\begin{gather}
  \z_2-\z_1 = \c \int\limits_{\v_2}^{\v_1} \frac{\d\v}{\q(\uplumin(\v))} ,
                                                            \eqlabel{fhntype-piecelength}
\end{gather}
where the plus subscript refers to the systolic (action potential) branch
and the minus subscript refers to the diastolic (diastolic interval)
branch. Naturally, $\z_2>\z_1$ requires that $\v_1-\v_2$ and
$\q(\uplumin(\v)$ have the same sign. 

\subsection{Matching}

A period of a steadily propagating periodic pulse train, to
which the asymptotics described above are applicable, must include at
least one fast front, one fast back, one systolic interval and one
diastolic interval. The restitution curve sought for can be obtained
from conditions of matching of these four pieces.

The asymptotic matching of the fast and slow pieces in the leading
order in $\eps$ is rather straightforward. Let us consider a fast jump
solution $(\ufast(\Z),\vfast(\Z))$,
existence of which is guaranteed by
the Aronson-Weinberger theorem,
\cite[Theorem 4.1]{Aronson-Weinberger-1978},
located at $\z=\zfb$ so that $\Z=(\z-\zfb)/\eps$,
and compare it with the slow solutions $(\u(\z),\v(\z))$ ahead and
behind it. By van Dyke's matching rule, we have 
  \begin{gather}
    \lim\limits_{\Z\to-\infty}\ufast(\Z) \equiv \ulft 
    = 
    \lim\limits_{\z\to\zfb-0}\uslow(\z), \nonumber \\
    \lim\limits_{\Z\to+\infty}\ufast(\Z) \equiv \urgt
    = 
    \lim\limits_{\z\to\zfb+0}\uslow(\z), \nonumber \\
    \lim\limits_{\z\to\zfb-0}\vslow(\z) = \vfastfb = \lim\limits_{\z\to\zfb+0}\vslow(\z), 
  \end{gather}
that is, $\vslow$ is continuous  across $\z=\zfb$ and the jump of
$\uslow$ at $\zfb$ is related to $\vslow(\zfb)$ and $\c$ via the speed
equation \eq{fhntype-speed-front}.

Note that above we distinguished between $\v$, $\vfast$ and $\vslow$
only in order to demonstrate explicitly the decoupling of the slow-
and the fast-time problems. However, they all coincide in the
leading order in $\eps$, below we will, for simplicity of notation,
use $\v$ to represent any of them, as this distinction does not run
any deeper. This applies, in particular, to $\vfb=\vfastfb$,
$\vf=\vfastf$ and $\vb=\vfastb$. For the same simplicity, henceforth
we write $\u$ instead of $\uslow$ as they coincide in the same limit
almost everywhere. We keep distinguishing $\ufast$, though, as it
describes gradual change of the voltage where $\uslow$ has a jump.

Let us take the conduction velocity $\c$ as the parameter, \ie\
construct the periodic pulse propagating with a given speed $\c>0$,
and then calculate its temporal period $\BCL$. 
Further to assumptions~\ref{A-three-roots}--\ref{A-closed}, we make
the additional assumption.
\begin{assm}
\item\label{A-monot} Function $\cfun(\v;\uplu(\v),\umin(\v))$ is a
  monotonically decreasing function of $\v$.
\end{assm}
\noindent
This is easily verified for the two examples that follow; note that
a monotonically increasing function can be deal with in just the same
way. Then equation
\begin{gather}
  \cfunf(\vf)\equiv \cfun\left( \vf;\uplu(\vf),\umin(\vf) \right)=\c
                                        \eqlabel{fhntype.frontspeed}
\end{gather}
may have at most one solution for first integral parameter $\vf$
of the front, and equation
\begin{gather}
  \cfunb(\vf)\equiv\cfun\left( \vb;\umin(\vb),\uplu(\vb) \right)=\c
                                        \eqlabel{fhntype.backspeed}
\end{gather}
may have at most one solution for the first integral parameter
$\vb$ of the back, and we always have
\begin{gather}
  \vb>\vf,                      \eqlabel{allsmaller}
\end{gather}
as positive values of a monotonically decreasing function are achieved at
smaller values of the arguments than negative values.

Now we can formalize assumption~\ref{A-closed} in the following way.
\begin{enumerate}
  \item[A4.]\label{A-closed1} There is a nonempty interval of $\c$ values,
    $\cinterval=(\cmin,\cmax)$, which is the interval of
    interest, such that
    $\cfunb^{-1}(\cinterval)\subset\vsubinterval$ and
    $\cfunf^{-1}(\cinterval)\subset\vsubinterval$. 
\end{enumerate}
\noindent
  So, under the assumptions \ref{A-three-roots}--\ref{A-monot}, for every
  $\c\in\cinterval$ there are exactly two types of fast jump solutions, 
  \\ $\bullet$ %
  a front, at $\v=\vf$, with a pre-front voltage $\ualpf$ and post-front voltage
  $\uomgf$,  
    \begin{gather}
      \ufastf(\Z) = \ufun(\Z;\vf,\uomgf,\ualpf), \nonumber\\
      \ualpf=\umin(\vf), \quad
      \uomgf=\uplu(\vf), \quad
      \vf=\cfunf^{-1}(\c) ,
    \end{gather}
  \\ $\bullet$ %
  and a back,  at $\v=\vb$, with a pre-back voltage $\ualpb$ and post-back voltage
  $\uomgb$,  
    \begin{gather}
      \ufastb(\Z) = \ufun(\Z;\vb,\uomgb,\ualpb), \nonumber\\
      \ualpb=\uplu(\vb), \quad
      \uomgb=\umin(\vb), \quad
      \vb=\cfunf^{-1}(\c). 
    \end{gather}

Let us now consider the slow pieces. Since there is only a unique choice
of the $\v$-values they can have at their ends, namely $\vb$ and
$\vf$ as shown above, there are only two possibilities: a slow
piece that has $\vb$ on the left (back) end and $\vf$ on the
right (head) end, and \viceversa.
For a piece with $\vb$ on the left and $\vf$ on on the right,
due to inequality~\eq{allsmaller} and assumption~\ref{A-direction},
equation~\eq{fhntype-piecelength} gives positive length of the piece
only if it is a systolic piece. The temporal duration of such piece
is the APD, and equals 
\begin{gather}
  \APD=\int\limits_{\vf}^{\vb} \frac{\d\v}{-\q(\uplu(\v))} .
                                        \eqlabel{fhntype-APD}
\end{gather}
Similarly, for a piece with $\vf$ on the left and $\vb$ on on the right,
due to inequality~\eq{allsmaller} and assumption~\ref{A-direction},
equation~\eq{fhntype-piecelength} gives positive length of the piece
only if it is a diastolic piece. The temporal duration of such piece
is the DI, and equals 
\begin{gather}
  \DI=\int\limits_{\vf}^{\vb}\frac{\d\v}{\q(\umin(\v))} .
                                        \eqlabel{fhntype-DI}
\end{gather}
Hence, we have demonstrated that in the assumptions
made~\ref{A-three-roots}--\ref{A-monot}, for every $\c\in\cinterval$
there is exactly one, up to translations along the $\z$ axis, solution
of each of the following four kinds: a front, a systolic slow piece, a
back and a diastolic slow piece, and they can be matched only in a
unique order. Hence for every $\c$ we have a periodic solution, each
period of which consists of exactly one piece of each kind.

To summarize, for every $\c\in\cinterval$, in the leading order in
$\eps$, the temporal period of the solution is
\begin{gather}
  \BCL=\APD+\DI=\int\limits_{\vf}^{\vb}\left(
    \frac{1}{\q(\umin(\v))} - \frac{1}{\q(\uplu(\v))}
  \right)\,\d\v,
                                                  \eqlabel{fhntype-period}
\end{gather}
where $\vf=\cfunf^{-1}(\c)$ and $\vb=\cfunb^{-1}(\c)$ are the unique
solutions of equations \eq{fhntype.frontspeed} and
\eq{fhntype.backspeed} respectively. 

\subsection{Special case: cubic fast dynamics}

Problem \eq{fhntype.fast1}, \eq{fhntype.BC} has an explicit solution
in two popular special cases, for a
cubic~\cite{Zeldovich-FrankKamenetsky-1938} and for a piece-wise
linear~\cite{McKean-1970} dependencies $\p(\cdot;\v)$.  The two simple
examples that follow fall in the case of cubic nonlinearity,
\begin{gather}
  \p(\u,\v) = 
  - \A \, \left(\u-\umin\right) \, 
    \left(\u-\uth\right) \, \left(\u-\uplu\right), \nonumber\\ 
  \u{\pm,*}=\u{\pm,*}(\v), \quad \A=\A(\v)>0,       \eqlabel{cubic}
\end{gather}
which evidently satisfies assumption~\ref{A-stability} as long as
assumption~\ref{A-three-roots} holds. Assumption~\ref{A-monot} imposes
obvious constraints on the functions $\uplumin(\v)$ defining this
nonlinearity. 

In this case it is convenient to choose
$\ufastv=(\umin+\uplu)/2=(\ulft+\urgt)/2$, and then the solution to
problem \eq{fhntype.fast}, \eq{fhntype.BC} is
\begin{gather}
  \ufast=\ufastv + \left(\urgt-\ulft\right) \, \tanh\left(
      (\A/8)^{1/2} (\urgt-\ulft) \zz
  \right) / 2,                                                
\nonumber \\
  \c=\cfun(\vf;\ulft,\urgt)= (2\A)^{1/2} \left(\ufastv - \uth\right) ,
                                                  \eqlabel{zfkfront}
\end{gather}
and we remind that $\{\ulft,\urgt\}=\{\umin(\vf),\uplu(\vf)\}$. 

\section{Asymptotic restitution curves in the Barkley model}
\label{Barkley}

\dblfigure{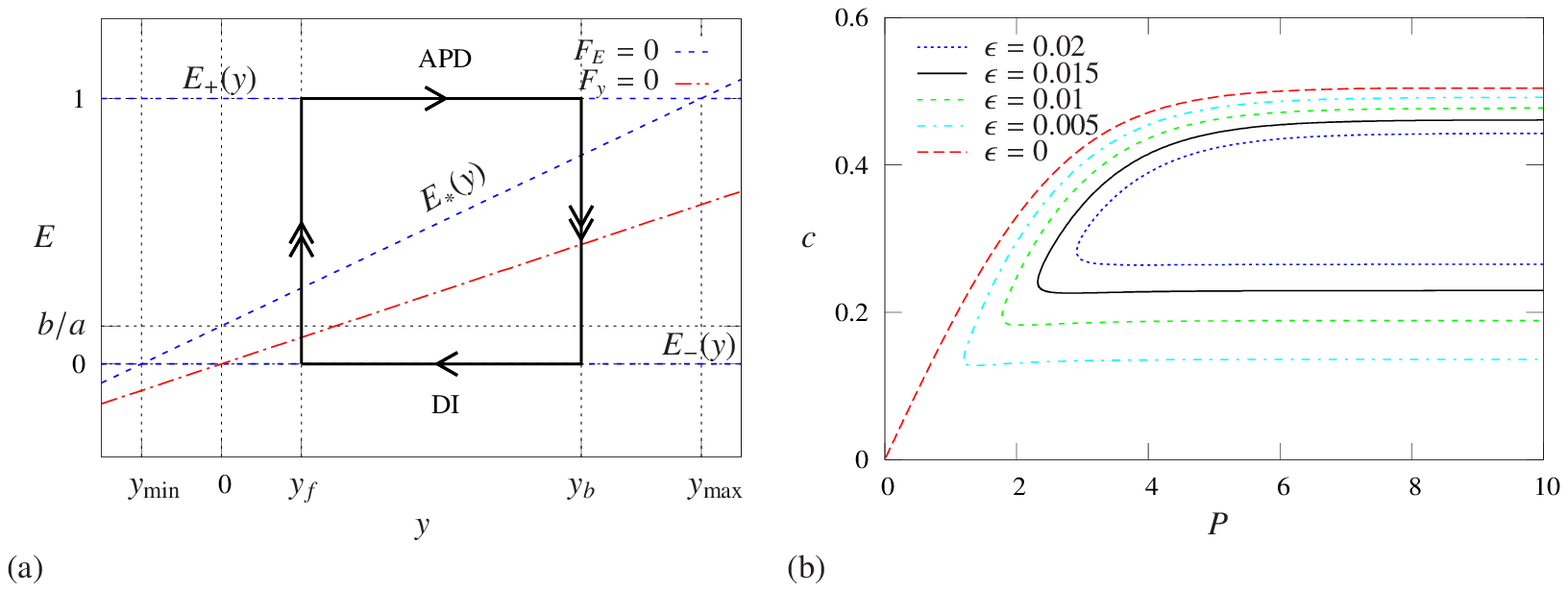}{ 
  (color online)
  CV asymptotics in the Barkley model. Parameters: $\a=0.7$, $\b=0.1$. 
  (a) Schematic of the asymptotic pulse solution in the $(\v,\u)$ plane. 
  Note that in this and the next figures, the direction of the axes
  ($\u$ vertical, $\v$ horizontal) is different from the traditional
  ($\v$ vertical, $\u$ horizontal), which is to comply with subsequent
  ionic gate figures where the transmembrane voltage (corresponding to
  $\u$ here) is on the vertical axis.
  (b) CV restitution curves for various values of $\eps$. The curve $\eps=0$
  is the asymptotic given by \eq{barkley-rc}. 
}{bkl}

\subsection{The model}

The functions $\p(\cdot,\cdot)$ and $\q(\cdot,\cdot)$ of the Barkley
model \cite{Barkley-1991} are given by
\begin{subequations}
                                                 \eqlabel{barkley}
\begin{gather}
\p(\u,\v) = \u\,(1-\u)\left(\u-\frac{\v+\b}{\a}\right),
                                                 \eqlabel{barkley1} \\
\q(\u,\v)=\u - \v,
                                                 \eqlabel{barkley2}
\end{gather}
\end{subequations}
where $\a$ and $\b$ are parameters, satisfying
$\b>0$, $2\b<\a<1+\b$.
The equation of the reduced slow manifold
$\p(\u,\v)=0$ is trivial to resolve and yields the branches
\begin{gather}
\umin=0, \quad \uth=(\v+\b)/\a, \quad \uplu=1.
                                                 \eqlabel{barkley.pieces}
\end{gather}
With this choice of the branches, assumptions~\ref{A-three-roots}
and~\ref{A-stability} are 
satisfied for $\v\in\vinterval=(-\b,\a-\b)$. 
However, assumption~\ref{A-direction}, specifically the condition 
$\q(\uplu(\v),\v)<0$, narrows this down to $\v\in\vsubinterval=(0,\a-\b)$.
The phase portrait and the N-shaped form of the reduced slow manifold of
the Barkley model is illustrated in \fig{bkl}(a), with an example
of a trajectory corresponding to a traveling wave train.

\subsection{The fast subsystem}

Substituting \eq{barkley.pieces}
into equations \eq{zfkfront}, \eq{fhntype.frontspeed} and \eq{fhntype.backspeed}
gives the front velocity 
\begin{gather}
\c=\cfunf(\vf) = \frac{1}{\sqrt{2}}
  \left(1-2\frac{\vf+\b}{\a}\right)>0, \quad
  \vf\in(0,\a/2-\b),                          \eqlabel{barkley.speed.front}
\end{gather}
and the back velocity
\begin{gather}
\c=\cfunb(\vb) = \frac{1}{\sqrt{2}}
  \left(2\frac{\vb+\b}{\a}-1\right)>0, \quad
  \vb\in(\a/2-\b,\a-b).                       \eqlabel{barkley.speed.back}
\end{gather}
As the front and the back have the same speed $\c$, we can obtain
$\vb$ for a given $\vf$ by eliminating $\c$ from system of equations
\eq{barkley.speed.front} and \eq{barkley.speed.back} and resolving it
with respect to 
$\vb$, which gives
\begin{gather}
\vb= \a - 2\b - \vf,                      \eqlabel{mu.3}
\end{gather}
and provides a link for matching with the slow-time problem.
The resulting interval of achievable speeds is 
$\cinterval=(0,(1-2\b/\a)/\sqrt{2})$.

\subsection{The slow subsystem and matching}

Evaluating  expression \eq{fhntype-period} along the stable branches
$\uplu(\v)$ and $\umin(\v)$ of the reduced slow manifold given by
\eq{barkley.pieces}, yields the temporal period of the wave
\begin{gather}
  \eqlabel{barkley.period}
  \BCL = \ln\left[ \frac{(1-\vf)\vb}{(1-\vb)\vf} \right],
\end{gather}

Combining expressions \eq{barkley.speed.front}, \eq{mu.3} and \eq{barkley.period},
finally, yields the CV restitution curve in explicit form
\begin{gather}
\eqlabel{barkley-rc} 
\BCL =  \ln \left( \frac{
    (\a-2\b+\a\c\sqrt2)(2-\a+2\b+\a\c\sqrt2)
  }{
    (\a-2\b-\a\c\sqrt2)(2-\a+2\b-\a\c\sqrt2)
  }\right)            
\end{gather}
which for $\c\in\cinterval$ gives the range $\BCL\in(0,\infty)$.
\Fig{bkl}(b) illustrates this result in comparison with curves
obtained by numerical solution of the full boundary value problem
\eq{full} for equations \eq{fhntype} with right-hand sides given by
\eq{barkley} and periodic boundary conditions as described in section
\ref{BVP}. 

\section{Asymptotic restitution curves  in  the FitzHugh-Nagumo model}
\label{secFHN}

\dblfigure{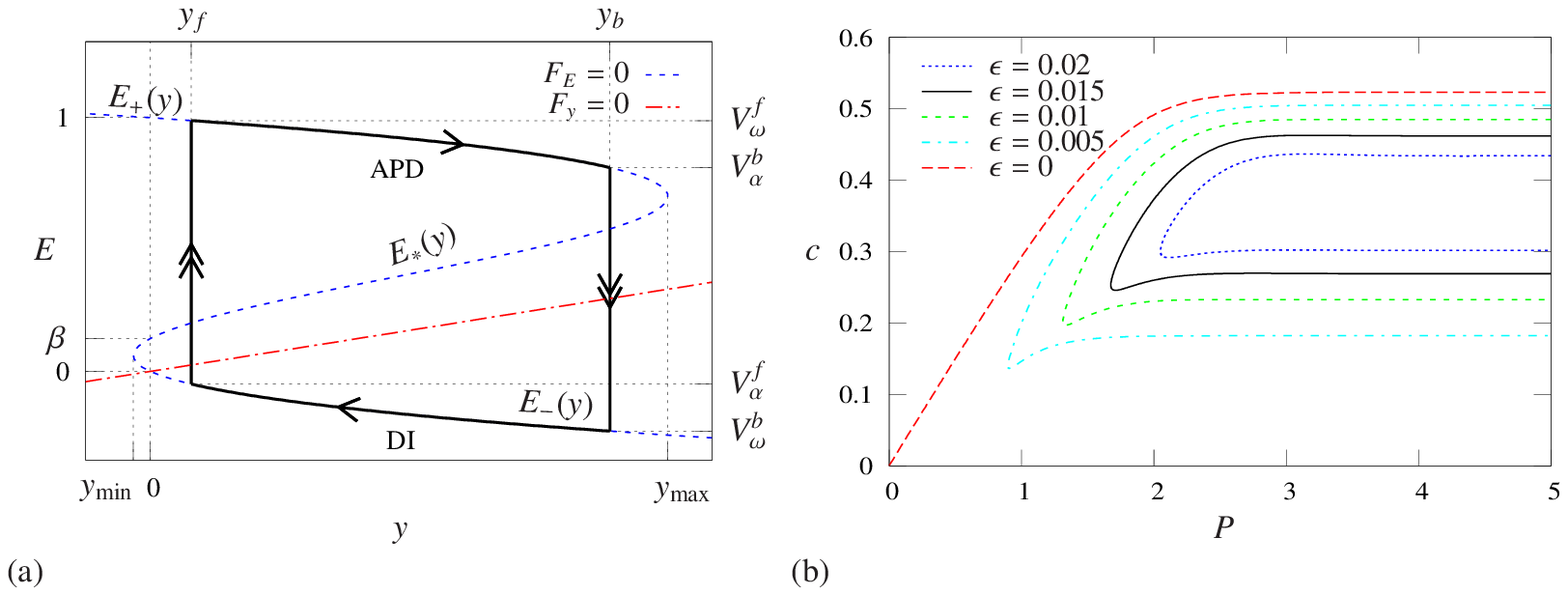}{
  (color online)
  CV asymptotics in the FitzHugh-Nagumo model. 
  Parameters: $\fhnth=0.13$, $\fhnalp=0.37$.
  (a) Schematic of the asymptotic pulse solution in the $(\v,\u)$ plane. 
  (b) CV restitution curves for various values of $\eps$. The curve $\eps=0$
  is the explicit asymptotic result found from
  \eq{fhn.c.front} and \eq{fhn.APD}, \eq{fhn.DI} as detailed in the
  penultimate paragraph of section \ref{secFHN}. 
}{fhn}

\subsection{The model}

We will use the right-hand sides $\p(\cdot,\cdot)$ and $\q(\cdot,\cdot)$ of
the FitzHugh-Nagumo equations in the following form,
\begin{subequations}
\eqlabel{FHN}
\begin{gather}
\eqlabel{FHN1}
\p(\u,\v) = \u\,(1-\u)\,(\u-\fhnth) -\v , \\
\eqlabel{FHN2}
 \q(\u,\v)=\fhnalp \u - \v,
\end{gather}
\end{subequations}
which is related to the original
\cite{FitzHugh-1961,Nagumo-etal-1962} formulation by an affine
transformation of the variables, involving the small parameter $\eps$.
Parameters $\fhnalp$ and $\fhnth$ are assumed to obey
$0<\fhnth<1/2$, $\fhnalp>(1-\fhnth)^2/4$.  
The corresponding phase
portrait is illustrated in \fig{fhn}(a), with a typical
trajectory corresponding to a
traveling wave train.

\subsection{The fast subsystem}
To use the general results on the front velocity  \eq{zfkfront},
we need to know the branches of the reduced slow manifold $\uplumin(\v)$ and
$\uth(\v)$ as functions of the slow variable $\v$. However, unlike the
case of the Barkley model, here this would require using the formula
for the roots of a generic cubic equation. This is rather inconvenient
so we employ an alternative strategy. Given the value of the pre-front
voltage at the lower branch of the reduced slow manifold, $\ualpf=\umin(\vf)$,
we determine from equation \eq{FHN1} the corresponding value of 
slow variable during the front
$\vf=\ualpf(1-\ualpf)\,(\ualpf-\fhnth)$. Further, to find the
corresponding values of $\uth(\vf)$ and the post-front voltage of the 
up-jump $\uomgf=\uplu(\vf)$, we need to solve the cubic
$\u\,(1-\u)\,(\u-\fhnth)-\vf=0$ for which we already know one root,
namely $\u=\ualpf$, so the cubic is divisible by $(\u-\ualpf)$.  Hence
the other two roots are solutions of the resulting quadratic equation, 
which leads to the required values
\begin{subequations}
                                      \eqlabel{fhn.front.pieces}
\begin{gather}
   \uth(\vf)=\frac{1}{2}\left(
     \fhnth+1-\ualpf
     -\sqrt{(\fhnth-1)^2+2\ualpf(\fhnth+1)-3(\ualpf)^2}
   \right),                          \eqlabel{fhn.pieces.uth} \\
   \uomgf=\frac{1}{2}\left(
     \fhnth+1-\ualpf
     +\sqrt{(\fhnth-1)^2+2\ualpf(\fhnth+1)-3(\ualpf)^2}
   \right),                          \eqlabel{fhn.pieces.uplu}
\end{gather}
\end{subequations}
and therefore we get the expression for the front velocity
\begin{gather}
\cfunf(\v)=\frac{\sqrt{2}}{4}\left[
  3\ualpf + 3\sqrt{(\fhnth-1)^2+2\ualpf(\fhnth+1)-3(\ualpf)^2}
  - \fhnth - 1 \right] ,                                     \eqlabel{fhn.c.front}
\end{gather}
assuming $\ualpf=\umin(\v)$,
and a similar expression for the back velocity. Finally, from the
condition $\cfunb(\vb)=\cfunf(\vf)$ we find the
pre-back voltage as
\begin{equation}
\ualpb=\uplu(\vb)=\frac{2}{3}(\fhnth+1)-\ualpf,                       \eqlabel{mu.4}
\end{equation}
which provides the link for matching with the slow-time system.
The interval of bistability required by
assumptions~\ref{A-three-roots} and \ref{A-stability} in this
case is $\vinterval=(\vmin,\vmax)$, where 
$\v_{\max,\min}=(1+\fhnth\pm\fhnsq)(2-\fhnth\mp\fhnsq)(1-2\fhnth\pm\fhnsq)/27$ 
with upper signs are for $\vmax$, lower signs are for $\vmin$, and
$\fhnsq\equiv\sqrt{1-\fhnth+\fhnth^2}$. 
Assumption~\ref{A-direction} narrows this to
$\v\in\vsubinterval=(0,\vmax)$. Assumptions~\ref{A-closed} 
and~\ref{A-monot} for this interval are verified by direct elementary
calculations, giving $\cinterval=(0,\cmax)$ where $\cmax=(1-2\fhnth)/\sqrt2$. 

\subsection{The slow subsystem and matching}

To use the coordinate $\uplumin$ to describe the motion along the reduced slow
manifold, we rewrite \eq{slow.fhntype} as
\begin{gather}
  \Df{\uplumin}{\z} = \Df{\uplumin}{\v} \, \Df{\v}{\z}
  = \frac{
    \fhnalp\uplumin - \uplumin(1-\uplumin)(\uplumin-\fhnth)
  }{
    -3\uplumin^2+2(\fhnth+1)\uplumin-\fhnth
  } .
\end{gather}
Therefore, we have the action potential duration as the time between
$\uomgf$ and $\ualpb$ along the upper branch of the reduced slow manifold,
\begin{gather}
\APD = \int_{\uomgf}^{\ualpb} \frac{
    -3\u^2+2(\fhnth+1)\u-\fhnth
  }{
    \fhnalp\u-\u\,(1-\u)(\u-\fhnth)
  } \,\d{\u} ,                                              \eqlabel{fhn.APD}
\end{gather}
and the diastolic interval as the time between $\uomgb$ and $\ualpf$
along the lower branch of the reduced slow manifold,
\begin{gather}
\DI = \int_{\uomgb}^{\ualpf} \frac{
    -3\u^2+2(\fhnth+1)\u-\fhnth
  }{
    \fhnalp\u-\u\,(1-\u)(\u-\fhnth)
  } \,\d{\u}                                                \eqlabel{fhn.DI}
\end{gather}
and hence the period of the wave $\BCL=\APD+\DI$. 
Note that equations \eq{fhn.APD} and \eq{fhn.DI} have the same
integrand, and only differ in the integration limits, which are
related by relationship~\eq{fhn.pieces.uplu}, a similar expression
relating $\uomgb$ and $\ualpb$, and equation~\eq{mu.4}. 

To summarize,
equation \eq{fhn.c.front} gives the wave velocity $\c$ as the function
of the pre-front voltage $\ualpf$.
Equation \eq{mu.4} gives the pre-back voltage $\ualpb$ as
a function of the pre-front voltage $\ualpf$.
Equation \eq{fhn.pieces.uplu} and its analogue for the back give the
post-front voltage $\uomgf$ and post-back voltage $\uomgb$ as
functions of the pre-front voltage $\ualpf$.
Using those, finally, equations \eq{fhn.APD} and \eq{fhn.DI}
give the wave period $\BCL$, as a function of the pre-front voltage $\ualpf$.
Hence, we have a parametric description of the conduction velocity
restitution curve, $\big(\BCL(\ualpf),\c(\ualpf)\big)$ in a parametric
form with parameter the pre-front voltage $\ualpf$.
The parametric representation can be transformed into explicit
representation by noting that expression \eq{fhn.c.front} is
equivalent to a quadratic equation with respect to the pre-front
voltage $\ualpf$ and can be easily solved to give the desired explicit
expression for $\BCL=\APD+\DI$ as a function of $\c$; the result, however,
is rather lengthy and we omit it here. 

\Fig{fhn}(b) presents a comparison between this
explicit asymptotic $\BCL(\c)$
dependence and the solution of the full periodic boundary-value
problem at various values of $\eps$.

\section{Asymptotic restitution curves in the Caricature Noble model}
\label{asymptotic}

The classical asymptotic theory of slow-fast systems described in the
previous sections is not appropriate for the asymptotic reduction of
cardiac equations which have a different nature, as pointed out in the
Introduction. To develop a fully fledged alternative general
theory is beyond the scope of this paper. Instead, in this section we
study an archetypal ``caricature'' model of cardiac excitation
previously proposed in \cite{Biktashev-etal-2008}. One can think of
this caricature model as a simple example of an ionic cardiac model in which a small
parameter has been embedded so as to reveal explicitly the
non-Tikhonov properties of the equations. The resulting fast and the
slow problems have analytical solutions in closed form which makes the
model convenient for investigation of well-posedness of the asymptotic
reduction of the CV restitution problem in this particular case.

\subsection{The model}

We consider the following set of equations
\cite{Biktashev-etal-2008},
\begin{subequations}
\eqlabel{caric}
\begin{align}
\eqlabel{caric.E}
  &\df{\E}{\t} =
   \frac{1}{\eps}\GNa\,(\ENa-\E)\,\Heav(\E-\East)\,\h 
    + \tildegtwo(\E)\,\n^4+\Gtilde(\E)
    + \eps \ddf{\E}{\x}, \\
\eqlabel{caric.h}
  &\df{\h}{\t} =
  \frac{1}{\eps}\Fone\,\big(\Heav(\Edagger-\E)-\h\big),\\
\eqlabel{caric.n}
  &\df{\n}{\t} = \Ftwo\,\big(\Heav(\E-\Edagger)-\n\big),
\end{align}
\end{subequations}
where
\begin{gather}
  \tildegtwo(\E) = \Gtwol\Heav(\Edagger-\E)+\Gtwor\Heav(\E-\Edagger),\nonumber\\
\Gtwol=-2,  \quad
\Gtwor=-9, \nonumber\\
  \Gtilde(\E) = \left\{\begin{array}{lll}
  \kone (\Eone-\E), &  \E\in(-\infty,\Edagger), \\
  \ktwo (\E-\Etwo), &  \E\in[\Edagger,\East), \\
  \kthree (\Ethree-\E), &  \E\in[\East,+\infty),
  \end{array}\right. \nonumber\\
\kone=3/40, \quad 
\ktwo=1/25, \quad 
\kthree=1/10, \quad 
\Eone=-280/3,                                          \nonumber\\
\Etwo=(\kone/\ktwo+1)\Edagger-\Eone\kone/\ktwo = -55,  \nonumber\\
\Ethree=(\ktwo/\kthree+1)\East-\Etwo\ktwo/\kthree = 1, \nonumber\\
\Fone=1/2, \quad
\Ftwo=1/270,                                           \nonumber\\
\ENa=40, \quad
\Edagger=-80, \quad
\East=-15, \quad
\GNa=100/3,                                            \eqlabel{caricfunctions}
\end{gather}
and where $\Heav(\cdot)$ is the Heaviside unit step function.

  The time in this model is measured in $\ms$ and the voltages $\E$,
  $\Eone$, $\Etwo$, $\Ethree$, $\East$, $\Edagger$ are measured in
  $\mV$. Correspondingly, the units of $\tildegtwo$,
  $\Gtwol$, $\Gtwor$ are $\mV/\ms$, and the units of
  $\GNa$, $\Fone$ and $\Ftwo$ are $\ms^{-1}$.  As discussed above
  in section~\ref{BVP}, the space scale is chosen to get the
  convenient value $\eps$ for the coefficient at the voltage diffusion
  term, so the dimensionality of $x$ in \eq{caric} is given by the
  ``space unit'' $\su=\ms^{1/2}$. The real physical lengths are given
  by $\x\D^{1/2}$ where $\D$ is the tissue voltage diffusion
  coefficient in the direction of wave propagation.  The rest of the
  quantities in \eq{caric} are dimensionless.

This system is obtained from the authentic Noble model of Purkinje
fibers \cite{Noble-1962} using a set of verifiable assumptions and well
defined simplifications as detailed in
\cite{Biktashev-etal-2008}. The main features of equations \eq{caric}
which make them an appropriate illustration are:
\begin{enumerate}[(a)]
\item They reproduce exactly the asymptotic structure of the
  authentic Noble model \cite{Noble-1962}, which is guaranteed by the
  embedding of the artificial small parameter $0<\eps\ll 1$. The
  authentic Noble model is the prototype of all contemporary
  voltage-gated cardiac models, and we believe that the asymptotic
  structure of \eq{caric} is rather generic in this
  class. Realistic voltage-gated cardiac models do
  not have explicit small parameters already present in them; or,
    rather, they have so many parameters that it is not a
    straightforward task which of them to use for asymptotics.  Hence
    we employ a procedure of embedding artificial small parameters, as
    discussed \eg\ in
  \cite{Biktashev-etal-2008}. An example of the embedding procedure
  appears in section \ref{applicationBR} below, where the
  Beeler-Reuter model \cite{Beeler-Reuter-1977} is discussed.
\item Equations \eq{caric}  have the simplest
possible functional form consistent with property (a). Most
functions in the right-hand side are replaced by constants as justified
in \cite{Biktashev-etal-2008} which allows analytical solutions to be
obtained in closed form. This in turn makes it possible to prove
the well-posedness of the asymptotic boundary value problem to be
formulated below.
\end{enumerate}

For brevity, we shall call this model ``Caricature Noble''.

\subsection{The asymptotic reduction of the CV restitution
  boundary-value problem} 
\label{AsymptFormulation}

Model \eq{caric} contains an explicit small parameter $\eps$ embedded
in essentially the same way as it would be in a realistic model. In
this section we demonstrate how this may be used for simplification of
the Caricature Noble model or, indeed, of a more realistic ionic model.

A slow-time subsystem which describes the plateau and the recovery
stages can be obtained immediately from equations \eq{caric}, by
taking the limit $\eps \to 0$. At time scales much longer than $\eps$,
the second equation implies $\h\to\Heav(\Edagger-\E)$. Hence the first
term of equation \eq{caric.E} is proportional to
$\Heav(\E-\East)\Heav(\Edagger-\E)=0$ which vanishes in the limit
$\eps\to0$ despite the large factor $\eps^{-1}$ in front of 
it\footnote{
  This is an attempt to summarize briefly the essence of the
  non-Tikhonov asymptotics of this and similar
  models.  For a more   detailed treatment, see our previous
  publications,   \eg~\cite{Biktashev-etal-2008}.
}.
The diffusion term $\@_{\x}^2\E$ vanishes in the same limit and
we are left with  the slow-time system,
\begin{subequations}
\eqlabel{caric-slow}
\begin{gather}
\eqlabel{slow.E}
  \Df{\E}{\zt} = \tildegtwo(\E)\,\n^4+\Gtilde(\E),     \\
\eqlabel{slow.n}
  \Df{\n}{\zt} = \Ftwo\,\big(\Heav(\E-\Edagger)-\n\big) ,
\end{gather}
\end{subequations}
where $\zt=\t-\x/\c$ is the traveling wave coordinate, 
which we use in this section instead of our standard choice
of $\z=\x-\c\t$. 
A fast-time subsystem of equations \eq{caric} can be obtained by
stretching time and space, $\T = \t/\eps$, $\X=\x/\eps$, taking the
limit $\eps\to0$ and neglecting the equation for $\n$ which decouples
from the rest. It is useful to distinguish explicitly the functions of the old
from the functions of the new independent variables,
say
$\E(\x,\t)=\Efast(\X,\T)=\Efast(\x/\eps,\t/\eps)$
and
$\h(\x,\t)=\hfast(\X,\T)=\hfast(\x/\eps,\t/\eps)$.
It is also useful to introduce
at this stage the following non-dimensionalization (which, as noted
  above, is different from other sections and specific for this particular model) 
\begin{gather}
  \V=\frac{\Efast-\East}{\ENa-\East}, \;
  \xa = \X \sqrt{\Fone},\;
  \ta=\Fone \T, \nonumber\\
  \GNaa = \frac{\GNa}{\Fone}, \;
  \C=\c/\sqrt{\Fone} .         \eqlabel{nondim}
\end{gather}
In these variables, the travelling wave ansatz becomes $\za=\ta-\xa/\C$.
As a result of these transformations we
obtain the following fast-time model of the wave front,
\begin{subequations}
\eqlabel{caric-fast}
\begin{gather}
\Df{\V}{\za} = \frac{1}{\C^2}\frac{\d^2 \V}{\d \za^2} +
\GNaa\,(1-\V)\,\Heav(\V)\,\hfast,
\eqlabel{fast.v}\\
\Df{\hfast}{\za}= \Heav(\Vdagger-\V)-\hfast, \eqlabel{fast.h}
\end{gather}
\end{subequations}
where $\Vdagger=(\Edagger-\East)/(\ENa-\East)<0$.
In a periodic wave train, a front propagates in the tail of the
preceding wave, so slow pieces described by \eq{caric-slow} and
fast pieces described by \eq{caric-fast} alternate, and there is one
slow piece and one fast piece per period,   as opposed to two fast
pieces and two slow pieces in classical Barkley and FitzHugh-Nagumo
models. The matching points for the van Dyke rule are: (a) 
the end of a slow piece $\zt=\BCL$ corresponds to the beginning of the
fast piece $\za\to-\infty$ and (b) the end of the fast piece
$\za\to\infty$ corresponds to the beginning of the next slow piece
$\zt=0$.
%
%
This situation is summarized by the following set of boundary
conditions
\begin{subequations}
 \eqlabel{asympBCs}
\begin{gather}
  \V(-\infty)=\Valpha, \ \
  \V(\infty)=\Vomega, \ \
  \left.\Df{V}{\za}\right|_{\za\to\infty}=0, \nonumber\\
  \hfast(-\infty)=1, \ \
  \V(0)=0, \eqlabel{asympBCs-fast}
\end{gather}
\end{subequations}
together with
\addtocounter{equation}{-1}
\begin{subequations}
\addtocounter{equation}{1}
\begin{gather}
  \E(0)=\East+\Vomega\,(\ENa-\East), \nonumber\\
  \E(\BCL)=\East+\Valpha\,(\ENa-\East), \\
  \n(0)=\n(\BCL), \nonumber
\eqlabel{asympBCs-slow}
\end{gather}
\end{subequations}
where $\C\in(0,\infty)$, $\BCL\in(0,\infty)$,
$\Valpha\in(-\infty,\Vdagger)$ and $\Vomega\in(0,1)$
are parameters to be found.
Condition $\V(0)=0$ is a pinning condition as discussed above in
\eq{full}, \ie\ we choose $\Ev=\East$.
Condition $\hfast(-\infty)=1$ follows from
matching condition $\hfast(-\infty)=\h(\BCL)$,
by noting that $\h$
in the slow time system is given by $\h=\Heav(\Edagger-\E)$
and $\E(\BCL)<\Edagger$.
The asymmetry of the conditions imposed
at $\za\to\pm\infty$ can be understood
by analysing the $\za\to\pm\infty$ asymptotics of the linearized
problem: the condition on the $\za$-derivative, $\V'(-\infty)=0$,
and the condition $\hfast(\infty)=0$ are
satisfied automatically for open sets of solutions, whereas
the the condition on the $\za$-derivative,
$\V'(+\infty)=0$, excludes solutions exponentially growing as
$\za\to+\infty$ and the condition $\hfast(-\infty)=1$ excludes solutions
exponentially growing as $\za\to-\infty$.

Equations \eq{caric-slow} and \eq{caric-fast} together with the
boundary conditions \eq{asympBCs} form a set of coupled boundary
value problems representing an asymptotic description of CV
restitution. The slow system is of order 2 while the fast
system is of order 3 and there are 4 unknown constants namely $\C$, $\BCL$,
$\Valpha$ and $\Vomega$. Hence 9 conditions are needed to select a unique solution,
while \eq{asympBCs} provide only 8 conditions. Hence a one-parameter
family of solutions may be found where the wave velocity is a
function of the wave period, $\C(\BCL)$.

The asymptotic boundary-value problem
\eq{caric-slow},\eq{caric-fast} and \eq{asympBCs} of CV restitution is
essentially simpler than the full one \eq{full} for equations
\eq{caric}. Indeed, the small parameter has been eliminated and the
resulting system is no longer stiff. Furthermore, the right-hand sides
of equations are simpler and each stage of the action potential is
modeled asymptotically by a system of lower dimension. However, to be
useful the coupled asymptotic boundary value problem must satisfy two
essential requirements: (a) the coupled problems must be well-posed
(b) their asymptotic solution must provide a good approximation to the
solution of the full non-asymptotic problem. It is not obvious that
the asymptotic formulation of the CV restitution problem satisfies
either of these requirements in the non-Tikhonov case under consideration.
While a proof of properties (a) and (b) in the case of any arbitrary
voltage-gated cardiac model is beyond the scope of this paper, in this
section we prove the well-posedness of the
archetypal Caricature Noble problem \eq{caric-slow}, \eq{caric-fast} and
\eq{asympBCs}. The convergence of the asymptotic and full solutions is
demonstrated numerically.

\subsection{The fast subsystem}

\subsubsection{Exact solution}

To solve the fast-time equations \eq{caric-fast} and
\eq{asympBCs-fast}, we follow the ideas presented in
\cite{Hinch-2002,Simitev-Biktashev-2006}. Since the
right-hand-side  of equation \eq{fast.v} is a piece-wise function of
voltage, we distinguish three intervals in terms of voltage
separated by $\Vdagger$ and 0, or alternatively in terms of
the wave coordinate $\za$ we use the intervals
$\za\in(-\infty,\zadagger]$,
$\za\in[\zadagger,0]$ and $\za\in[0,\infty)$ with internal
boundaries $\za=\zadagger$ and $\za=0$ for which the equations
$\V(\zadagger)=\Vdagger$ and $\V(0)=0$ are satisfied, and impose 
natural continuity conditions at the internal boundaries.
Exact analytical solution of the fast system can be obtained
by first solving the $\hfast$-equation \eq{fast.h} which is separable
and independent of $\V$ and then substituting its solution in the
voltage equation \eq{fast.v}. In the first two intervals,
$\za\in(-\infty,\zadagger]$ and $\za\in[\zadagger,0]$, equation \eq{fast.v}
is then readily solvable. The internal boundary point
$\zadagger$ can be obtained by matching the solutions in these two
intervals. To solve the voltage equation \eq{fast.v} in the third
interval $\za\in[0,\infty)$ we use the auxiliary change of variables,
\begin{gather}
\s=2\C\sqrt{\GNaa} \exp \big((\zadagger-\za)/2\big), \nonumber\\
\w(\s)=\big(\V(\za)-1\big)\exp\left(-\C^2\za/2\right),
\label{change-to-bessel}
\end{gather}
and obtain a modified Bessel equation of order $\C^2$,
\begin{gather}
\s^2 \frac{\d^2\w}{\d\s^2} + \s\frac{\d\w}{\d\s} - (\C^4+\s^2)\w = 0,
\end{gather}
the solutions of which are a linear superposition of the modified Bessel functions
$\BesselI_{\C^2}(\s)$ and $\BesselK_{\C^2}(\s)$ of order $\C^2$ 
\cite{Abramowitz-Stegun-1965}. The requirement of boundedness of the 
solution at infinity eliminates the $\BesselK_{\C^2}(\s)$ term. The 
value of the post-front voltage $\Vomega$ is obtained as the limit of
the expression for the voltage as $\za\to\infty$ and using formula
\cite[(9.6.7)]{Abramowitz-Stegun-1965}. In summary, the exact 
analytical solution of equations \eq{caric-fast} is
\footnote{
  At $\Vdagger=0$, $\zadagger=0$,
  this solution coincides with that of Hinch~\cite{Hinch-2002} at $v_{r,\textrm{eff}}=0$.
}
\begin{subequations}
\eqlabel{fastexact}
\begin{gather}
\eqlabel{fastexact.n}
\hfast(\za) = \begin{cases}
     1,  &                            \za\in(-\infty,\zadagger]\\
     \exp\left(\zadagger-\za\right), &      \za\in[\zadagger,\infty)
\end{cases}
\end{gather}
\end{subequations}
\addtocounter{equation}{-1}
\begin{subnumcases}{\V(\za) = }
\addtocounter{equation}{1}
\eqlabel{fastexact.v12}
  (\Vdagger-\Valpha)\exp\big( \C^2(\za-\zadagger)\big)+ \Valpha, & \nonumber\\
\hspace{3cm} \text{for }   \za\in(-\infty,0]  & \\
\eqlabel{fastexact.v3}
  1- \exp(\C^2\za/2)
  \frac{\displaystyle
    \BesselI_{\C^2}\Big(2\C\sqrt{\GNaa}\exp\big((\zadagger-\za)/2\big)\Big)}{\displaystyle \BesselI_{\C^2}\Big(2\C\sqrt{\GNaa}\exp(\zadagger/2)\Big)}, \nonumber\\
  \hspace{3cm} \text{for }   \za\in[0,\infty) &
\end{subnumcases}
where the internal boundary point $\zadagger$ is given by
\begin{gather}
\zadagger=\frac{1}{\C^2} \ln \left( 1-\frac{\Vdagger}{\Valpha} \right) < 0
\end{gather}
and the post-front (peak) voltage is
\begin{gather}
\Vomega=\lim\limits_{\za\to+\infty}\V(\za)=
1-\frac{\Big(\so/2\Big)^{\C^2}}{\Gamma(\C^2+1)\,\BesselI_{\C^2}\Big(\so\Big)},
                               \eqlabel{Vomega}
\end{gather}
where $\Gamma(\cdot)$ is the Gamma function
\cite{Abramowitz-Stegun-1965}. The dispersion relation is then found from
the continuity of the derivative of voltage $\V$ at $\za=0$ and 
takes the form
\begin{gather}
\eqlabel{disprel}
\C^2\,\left(1 - 2\Valpha \right) = \frac{\so
  \BesselI'_{\C^2}(\so)}{\BesselI_{\C^2}(\so)}, \ \ \ \so
=2\C\sqrt{\GNaa}\left(1-\frac{\Vdagger}{\Valpha}\right)^{{1}/{(2\C^2)}}
\end{gather}
where the prime indicates a derivative with respect to the argument
$\s$. At a given pre-front voltage $\Valpha$, the wave velocity \
$\C$ of the travelling impulses can be found
as a solution of the dispersion relation \eq{disprel} and we remind
that for comparison with numerical results the wave velocity should be
transformed back to the original variables,
$\c= \C\sqrt{\Fone}$.  Once again, note that the pre-front
voltage $\Valpha$ cannot be found from conditions \eq{asympBCs-fast}
alone and so the entire front solution is a one-parameter function as
expected. 

The existence of solutions to the fast-time boundary value
problem thus ultimately depends on the existence of solutions to
the transcendental dispersion relation \eq{disprel}. Solutions are
guaranteed by the following
\begin{prop} \label{prop-one}
For every set of parameters such that $\Valpha<\Vdagger<0$ and $\C>0$,
there exists a unique value of the excitability parameter
$\GNaa=\HH(\C,\Valpha,\Vdagger)>0$
which solves \eq{disprel}.
\end{prop}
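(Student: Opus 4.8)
The plan is to solve the dispersion relation \eq{disprel} for $\GNaa$ in terms of the remaining parameters by isolating the dependence of each side on $\so$, and then to argue monotonicity. First I would rewrite \eq{disprel} in the form $\C^2(1-2\Valpha)=\FF(\so)$, where $\FF(\s)\equiv\s\,\BesselI'_{\C^2}(\s)/\BesselI_{\C^2}(\s)$ is the logarithmic derivative of $\s\mapsto\BesselI_{\C^2}(\s)$ times $\s$. The left-hand side is a fixed positive constant once $\C$ and $\Valpha$ are fixed (positive because $\Valpha<0$, so $1-2\Valpha>1>0$). So the claim reduces to: the equation $\FF(\so)=\C^2(1-2\Valpha)$ has a unique solution $\so>0$, and that solution corresponds, through the relation $\so=2\C\sqrt{\GNaa}\,(1-\Vdagger/\Valpha)^{1/(2\C^2)}$, to a unique positive value of $\GNaa$. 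The second half of this is immediate once $\so$ is pinned down: with $\C>0$ fixed and $\Valpha<\Vdagger<0$ (so that $1-\Vdagger/\Valpha$ is a well-defined number strictly greater than $1$, hence its real power is positive), the map $\GNaa\mapsto\so$ is a strictly increasing bijection from $(0,\infty)$ onto $(0,\infty)$; inverting gives $\GNaa$ uniquely.

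The crux is therefore the behaviour of $\FF$ on $(0,\infty)$. I would establish three facts about $\FF(\s)=\s\,\BesselI'_{\nu}(\s)/\BesselI_{\nu}(\s)$ with $\nu=\C^2>0$: (i) $\FF(0^+)=\nu=\C^2$; (ii) $\FF(\s)\to+\infty$ as $\s\to+\infty$, in fact $\FF(\s)\sim\s$; and (iii) $\FF$ is strictly increasing on $(0,\infty)$. Fact (i) follows from the series $\BesselI_\nu(\s)=(\s/2)^\nu\sum_{k\ge0}(\s^2/4)^k/(k!\,\Gamma(\nu+k+1))$, which gives $\BesselI'_\nu(\s)/\BesselI_\nu(\s)=\nu/\s+O(\s)$. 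Fact (ii) follows from the standard asymptotics $\BesselI_\nu(\s)\sim e^\s/\sqrt{2\pi\s}$, whence $\BesselI'_\nu/\BesselI_\nu\to1$. For fact (iii), the cleanest route is the known representation $\RR_\nu(\s):=\BesselI_{\nu+1}(\s)/\BesselI_\nu(\s)$, together with the recurrence $\BesselI'_\nu(\s)=\BesselI_{\nu+1}(\s)+(\nu/\s)\BesselI_\nu(\s)$, so that $\FF(\s)=\nu+\s\,\RR_\nu(\s)$; then I would use the continued-fraction / Turán-type inequality showing $\s\mapsto\s\,\RR_\nu(\s)$ is strictly increasing (equivalently, that $\d/\d\s[\,\log\BesselI_\nu(\s)\,]$ is increasing, i.e. $\log\BesselI_\nu$ is convex, a classical fact for modified Bessel functions — it follows, e.g., from the fact that $\BesselI_\nu$ satisfies a Sturm–Liouville equation with the right sign, or from the infinite-product expansion of $\BesselI_\nu$ over its purely imaginary zeros). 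Combining (i)–(iii): $\FF$ maps $(0,\infty)$ strictly monotonically onto $(\C^2,\infty)$. Since the target value is $\C^2(1-2\Valpha)$ and $1-2\Valpha>1$, we have $\C^2(1-2\Valpha)>\C^2$, so the target lies in the open interval $(\C^2,\infty)=\FF((0,\infty))$; by strict monotonicity and continuity there is exactly one $\so>0$ with $\FF(\so)=\C^2(1-2\Valpha)$.

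Assembling the pieces: given $\C>0$ and $\Valpha<\Vdagger<0$, the value $\so>0$ is determined uniquely by the monotonicity argument above, and then $\GNaa$ is recovered uniquely and positively from $\so=2\C\sqrt{\GNaa}\,(1-\Vdagger/\Valpha)^{1/(2\C^2)}$ by squaring and solving, namely
\begin{gather}
\GNaa=\frac{\so^{2}}{4\C^{2}}\left(1-\frac{\Vdagger}{\Valpha}\right)^{-1/\C^{2}}.
\end{gather}
Setting $\HH(\C,\Valpha,\Vdagger)$ equal to this value proves the proposition. The main obstacle is step (iii), the strict monotonicity of $\FF$; everything else is bookkeeping with Bessel asymptotics and an elementary inversion. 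I expect to handle (iii) either by citing the log-convexity of $\BesselI_\nu$ from \cite{Abramowitz-Stegun-1965} (or a standard special-functions reference) or, if a self-contained argument is wanted, by differentiating $\FF$ directly: from $\s\BesselI''_\nu+\BesselI'_\nu=(\nu^2/\s+\s)\BesselI_\nu$ one gets $\FF'(\s)=1+(\nu^2/\s^2)\cdot\s - (\FF(\s)^2-\FF(\s))/\s$ after simplification, and one shows $\FF'>0$ by bounding $\FF(\s)$ between $\nu$ and $\sqrt{\nu^2+\s^2}$ (the latter bound being itself a Turán-type inequality, provable from the recurrences). I would present whichever of these is shortest in the final write-up.
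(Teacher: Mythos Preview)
Your proof is correct and follows essentially the same route as the paper: rewrite $\FF_{\ggam}(\pp)=\ggam+\pp\,\RR_{\ggam}(\pp)$ via the recurrence, use monotonicity of $\RR_{\ggam}$ (the paper cites Amos~\cite{Amos-1974} for $\RR'_{\ggam}>0$) together with the limits $\RR_{\ggam}(0^+)=0$, $\RR_{\ggam}(\infty)=1$ to conclude $\FF_{\ggam}$ maps $(0,\infty)$ strictly onto $(\ggam,\infty)$, then invert $\SS:\GNaa\mapsto\so$. One small slip: since $\Valpha<\Vdagger<0$, the ratio $\Vdagger/\Valpha$ lies in $(0,1)$, so $1-\Vdagger/\Valpha\in(0,1)$, not $>1$ as you wrote; this does not affect the argument, as all you need is that this quantity is positive so that $\SS$ is a bijection of $(0,\infty)$ onto itself.
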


\begin{proof}
$1^{\circ}$ We will need an important property of modified Bessel functions: the ratio
\[
  \RR_{\ggam}(\pp)=\BesselI_{\ggam+1}(\pp)/\BesselI_{\ggam}(\pp)
\]
is a strictly increasing function of its argument
$\pp\in(0,\infty)$ for any order $\ggam>0$. This follows directly
from the estimate $\RR'_{\ggam}(\pp)>0$,  \cite[p. 243]{Amos-1974}.

$2^\circ$
Moreover, it is easily established from the asymptotics of
Bessel functions that
$\lim\limits_{\pp\to0}\RR_{\ggam}(\pp)=0$
and
$\lim\limits_{\pp\to\infty}\RR_{\ggam}(\pp)=1$.

$3^{\circ}$
The right-hand side of \eq{disprel} is a composite of the function
\[
  \FF_{\ggam}: \pp \mapsto \pp \BesselI'_{\ggam}(\pp)/\BesselI_{\ggam}(\pp),
\]
further depending on $\ggam=\C^2$ as a parameter, and the function
\[
  \SS : \GNaa \mapsto \so=2\C\sqrt{\GNaa}\left(1-\Vdagger/\Valpha\right)^{\frac{1}{2\C^2}}
\]
further depending on $\C$, $\Valpha$ and $\Vdagger$ as 
parameters. In the assumptions made, $\SS$ is obviously strictly
increasing as a function of $\GNaa$, and maps 
$(0,\infty)\to(0,\infty)$.
Using the recurrence relations \cite[(9.6.26)]{Abramowitz-Stegun-1965}, we
can rewrite the definition of the function $\FF_{\ggam}$ as
\begin{equation}
  \FF_{\ggam}(\pp) = \ggam + \pp \RR_{\ggam}(\pp).  \eqlabel{FF-vs-RR}
\end{equation}
By $1^\circ$ and $2^\circ$, we have that $\FF_{\ggam}(\pp)$ is
strictly increasing (and therefore invertible) and maps $(0,\infty)\to(\ggam,\infty)$. Overall,
we conclude that the right-hand side of \eq{disprel} is a strictly
increasing function of $\GNaa$ defined for all $\GNaa>0$ and with the
range of $(\ggam,\infty)$.

$4^\circ$
The left-hand side of \eq{disprel} does not depend on the 
parameter $\ggam$ and, since the pre-front voltage $\Valpha<0$ 
by assumption, it lies within $(\ggam,\infty)$ which by $3^\circ$
is the range of the right hand side. Hence a solution $\GNaa>0$ always exists.
Moreover, since by $3^\circ$ the right-hand side is strictly
monotonic, the solution is unique. 
\end{proof}

Denoting by $\GG_{\ggam}$ the inverse function to $\FF_{\ggam}$
at the constant order $\ggam$, the existence of which has 
just been established in $3^\circ$ above, the solution of
\eq{disprel} can be written as
\begin{equation}
  \GNaa = \HH(\C,\Valpha,\Vdagger) \equiv \frac{1}{8\C}
    \left(1-\frac{\Vdagger}{\Valpha}\right)^{-1/(2\C^2)}
    \GG_{\C^2}\left((1-2\Valpha)\C^2\right).
    \eqlabel{levelset}
\end{equation}

\subsubsection{Bounds and asymptotics}

We have demonstrated the existence of solutions to the fast 
subsystem \eq{caric-fast} and \eq{asympBCs-fast}. We will now
consider some estimates related to this solution which will lead
to convenient explicit approximations of the propagation
speed and the minimal excitability required for wave propagation.
We treat $\Vdagger\in(-\infty,0)$ as a parameter
characterising the system (and omit dependence on it in the function
notations), while $\C$ and $\Valpha$ as variables characterising a
particular front solution.

\paragraph{Lower bounds on the  excitability $\GNaa$} From 
$1^\circ$ and $2^\circ$ we know that $\RR_\ggam(\pp)<1$, where the
inequality becomes approximate equality for large $\pp$. We shall
denote this as $\RR_\ggam(\pp)\lesssim 1$. With account of
\eq{FF-vs-RR}, this implies that $\FF_{\ggam}(\pp)\lesssim\ggam+\pp$.
A sharper upper bound,
$\RR_\ggam(\pp)\lesssim((\pp^2+\ggam^2)^{1/2}-\ggam)/\pp$, is given in
\cite[(9)]{Amos-1974} and implies
\begin{equation}
    \FF_{\ggam}(\pp) \lesssim  (\ggam^2+\pp^2)^{1/2},
  \qquad \ggam>0, \; \pp>0.
  \eqlabel{FF-upper}
\end{equation}
Substituting this into \eq{disprel}, we get the more easily
tractable approximation
\begin{equation}
  \C^2(1-2\Valpha) \lesssim  \left(
    4\C^2\GNaa\left(1-\Vdagger/\Valpha\right)^{1/\C^2} + \C^4
  \right)^{1/2} .
  \eqlabel{upper-inequality}
\end{equation}
Resolving this with respect to the excitability parameter 
$\GNaa$, we get a lower bound for it in the form
\begin{equation}
  \GNaa \gtrsim \HHlwr(\C,\Valpha) \equiv \Valpha(\Valpha-1) \C^2 \left(
    \frac{-\Valpha}{\Vdagger-\Valpha}
  \right)^{1/\C^2} .                           
\eqlabel{ggg-lower}
\end{equation}
The minimum of $\HHlwr(\C,\Valpha)$ defined in \eq{ggg-lower} with
respect to the wave speed $\C\in(0,\infty)$ at a constant
pre-front voltage $\Valpha\in(-\infty,\Vdagger)$ is achieved for
\begin{equation}
  \C=\C^*(\Valpha)=\left(
    \ln\left(\frac{-\Valpha}{\Vdagger-\Valpha}\right) \right)^{1/2}
  \eqlabel{C-ast}
\end{equation}
and is equal to
\begin{equation}
  \HHlwr^*(\Valpha) = e \Valpha(\Valpha-1)
  \ln\left(\frac{-\Valpha}{\Vdagger-\Valpha}\right).
  \eqlabel{HHlwr-ast}
\end{equation}
Hence, for any fixed value of the pre-front voltage
$\Valpha\in(-\infty,\Vdagger)$ and the excitability parameter
$\GNaa>\HHlwr^*(\Valpha)$, there exist two solutions for the wave
speed $\C$, namely 
$\Cfast(\Valpha,\GNaa)>\C^*(\Valpha)$ and
$\Cslow(\Valpha,\GNaa)<\C^*(\Valpha)$.

Similarly, the minimum of $\HHlwr(\C,\cdot)$ with respect to
the pre-front voltage $\Valpha\in(-\infty,\Vdagger)$ for a fixed
value of the wave speed $\C\in(0,\infty)$ is defined by
\begin{gather}
  \Valpha = \Valpha^{\star}(\C) = - \frac{ \auxvar +
    \sqrt{\auxvar^2-2\C^2(\C^2+1)\Vdagger}}{4\C^2}, \\
  \auxvar=\Vdagger-2\C^2\Vdagger-\C^2, \nonumber
\end{gather}
which is the negative root of the quadratic equation
\begin{equation}
   2\C^2 \Valpha^2 - (\Vdagger + \C^ 2+ 2\C^2\Vdagger)\Valpha +
   \Vdagger + \C^2 \Vdagger = 0 ,
   \eqlabel{star-quadratic}
\end{equation}
and a corresponding value
$\HHlwr^{\star}(\C)=\HHlwr(\C,\Valpha^{\star}(\C))$
(the explicit equation for which is lengthy and of no further
consequence so we omit it). Hence for any given wave speed $\C$,
and excitability parameter $\GNaa>\HHlwr^{\star}(\C)$, we have
two solutions for the pre-front voltage $\Valpha$, separated by
the value $\Valpha^{\star}(\C)$.

Alternatively, equation \eq{star-quadratic} can be resolved with
respect to $\C$ to obtain
\begin{equation}
  \C = \C^{\star}(\Valpha) = \left( \frac{
      -\Vdagger(1-\Valpha)
  }{
    (1-2\Valpha)(\Vdagger-\Valpha)
  } \right)^{1/2} .
\end{equation}
The absolute minimum of $\HHlwr(\C,\Valpha)$ over
$\{(\C,\Valpha)\}=(0,\infty)\times(-\infty,\Vdagger)$
is achieved when
$\C^*(\Valpha)=\C^{\star}(\Valpha)$
or, equivalently,
\[
  \ff(\Valpha) \equiv \left(\frac{\C^*}{\C^\star}\right)^2 \equiv
  \frac{(\Vdagger-\Valpha)(1-2\Valpha)}{-\Vdagger(1-\Valpha)}\ln\left(\frac{-\Valpha}{\Vdagger-\Valpha}\right)=1.
\]
The left-hand side $\ff(\Valpha)$ of this equation is a strictly decreasing function
in $\Valpha\in(-\infty,\Vdagger)$, which is established \eg\ by differentiation and using
the estimate $\ln(x)>1-1/x$ for $x>1$ (the calculations are tedious
but elementary and we omit them). Besides,
$\lim\limits_{\Valpha\to-\infty}\ff(\Valpha)=2$ and
$\lim\limits_{\Valpha\to\Vdagger-0}\ff(\Valpha,\Vdagger)=0$,
hence the above equation  for the minimum always has a unique
solution, $\Valpha=\Valpha^{*\star}$, with corresponding
$\C=\C^{*\star}$ and 
$\HHlwr=\HHlwr^{*\star}$,
all depending on $\Vdagger\in(-\infty,0)$ as a parameter.

To summarize, we have established that the lower bound $\HHlwr(\C,\Valpha)$
given by \eq{ggg-lower} has a unique absolute minimum
$\HHlwr^{*\star}=\HHlwr(\C^{*\star},\Valpha^{*\star})$ in
$\{(\C,\Valpha)\}=(0,\infty)\times(-\infty,\Vdagger$), tends to
infinity as $\C\to0$ and $\C\to\infty$ uniformly in
$\Valpha\in(-\infty,\Vdagger)$ and as $\Valpha\to-\infty$ and
$\Valpha\to\Vdagger$ uniformly in $\C\in(0,\infty)$. Hence by
\cite[Theorem 3.1]{Milnor-etal-1963},
all level sets
$\HHlwr(\C,\Valpha)=\const>\HHlwr^{*\star}$ are diffeomorphic
to each other, and thus are simple closed curves circumventing
$(\C^{*\star},\Valpha^{*\star})$. Moreover, for any
$\GNaa>\HHlwr^{*\star}$, there are two values of $\Valpha=\Valpha_{-}$
and $\Valpha=\Valpha_{+}>\Valpha_{-}$, that are exactly two solutions of
$\HHlwr^*(\Valpha)=\GNaa$, such that for any
$\Valpha\in(\Valpha_{-},\Valpha_{+})$, there exist exactly two
solutions for the front velocity, $\C=\Cfast(\Valpha,\GNaa)$ and
 $\C=\Cslow(\Valpha,\GNaa)$, where
 $\Cslow(\Valpha,\GNaa)<\C^*(\Valpha)<\Cfast(\Valpha,\GNaa)$.

Consequently, the level sets $\HH(\C,\Valpha)=\GNaa$, which are
the dispersion curves defined by \eq{disprel}, are located wholly
inside corresponding level sets of the lower bound
$\HHlwr(\C,\Valpha)=\GNaa$. In particular, there are no solutions for
\eq{disprel} for $\GNaa<\HH^{*\star}$.

\paragraph{Estimates of  the propagation velocity}
Inequality \eq{upper-inequality} can also be explicitly resolved with
respect to $\C$, giving double-sided inequality
\begin{equation}
  \Clwr \lesssim \C \lesssim \Cupr           \eqlabel{C-double}
\end{equation}
where
\begin{equation}
  \Clwr =
  \left( \frac{
    \ln\left(1 - \Vdagger/\Valpha\right)
  }{
    \LambertW_{-1}\left(\frac{\Valpha(\Valpha-1)}{\GNaa}\ln\left(1-\Vdagger/\Valpha\right)\right)
  } \right)^{1/2},                          \eqlabel{Clower}
\end{equation}
\begin{equation}
   \Cupr = \left( \frac{
    \ln\left(1 - \Vdagger/\Valpha\right)
  }{
    \LambertW_0\left(\frac{\Valpha(\Valpha-1)}{\GNaa}\ln\left(1-\Vdagger/\Valpha\right)\right)
  } \right)^{1/2},                          \eqlabel{Cupper}
\end{equation}
and $\LambertW_0(\cdot)$  and $\LambertW_{-1}(\cdot)$ are the principal
and the alternate branch of the Lambert
function \cite{Corless-etal-1996}, respectively.

The upper inequality in estimate \eq{C-double} becomes
an asymptotic equality for the faster velocity
in the limit of rescaled coordinate $\so\to\infty$ which is
achieved \eg\ for large excitability $\GNaa$ and wave speed
$\C$ at fixed pre-front voltage $\Valpha$.
In this limit, the arguments of the
Lambert functions are small, and since $\LambertW_0(z)=z+\O{z^2}$, $z\to0$, we have
\begin{equation}
  \Cfast \approx \Cupr \approx \left(\frac{\GNaa}{\Valpha(\Valpha-1)}\right)^{1/2} ,
  \qquad \GNaa\to\infty
  \eqlabel{Climfast}
\end{equation}
which agrees with \cite[(39)]{Hinch-2002}, as should be expected since
in this limit the difference between our model and \cite{Hinch-2002}
is inessential.

Similarly, using a crude estimate for the alternate branch of
Lambert function $\LambertW_{-1}(z) = \ln(-z) \left( 1 + \o{1}
\right)$ for $z\to-0$, see \cite{Corless-etal-1996}, we find for the lower
inequality in \eq{C-double}
\[
  \Cslow \approx
  \Clwr \approx \left( \frac{
    \ln\left(1-\Vdagger/\Valpha\right)
  }{
    \ln\left( - \frac{\Valpha(\Valpha-1)}{\GNaa} \ln\left(1-\Vdagger/\Valpha\right)\right)
  }\right)^{1/2},
  \qquad \GNaa\to\infty.
\]
However, because the convergent series representation of the
Lambert function $\LambertW_{-1}$ is in terms of logarithms, it
converges rather slowly, and thus the asymptotic above does not 
give a good approximation for the standard parameter values. A better,
more accurate and simpler asymptotic can be obtained directly from
\eq{disprel} in the limit $\so\to0$ and using
$\BesselI'_\nu(z)/\BesselI_{\nu}(z)=\nu/z+z/(2\nu+2)+\O{z^3}$,
$z\to0$, which can be easily obtained from 
\cite[(9.6.7)]{Abramowitz-Stegun-1965}. This leads to
\begin{equation}
\Cslow \approx \left(\frac{\ln(1-\Vdagger/\Valpha)}
    {\ln(-\Valpha/\GNaa)}\right)^{\frac{1}{2}},
  \qquad \GNaa\to\infty.
\eqlabel{Climslowas}
\end{equation}
Note that since the estimate \eq{Climslowas} is only asymptotic
rather than uniform, it
is not necessarily defined for all $\GNaa$ for which the solution
exists. Indeed, since $\ln(1-\Vdagger/\Valpha) < 0 $ for all $\Valpha<\Vdagger$, we
must require that $\ln(-\Valpha/\GNaa) < 0$ in order that
$\C\in\Real$. This is only possible when $\GNaa>-\Valpha$,
although according to \eq{HHlwr-ast}, for $\Vdagger\to0$, the minimal
excitability $\HHlwr^*\to0$.

\dblfigure{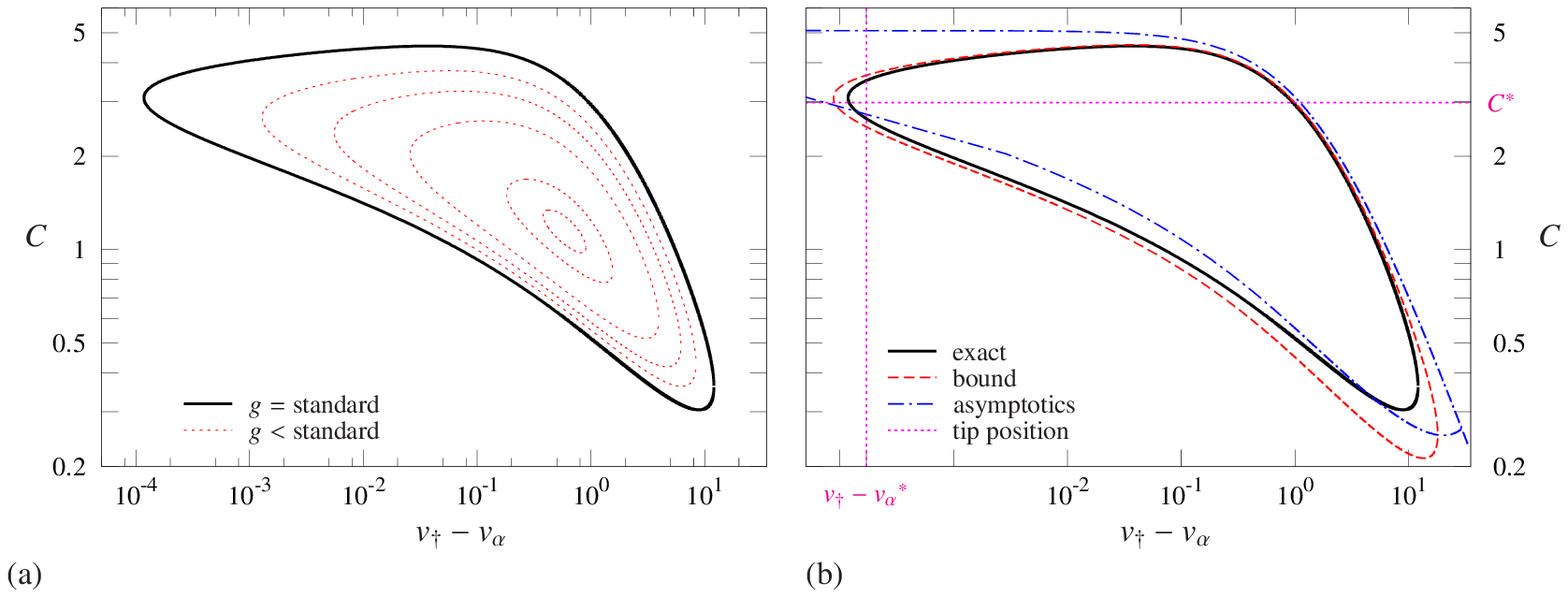}{
  (color online)
  Solutions of the dispersion relation \eq{disprel}. 
  (a) Accurate numerical solutions, for the standard value of
  $\GNaa=200/3$ (thick solid black line) and smaller
  values of $\GNaa=18, 20, 30, 40, 50$ (thin dashed red lines,
  $\GNaa$ increasing from inside out) and the standard value of $\Vdagger$. 
  (b) The accurate numerical solution for all standard parameters (solid black line)
  against the solution provided by the lower bound $\HHlwr(\C,\Valpha)$ \eqlabel{g-lower}
  (dashed red line) and the asymptotics 
  \eq{Climfast} for the fast branch and
  \eq{Climslowas} for the slow branch (dash-dotted blue line). 
  The magenta dotted lines indicate the position of the tip of the
  curve $\Valpha^*$ as estimated by \eq{tip}. 
}{valc}

\subsubsection{On the properties of the exact solution}
The above estimates of the propagation velocity $\C$ are
derived from the lower bound of the excitability
$\HHlwr(\C,\Valpha)$.  The exact dependence $\HH(\C,\Valpha)$
exhibits similar properties, as demonstrated numerically in
in \fig{valc}. We summarize these properties in the following
\begin{con} \label{con-one}
The function $\HH(\C,\Valpha)$ has an absolute minimum
$\HH^{\hash}=\HH(\C^{\hash},\Valpha^{\hash})$
in $(\C,\Valpha) \in (0,\infty) \times (-\infty,\Vdagger)$,
and for every $\GNaa>\HH^{\hash}$ the level set
$\HH(\C,\Valpha)=\GNaa$ is a simple closed curve, crossing each line
$\C=\const$ (or alternatively each line $\Valpha=\const$) at most twice.
\end{con}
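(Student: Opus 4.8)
The plan is to prove Conjecture~\ref{con-one} along the same Morse-theoretic lines already used above for the lower bound $\HHlwr$, now applied to the exact dispersion function $\HH$ through its closed form \eq{levelset}. First I would check that $\HH$ is a smooth function on the open rectangle $(\C,\Valpha)\in(0,\infty)\times(-\infty,\Vdagger)$, which is diffeomorphic to $\Real^{2}$. Smoothness follows from the implicit function theorem applied to \eq{disprel}: the function $\FF_{\C^{2}}$ is smooth and, by step $3^{\circ}$ of the proof of Proposition~\ref{prop-one}, strictly increasing in its argument, and it depends smoothly on the order $\C^{2}$ (via the recurrences and the smoothness of $\BesselI_{\nu}$ in $\nu>0$), while $\so$ is smooth and strictly increasing in $\GNaa$; hence the $\GNaa$-derivative of the difference of the two sides of \eq{disprel} never vanishes. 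Next, since the exact dispersion curves lie wholly inside the level sets of $\HHlwr$ (as established just before \Fig{valc}), we have $\HH(\C,\Valpha)\ge\HHlwr(\C,\Valpha)$ throughout the rectangle; together with the blow-up of $\HHlwr$ at every part of the boundary, proved above, this shows that $\HH$ is proper and (being positive) bounded below, so it attains an absolute minimum $\HH^{\hash}=\HH(\C^{\hash},\Valpha^{\hash})$ at an interior point.

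The core of the argument, and the step I expect to be the real obstacle, is to show that this is the \emph{only} critical point. I would imitate the treatment of $\HHlwr$: differentiating $\log\HH$ from \eq{levelset} with respect to $\Valpha$ and to $\C$ produces two ``ridge'' equations $\partial_{\Valpha}\HH=0$ and $\partial_{\C}\HH=0$, each of which --- after using \eq{FF-vs-RR} and $\GG_{\C^{2}}=\FF_{\C^{2}}^{-1}$ --- takes the form of a balance between an elementary term coming from the prefactor in \eq{levelset} and a term built from the logarithmic derivative of $\GG_{\C^{2}}$ evaluated at $(1-2\Valpha)\C^{2}$. One then wants: (i) for each fixed $\C$ the map $\Valpha\mapsto\HH(\C,\Valpha)$ is strictly unimodal, so $\partial_{\Valpha}\HH=0$ defines a single smooth curve $\C=\phi_{1}(\Valpha)$; (ii) likewise, for each fixed $\Valpha$ the map $\C\mapsto\HH(\C,\Valpha)$ is strictly unimodal, giving $\C=\phi_{2}(\Valpha)$; and (iii) $\phi_{1}$ and $\phi_{2}$ intersect exactly once, by a monotonicity argument for the analogue of the function $\ff(\Valpha)$ used above. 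Statements (i) and (ii) reduce to sign control of $\partial_{\Valpha}^{2}\HH$ and $\partial_{\C}^{2}\HH$ along the corresponding ridges, which in turn come down to monotonicity and Tur\'an-type convexity properties of the modified Bessel ratios $\RR_{\ggam}(\pp)$, jointly in the order $\ggam$ and the argument $\pp$ --- in effect one needs control of $\pp\FF_{\ggam}'(\pp)/\FF_{\ggam}(\pp)$ and of the curvature of $\log\GG_{\ggam}$ that is sharper and more uniform than the asymptotic estimate \eq{FF-upper} quoted from \cite[(9)]{Amos-1974}. I would try to assemble these from the monotonicity results of Amos together with Tur\'an-type inequalities for $\BesselI_{\nu}$; obtaining them in exactly the form required, rather than merely asymptotically (as \Fig{valc} suggests is all that is safely true), is precisely what keeps this statement a conjecture rather than a theorem.

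Granting (i)--(iii), the remainder is routine. Unimodality in $\Valpha$ (respectively in $\C$) makes every vertical section $\{\Valpha:\HH(\C,\Valpha)\le\GNaa\}$ (respectively horizontal section $\{\C:\HH(\C,\Valpha)\le\GNaa\}$) an interval, so the level set $\HH=\GNaa$ meets each line $\C=\const$ and each line $\Valpha=\const$ in at most two points --- the ``at most twice'' clause. The unique critical point is then the global minimum, and the same computations show it is non-degenerate, so $\HH$ is a proper function on a surface diffeomorphic to $\Real^{2}$ with no critical value other than $\HH^{\hash}$; by \cite[Theorem 3.1]{Milnor-etal-1963} every sublevel set $\{\HH\le\GNaa\}$ with $\GNaa>\HH^{\hash}$ is diffeomorphic to a closed disc, so its boundary $\{\HH=\GNaa\}$ is a simple closed curve; having interval horizontal and vertical sections, this curve encircles $(\C^{\hash},\Valpha^{\hash})$ and crosses each coordinate line at most twice, which is the assertion of Conjecture~\ref{con-one}. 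As a by-product this would sharpen the solvability threshold for \eq{disprel} from the lower-bound value $\HHlwr^{*\star}$ to the exact value $\HH^{\hash}$.
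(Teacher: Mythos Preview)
The paper does not prove this statement at all: it is explicitly labelled a \emph{Conjecture}, and the surrounding text makes clear that it is asserted only on the basis of the numerical evidence in \fig{valc} (``The exact dependence $\HH(\C,\Valpha)$ exhibits similar properties, as demonstrated numerically \ldots\ We summarize these properties in the following [Conjecture]''), after which the argument proceeds by ``Supposing Conjecture~\ref{con-one} is true''. There is therefore no proof in the paper to compare against.

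Your proposal is not a proof either, and you say as much: the plan is the natural one --- transport the Morse-theoretic argument used for $\HHlwr$ to the exact $\HH$ via \eq{levelset} --- but the decisive steps (i)--(iii), namely strict unimodality of $\HH$ in each variable and a single transverse intersection of the two ridge curves, are left conditional on monotonicity and Tur\'an-type convexity estimates for the Bessel ratio $\RR_{\ggam}(\pp)$ jointly in order and argument. You correctly flag this as the real obstacle; the Amos bound \eq{FF-upper} gives only the asymptotic inequality that yields $\HH\ge\HHlwr$ and hence properness and existence of a minimum, but does not by itself control the second derivatives of $\log\GG_{\ggam}$ needed for (i) and (ii). So what you have is a coherent programme that would upgrade the conjecture to a theorem if those Bessel inequalities can be established, together with an honest identification of where the gap lies --- which is more than the paper attempts, but still short of a proof.
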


Supposing Conjecture~\ref{con-one} is true, the
solutions of the dispersion relation form a simple closed curve, for
every $\GNaa>\HH^\hash$. Then there exist $\Valpastl$ and
$\Valpasth$ with  $\Valpastl<\Valpasth<\Vdagger$, such that for
any $\Valpha\in(\Valpastl,\Valpasth)$
equation \eq{levelset}
has two positive solutions for $\C$, which we may denote
$\C=\Ch(\Valpha;\Vdagger,\GNaa)$  and
$\C=\Cl(\Valpha;\Vdagger,\GNaa)$, where $0<\Cl<\Ch$.
Since the level sets are simple closed curves the statement may be
inverted so that $\C$ has the role of the
independent parameter: for every $\C$ in some open interval $\C\in(\Cmin,\Cmax)$
there exist two distinct values of the pre-front voltage $\Valpha$ which
satisfy equation \eq{levelset}. At the end points of the interval
\ie~$\C=\Cmin$ and $\C=\Cmax$ equation \eq{levelset} has single
solutions for $\Valpha$. The points $\Cmin$ and $\Cmax$ are extremum
points of $\C$ as a function of $\Valpha$ and the CV restitution curve
has opposite slopes to the left and to the right of each of them.

The fast-time systems is linked, via the boundary conditions
\eq{asympBCs} and parameters $\Valpha$ and $\Vomega$ to the slow-time
system the solution of which is discussed below.

\begin{table}[t]
  \centering
  \newcommand{\Strut}{%
    \vphantom{$\displaystyle\frac{\frac{\frac11}{\frac11}}{\frac{\frac11}{\frac11}}$}%
  }
  \newcommand{\Exp}[1]{e^{#1}}
  \begin{tabular}{|l|l|l|l|l|l|l|}\hline
    $\jp$ & $\zt$ & $\ndel_{\jp}$ & $\Edmp_{\jp}$ & $\Edel_{\jp}$ & $\EinfEdmp_{\jp}$       & $\ninf_{\jp}$
  \Strut\\\hline
    $1$ & $[0,\tast]$        &
  $\displaystyle\frac{1-\Exp{-\Ftwo(\BCL-\tdagger)}}{\Exp{-\Ftwo\BCL}-1}$
                       & $\kthree$  & $\Gtwor$  & $\kthree\Ethree$ & 1
  \Strut\\\hline
    $2$ & $[\tast,\tdagger]$ &
  $\displaystyle\frac{1-\Exp{-\Ftwo(\BCL-\tdagger)}}{\Exp{-\Ftwo\BCL}-1}$
                       & $-\ktwo$   & $\Gtwor$  & $-\ktwo\Etwo$    & 1
  \Strut\\\hline
   $3$ & $[\tdagger,\BCL]$  &
  $\displaystyle\frac{1-\Exp{\Ftwo\tdagger}}{\Exp{-\Ftwo\BCL} -1}$
                       & $\kone$    & $\Gtwol$  & $\kone\Eone$     & 0
  \Strut\\\hline
  \end{tabular}
  \caption{Values of the constants used in the expression \eq{nsol}--\eq{Esol}. }
  \label{tab:const}
\end{table}

\subsection{The slow subsystem}
The slow-time problem \eq{caric-slow} and \eq{asympBCs-slow} can be
solved exactly, too. Notice that it does not depend on the wave
velocity and it is therefore similar to the slow problem considered in
\cite{Biktashev-etal-2008}. First, we solve equation \eq{slow.n},
which is separable and independent of $\E$. Its right-hand side
is different in each of the two intervals $\zt\in [0,\tdagger]$  
$\zt\in[\tdagger,\BCL]$ due to the presence of a Heaviside
function. The equation is constrained by the continuity condition at 
$\zt=\tdagger$ and periodic boundary conditions at $\zt=0$ and
$\zt=\BCL$. The solution has the form
\begin{equation}
  \n(\zt)=\njot(\zt)=\ninf_{\jp}+\ndel_{\jp}\exp(-\Ftwo\zt),  \eqlabel{nsol}
\end{equation}
where the values of the constants $\ninf_{\jp}$ and $\ndel_{\jp}$ are different for
the intervals $\zt\in [0,\tdagger]$ ($\jp=1,2$) and $\zt\in[\tdagger,\BCL]$ ($\jp=3$)
and are given in Table~\ref{tab:const}, and the overset index here
above $\n$ and below above $\E$ designates the corresponding interval.
This solution is then substituted in \eq{slow.E} which becomes
\begin{gather}
  \eqlabel{slow.E1}
  \E'+\Edmp_{\jp}\E
  = \Edel_{\jp} \sum_{\l=0}^{4}\dbinom{4}{\l} \ninf_{\jp}^{(4-l)}\big(\ndel_{\jp}\exp(-\Ftwo\zt)\big)^\l
  + \EinfEdmp_{\jp},
\end{gather}
where the constants $\Edmp_{\jp}$, $\Edel_{\jp}$ and $\EinfEdmp_{\jp}$
also depend on the interval and are given in 
Table~\ref{tab:const}. The values in the table are obtained by a
straightforward manipulation of the model definitions of 
$\Gtilde(\E)$ and $\tildegtwo(\E)$ in \eq{caricfunctions} and the
binomial theorem is used for the term $\n(\zt)^4$.
Within each of the intervals $[0,\tast]$, $[\tast,\tdagger]$ and
$[\tdagger,\BCL]$, equation \eq{slow.E1} is a first order linear ODE
with constant coefficients, and the solution of \eq{slow.E} can be
written in the explicit form,
\begin{gather}
  \eqlabel{Esol}
  \E(\zt)=\Ejot(\zt)=\frac{\EinfEdmp_{\jp}}{\Edmp_{\jp}}
  +\arb_{\jp}\exp(-\Edmp_{\jp}\zt)
  +\Edel_{\jp}\sum_{\l=0}^{4}\dbinom{4}{\l} \, \ninf_{\jp}^{4-l} \, \ndel_{\jp}^\l\,
	\frac{\exp(-\l\Ftwo\zt)}{\Edmp_{\jp}-\l\Ftwo},
\end{gather}
where $\arb_{\jp}$ are integration constants.
The exact solutions \eq{nsol} and \eq{Esol} contain seven
parameters, namely $\tast$, $\tdagger$, $\BCL$, $\arb_1$, $\arb_2$,
$\arb_3$, $\Valpha$, that need to be found from the boundary
conditions and from the internal matching conditions,
\begin{gather}
  \Eeins(0)=\Vomega\,(\ENa-\East)+\East, \quad   \Eeins(\tast)=\East, \nonumber \\
  \eqlabel{alg}
  \Ezwei(\tast)=\East, \quad  \Ezwei(\tdagger)=\Edagger, \\
  \Edrei(\tdagger)=\Edagger, \quad
  \Edrei(\BCL)=\Valpha\,(\ENa-\East)+\East . \nonumber
\end{gather}
The existence of solution to the slow-time
boundary value problem ultimately depends on the existence of
solutions of the transcendental equations \eq{alg}.
Based on the uniqueness and existence of solutions to an initial-value problem,
it is obvious that the problem reduces to four essential unknowns,
which we denote $\Eomega=\Vomega\,(\ENa-\East)+\East=\E(0)$,
$\Ealpha=\Valpha\,(\ENa-\East)+\East=\E(\BCL)$,
$\nstar=\n(0)=\n(\BCL)$ and $\BCL$.

\sglfigure{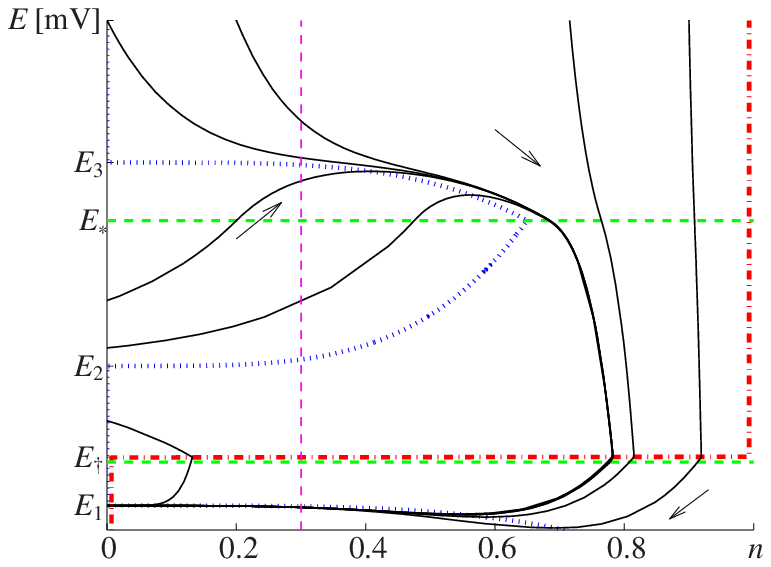}{
 (color online)
  The phase portrait of the slow subsystem \eq{caric-slow} of the
  Caricature Noble model.  Standard parameter values
  \eq{caricfunctions} are used except $\Ftwo$ which increased three
  times to $\Ftwo=1/90\, \ms^{-1}$, for
  visualization purposes. Red dash-dotted lines represent vertical
  isoclines $\d\n/\d\t = 0$. Blue dotted lines   represent horizontal
  isoclines $\d\E/\d\t = 0$. Thin solid black   lines with attached
  arrows represent trajectories.   
}{ne}

\begin{prop}
\label{prop-two}
Suppose that the values of the parameters of the slow subsystem
\eq{caric-slow} and \eq{asympBCs-slow} obey the same qualitative
relationships as the default values, \ie\
$\Eone<\Edagger<\Etwo<\East<\Ethree$,
$\kone,\ktwo,\kthree,\Ftwo>0$,
$\Gtwol,\Gtwor<0$ and $\Gtilde(\E)$ is continuous.
Then for any $\nstar\in(0,1)$ and $\Eomega\ge\Edagger$, the system of
equations 
\eq{nsol}, \eq{Esol}, and \eq{alg}
has a unique solution for  $\Ealpha$
and $\BCL$.
For a fixed $\nstar$, this defines a function $\Vomega\mapsto\Valpha$
with domain $\Vomega \in (\Vdagger,+\infty)$ and range $\Valpha \in(-\infty,\Vdagger)$,
which is monotonically decreasing.
\end{prop}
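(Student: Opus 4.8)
The plan is to recast the algebraic system \eq{nsol}, \eq{Esol}, \eq{alg} as an initial-value problem for the autonomous planar system \eq{caric-slow}. The forms \eq{nsol}, \eq{Esol} together with the internal matching conditions in \eq{alg} express precisely that $(\E,\n)$ solves \eq{caric-slow} on $[0,\BCL]$ with $\n$ periodic, with $\E(0)=\Eomega$, and with $\E$ successively occupying the three voltage ranges $\E>\East$, $\Edagger<\E<\East$, $\E<\Edagger$ (so that $\tast,\tdagger$ are the instants at which $\E$ passes $\East$ and $\Edagger$, and $\arb_1,\arb_2,\arb_3$ are the integration constants selected by continuity). Accordingly, I would fix $\nstar\in(0,1)$ and $\Eomega\ge\Edagger$, take the solution of \eq{caric-slow} with initial data $\E(0)=\Eomega$, $\n(0)=\nstar$, show that it has the stated structure, and then read off $\BCL$ as the instant at which $\n$ returns to $\nstar$ and set $\Ealpha:=\E(\BCL)$. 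Uniqueness of the initial-value solution then immediately yields uniqueness of $\BCL$ and $\Ealpha$. The only delicate point for the initial-value problem itself is that the right-hand side of \eq{caric-slow} is discontinuous across $\E=\Edagger$; this is handled by showing that the orbit crosses $\E=\Edagger$ transversally.

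The heart of the argument is the qualitative shape of this orbit. First, $\n(\zt)\in(0,1)$ for all $\zt$, since $\{0<\n<1\}$ is positively invariant for $\dot\n=\Ftwo(\Heav(\E-\Edagger)-\n)$. Then, using $\Eone<\Edagger<\Etwo<\East<\Ethree$, $\kone,\ktwo,\kthree,\Ftwo>0$, $\Gtwol,\Gtwor<0$, and continuity of $\Gtilde$, I would check by elementary sign analysis of $\dot\E=\tildegtwo(\E)\n^4+\Gtilde(\E)$ that: (i) at $\E=\Edagger$ one has $\dot\E=\Gtwor\n^4+\Gtilde(\Edagger)<0$, so $\E$ may cross $\Edagger$ only downward; (ii) on $\Eone<\E<\Edagger$ one has $\Gtilde(\E)<0$ and $\tildegtwo(\E)=\Gtwol<0$, hence $\dot\E<0$, so once below $\Edagger$ the voltage decreases monotonically and stays there, tending to $\Eone$; (iii) $\E$ is bounded above (for $\E>\Ethree$, $\dot\E\le\kthree(\Ethree-\E)<0$), and because $\n\to1$ while $\E$ stays above $\Edagger$ and the limiting field $\Gtwor+\Gtilde(\E)$ is negative on $[\Edagger,\infty)$ — which holds for the default values and requires, beyond the stated signs, only the mild magnitude condition $|\Gtwor|>\sup_{\E\ge\Edagger}\Gtilde(\E)$ that the slow outward current dominates the time-independent one there — the field on $\{\E>\Edagger\}$ eventually becomes uniformly negative, forcing $\E$ down to $\Edagger$ in finite time $\zt=\tdagger$; (iv) once $\E$ has crossed $\East$ downward at some $\tast$, the combination $\Gtwor\n^4+\Gtilde(\E)$ is negative throughout $\Edagger<\E<\East$, since there $\n$ keeps growing and $\Gtilde(\E)\le\ktwo(\East-\Etwo)\le|\Gtwor|\n(\tast)^4$, so the orbit visits the three ranges in the correct order and Table~\ref{tab:const} applies. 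Consequently $\n$ increases strictly on $[0,\tdagger]$ from $\nstar$ to some $\n(\tdagger)>\nstar$ and decreases strictly on $[\tdagger,\infty)$ towards $0$; by the intermediate value theorem there is a unique $\BCL>\tdagger$ with $\n(\BCL)=\nstar$, and $\Ealpha=\E(\BCL)<\Edagger$, i.e.\ $\Valpha<\Vdagger$. (For $\Eomega<\East$ with very small $\nstar$ there can be an extra upward excursion through $\East$ before repolarization, giving more than three voltage segments; since the matched restitution solutions all have $\Eomega>\East$, this case is immaterial for the application.)

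For the monotonicity of $\Vomega\mapsto\Valpha$ at fixed $\nstar$, I would compare two orbits with $\Eomega^{(1)}<\Eomega^{(2)}$. Since \eq{caric-slow} is autonomous and, by transversality of the $\Edagger$-crossing, its orbits are unique and non-intersecting in the $(\E,\n)$-plane, I parametrise each orbit by $\n$ on its two monotone phases. On the ascending phase, where $\E>\Edagger$ so $\tildegtwo\equiv\Gtwor$, both $\E^{(i)}(\n)$ satisfy the same scalar equation $\d\E/\d\n=(\Gtwor\n^4+\Gtilde(\E))/(\Ftwo(1-\n))$ with $\E^{(1)}(\nstar)<\E^{(2)}(\nstar)$, so $\E^{(1)}(\n)<\E^{(2)}(\n)$ while both exceed $\Edagger$; hence orbit $1$ reaches $\Edagger$ at a strictly smaller $\n$-value, $\n_\dagger^{(1)}<\n_\dagger^{(2)}$. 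On the descending phase, both satisfy $\d\E/\d\n=(\Gtwol\n^4+\kone(\Eone-\E))/(-\Ftwo\n)$; at $\n=\n_\dagger^{(1)}$ orbit $2$ has already fallen below $\Edagger$ while orbit $1$ is just at $\Edagger$, so $\E^{(2)}<\E^{(1)}$ there, and non-intersection propagates this down to $\n=\nstar$, giving $\Ealpha^{(2)}<\Ealpha^{(1)}$. Thus the map is strictly decreasing; it is defined on all of $\Vomega\in(\Vdagger,\infty)$ by the previous paragraph and takes values in $(-\infty,\Vdagger)$, with $\Valpha\to\Vdagger-0$ as $\Vomega\to\Vdagger+0$ (then $\tdagger,\BCL\to0$) and $\Valpha$ tending to its infimum as $\Vomega\to\infty$, by continuous dependence together with the established monotonicity.

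The step I expect to be the main obstacle is the second one: the careful sign-chasing needed to show the orbit occupies the three voltage ranges in the right order and meets $\Edagger$ exactly once and in finite time. This is where the qualitative hypotheses on the parameters — together with the one magnitude condition that the slow outward current dominates the time-independent current above $\Edagger$ — are genuinely used, and it is the only part requiring real case analysis. Once the orbit geometry is pinned down, uniqueness of $(\Ealpha,\BCL)$ is immediate from uniqueness for the initial-value problem, and the monotonicity of $\Vomega\mapsto\Valpha$ is a soft consequence of non-intersection of orbits of a planar autonomous flow.
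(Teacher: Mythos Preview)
Your approach coincides with the paper's: both recast the algebraic system as the initial-value problem for the planar slow system \eq{caric-slow}, read off existence and uniqueness of $(\BCL,\Ealpha)$ from the up-then-down monotonicity of $\n(\zt)$ once the orbit is shown to cross $\E=\Edagger$ exactly once, and derive the monotonicity of $\Vomega\mapsto\Valpha$ from non-intersection of autonomous planar orbits (the paper uses a Jordan-curve nesting argument where you parametrise the two orbits by $\n$ and compare). Your treatment is in fact more careful than the paper's brief phase-portrait sketch: you correctly isolate the additional magnitude condition $|\Gtwor|>\sup_{\E\ge\Edagger}\Gtilde(\E)$ needed to force the orbit down across $\Edagger$ in finite time, a point the paper's stated hypotheses do not make explicit.
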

\begin{proof}
is evident from the phase portrait shown in \fig{ne}.
Every trajectory starting above $\E=\Edagger$ goes to the right
and therefore eventually goes down below $\Edagger$. Whilst below
$\Edagger$ it goes to the left so $\n(\zt)\to0$ as $\zt\to\infty$ (this is
also evident from the analytical solution).  Therefore there
exists a point $\zt$ such that $\n(\zt)=\n(0)=\nstar$.
Moreover, such $\zt$ is unique, as the domain $\E<\Edagger$ is absorbing
and $\n(\zt)$ is monotonically increasing outside it and
monotonically decreasing inside it.  
So we have the proposed mapping
$(\nstar,\Vomega)\mapsto(\Ealpha,\BCL)$.
The monotonicity of this mapping follows from the fact that the
trajectories cannot intersect, 
so if $\Eomega_2>\Eomega_1$, then the contour made by the straight
line between points $(\nstar,\Ealpha_1)$ and $(\nstar,\Eomega_1)$ and the segment of
trajectory joining these two points, lies within the similar contour made by
points $(\nstar,\Ealpha_1)$ and $(\nstar,\Eomega_1)$.
\end{proof}

\subsection{Matching and well-posedness}
The CV restitution curve can be obtained by combining the results
of the slow and the fast subsystems. According to
Proposition~\ref{prop-one} and Conjecture~\ref{con-one}, for
sufficiently large values of the excitability $\GNaa$ the fast 
subsystem defines the wave velocity as a function of prefront voltage,
$\C=\C^{\pm}(\Valpha)$, for $\Valpha\in(-\infty,\Vdagger)$, which, via
equation \eq{Vomega}, defines the post-front voltage
$\Vomega=\Vomega^{\pm}(\Valpha)$.
On the other hand, according to Proposition~\ref{prop-two}, the wave
period $\BCL$ and the peak voltage $\Eomega=\East+\Vomega(\ENa-\East)$
are functions of $\Ealpha=\East+\Valpha(\ENa-\East)$ and $\nstar$.
Hence, the matching of the fast and slow solutions, for any given
$\nstar$, can be obtained by solving the simultaneous system of
equations for $\Valpha$ and $\Vomega$, one resulting from the
fast subsystem and one resulting from the slow subsystem, the
latter depending on $\nstar$ as a parameter.  This subsequently provides
where $\C(\Valpha)$ is given by the fast subsystem, and %
$\BCL=\BCL(\nstar,\Vomega)$ from the slow subsystem.
Hence, we have ultimately the CV restitution curve
$(\c(\nstar),\BCL(\nstar))$ in parametric form and we have proven
that the asymptotic CV-restitution problem
\eq{caric-slow},\eq{caric-fast} and \eq{asympBCs} 
is indeed well-posed.

\dblfigure{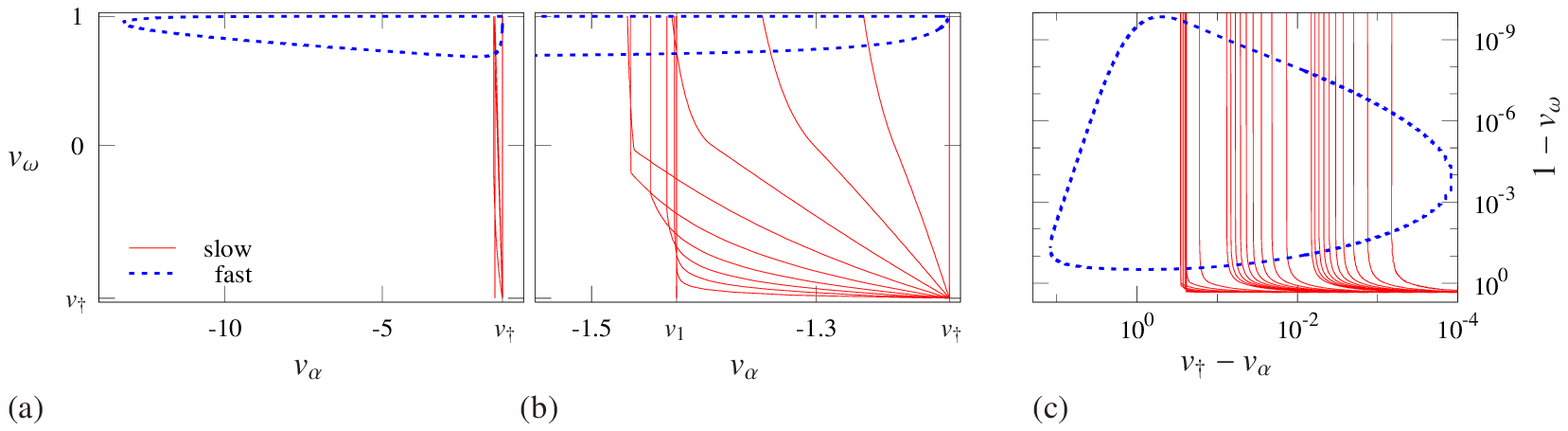}{
  (color online)
  Matching solutions of the fast subsystem eq{caric-fast} and slow 
  subsystem \eq{caric-slow} of the Caricature Noble model in terms of the
  dimensionless pre- and post-front voltages $\Valpha$ and $\Vomega$.
  Standard parameter values \eq{caricfunctions} except
  $\Ftwo=1/90\, \ms^{-1}$ as in \fig{ne}.
  The non-dimensionalized resting potential
  $\Vone=(\Eone-\East)/(\ENa-\East)\approx-1.4242$
  and the non-dimensionalized $\h$- and $\n$-gate switch potential
  $\Vdagger=(\Edagger-\East)/(\ENa-\East)\approx-1.1818$.
  The choice of values of $\nstar$ in (a) is: $0$, $0.5$ and $1.0$; for
  in (b): from $0$ to $1$ with step $0.1$;
  in (c): from $0$ to $0.9$ with step $0.1$,
  then to $0.99$ with step $0.01$,
  and then to $0.999$ with step $0.001$.
}{valvom}

The equations of the aforementioned system for $\Valpha$ and $\Vomega$ are
complicated and finding analytical solutions seems seems unfeasible. Some
qualitative insight can be obtained from numerical analysis, which for
the standard parameter values is illustrated in \fig{valvom}.  The
solutions correspond to the intersection of the closed fast-subsystem
contour (dashed blue line) with a slow subsystem line (solid red
lines). The family of slow-subsystem lines stretches continuously, but
non-monotonically, from the vertical line $\Valpha=\Vone$ at
$\nstar=0$ to the vertical line $\Valpha=\Vdagger$ for $\nstar=1$. In
accordance with Proposition~\ref{prop-two}, these lines are
monotonically decreasing except for the above mentioned vertical lines
for extreme values of $\nstar$. Almost all of these lines intersect
the fast-subsystem contour, with the exception of the lines with
$\nstar$ very close to unity. The line $\Valpha=\Vone$, $\nstar=0$
corresponds to the limit $\BCL\to\infty$ of the restitution curve, as
that limit corresponds to a solitary wave propagating through the
resting state, which is characterized by $\E=\Eone$ and $\n=0$. The
other extremity corresponds to the $\nstar$ line which is tangent to
the slow-subsystem contour for a value of $\nstar$ very close to but
smaller than unity, and at a value of the pre-front voltage 
$\Valpha$ very close to but smaller than that of the parameter $\Vdagger$.
As evident from the above analysis, \eg\ see
\fig{valc}(a), we have $\Valpha^*\to\Vdagger-0$ as $\GNaa\to\infty$,
which motivates consideration of asymptotics related to this
limit. The details are presented in the next subsection.

Another qualitative feature evident from \fig{valvom} is that the slow
lines are nearly vertical for larger $\Vomega$. This, and the
non-monotonic behaviour of the horizontal ($\Valpha$) position of the
slow-subsystem lines on $\nstar$ around larger values of $\Vomega$ is
a direct consequence of the non-monotonic behaviour of the trajectories
which can be observed in \fig{ne}: a typical ``tail'' of a trajectory
starting at a large $\Vomega$ is a curve which starts at $\Vone$ at
$\n=0$, then decreases and then increases up until $\Vdagger$;
besides, all trajectories starting from larger $\Vomega$ join together
very closely.
As follows from the analysis in \cite{Biktashev-etal-2008}, this is
due to an extra small parameter present in the slow subsystem, namely
$\Ftwo\to0$.

These observations motivate consideration of further asymptotics to the
obtained solution, which lead to less accurate but more explicit
results. We present them less formally than the main limit $\eps\to0$
as they are of a secondary importance to our main results.

\subsection{Further asymptotics}

\paragraph{The fast branch of the restitution curve} For
values of $1-\n\sim1$, the period $\BCL$ is large compared to
the parameter $\Ftwo^{-1}$ which plays the role of a time
constant in the $\n$-gate equation \eq{caric.n}, and hence we can
exploit the Tikhonov singular  perturbation in terms of $\Ftwo$
understood as a small parameter. This corresponds to the
secondary asymptotic embedding $\epstwo\to0$ considered in
\cite{Biktashev-etal-2008}. In this limit, the trajectories differ
only in the initial post-overshot stage, after which they move along
the reduced superslow manifold 
$\n\approx\Nss(\E)\equiv\left(-\Gtilde(\E)/\tildegtwo(\E)\right)^{1/4}$
with the exception of the repolarization stage when
$\Nss(\E)\notin\Real$. It is important to note that except during
the the initial transient, the trajectories are nearly the same, up
to a correction which is exponentially small in $\Ftwo$ as evident
from \fig{ne}. We consider two parts of a typical trajectory, one for
$\zt\in[0,\tdagger]$ when $\n$ increases, and the other for
$\zt\in[\tdagger,\BCL]$ when it decreases. 
Due to the above mentioned convergence of trajectories, the
value $\nmax=\n(\tdagger)$ is practically independent on $\Eomega$
up to exponentially small corrections. A typical value of $\nmax$ can
be found \eg\ in the following way: first, consider solution
\eq{Esol} for $\jp=1$ and  $\arb_1=0$ and solve the matching
condition $\Eeins(\tast)=\East$ for $\tast$, next determine
$\arb_2$ from the initial condition $\Ezwei(\tast)=\East$, then solve
$\Ezwei(\tdagger)=\Edagger$ for $\tdagger$, and finally with the 
knowledge of $\tdagger$ and $\nstar$, the value of $\nmax$ can be
obtained from \eq{nsol} as
$\nmax=\nzwei(\tdagger)=\ndrei(\tdagger)=1-(1-\nstar)\,e^{-\Ftwo\tdagger}$. 
This however leads to a transcendental equation. Numerical value
for the standard parameter values is $\nmax\approx0.7827$.

Using  \eq{nsol}, the duration of the second half of the
trajectory, between $\zt=\tdagger$, $\n=\nmax$, $\E=\Edagger$ and
$\zt=\BCL$, $\n=\nstar$, $\E=\Ealpha$, is given by
\begin{equation}
  \ttwo = \BCL-\tdagger = \frac{1}{\Ftwo} \,\ln\left(
    \nmax + (1-\nmax)\,e^{\Ftwo\BCL}
  \right)                                \eqlabel{t2-bcl}
\end{equation}
The evolution of $\E$ during the second half of a typical slow
trajectory, is described by the relevant form of equation
\eq{slow.E}
\[
 \Df{\E}{\zt} = \Gtwol\nmax^4\,e^{-4\Ftwo(\zt-\tdagger)} +
 \kone(\Eone-\E)
\]
wherefrom
\begin{equation}
  \Ealpha = \Eone
  + \frac{\Gtwol\nmax^4}{\kone-4\Ftwo}\,e^{-4\Ftwo\ttwo}
  +
  \left(\Edagger-\Eone-\frac{\Gtwol\nmax^4}{\kone-4\Ftwo}\right)e^{-\kone\ttwo} .
  \eqlabel{Ealpha-t2}
\end{equation}
Combining \eq{t2-bcl}, \eq{Ealpha-t2} and \eq{Climfast}, we get an
explicit dependence of $\c(\BCL)$ in elementary functions.

\paragraph{The slow branch of the restitution curve} The slow
branch is considered in a similar way. The 
difference is in the initial transient where a typical trajectory
approaches the reduced superslow manifold from lower values of $\E$ rather than from
the higher $\E$ as it was for the faster branch, and also in the
$\c(\Valpha)$ dependence. Hence the dependence of $\c(\BCL)$ is
obtained by combining \eq{t2-bcl}, \eq{Ealpha-t2} and \eq{Climslowas}.

\paragraph{The turning point of the restitution curve} The
turning point is the point where the fast branch meets the slow
branch. It is characterized by extreme proximity of $\Valpha$ to
$\Vdagger$ and $\nstar$ to 1. The front parameters can be estimated
via the limit $\Valpha\to\Vdagger$, $\GNaa\to\infty$ of
\eq{C-ast} and \eq{HHlwr-ast}, which gives the highest pre-front
voltage as 
\begin{equation}
  \Valpha^{*} \approx \Vdagger \left( 1 + \left(-\Vdagger(1-\Vdagger)/\GNaa\right)^e \right)
                                        \eqlabel{Valpha-ast}
\end{equation}
and the corresponding
slowest stable front velocity as
\begin{equation}
  \C^*=\C^{*}(\Valpha^*) \approx \left(e\ln\left(\frac{\GNaa}{-\Vdagger(1-\Vdagger)}\right)\right)^{1/2}.
  \eqlabel{C-ast-ast}
\end{equation}
Given $\Valpha^*$ and $\C^*$, equation \eq{Vomega} then gives the
value of the post-front voltage,
\begin{gather}
\Vomega^*=
  1-\frac{\Big(\so^*/2\Big)^{(\C^*)^2}}{\Gamma\left((\C^*)^2+1\right)\,\BesselI_{(\C^*)^2}\Big(\so^*\Big)},
                                        \eqlabel{Vomega-ast}
\\
  \so^*=2\C^*\sqrt{\GNaa}\left(1-\Vdagger/\Valpha^*\right)^{1/(2(\C^*)^2)} .
                                        \eqlabel{s0-ast}
\end{gather}
The duration of the slow
trajectory corresponding to these $\Valpha^*$ and $\Vomega^*$ will be
an estimate of the shortest wave period possible in this model,
$\BCL^*$.  A simple approximation of it can be obtained from the
consideration that in the limit $\Valpha\approx\Vdagger$, we have
$\n(\zt)\approx1$ throughout, hence the slow-subsystem equation for
$\E$ is simplified by replacing $\n(\zt)=1$ so the link between $\n$
and $\E$ dynamics is only via the values of $\tdagger$ and $\BCL$.
The period $\BCL$ is almost the same as the $\tdagger$ taken by the
voltage to decrease from $\Eomega$ to $\Edagger$, since the interval
$\BCL-\tdagger$ needed to decrease from $\Edagger$ to $\Ealpha$ is
small compared to that. In this case, the interval $\tdagger$ can be
estimated from solutions \eq{Esol} for $\jp=1,2$. Hence, we have
\begin{gather}
  \BCL^*\approx \frac{1}{\ktwo} \ln\left(\frac{
      -\Gtwor + \ktwo(\Etwo-\Edagger)
    }{
      -\Gtwor +\ktwo(\Etwo-\East)
    }\right) 
  + \frac{1}{\kthree} \ln\left(\frac{
      \kthree(\Eomega^*-\Ethree)-\Gtwor
    }{
      \kthree(\East-\Ethree)-\Gtwor
    }\right).                           \eqlabel{BCL-ast}
\end{gather}
Equations \eq{Valpha-ast}--\eq{BCL-ast},
together with the scaling relationships \eq{nondim}
define the turning point of the restitution curve.  For the standard
parameter values, this gives

\begin{gather}
\C^*\approx 2.973, 
\quad
\Valpha^*\approx-1.18199, 
\quad 
\Valpha^*-\Vdagger\approx1.709\cdot10^{-4}, 
\nonumber\\
\c^*\approx 2.102\, \su/\ms, 
\quad
\BCL^*\approx13.09\, \ms. 
                                    \eqlabel{tip}
\end{gather}

\dblfigure{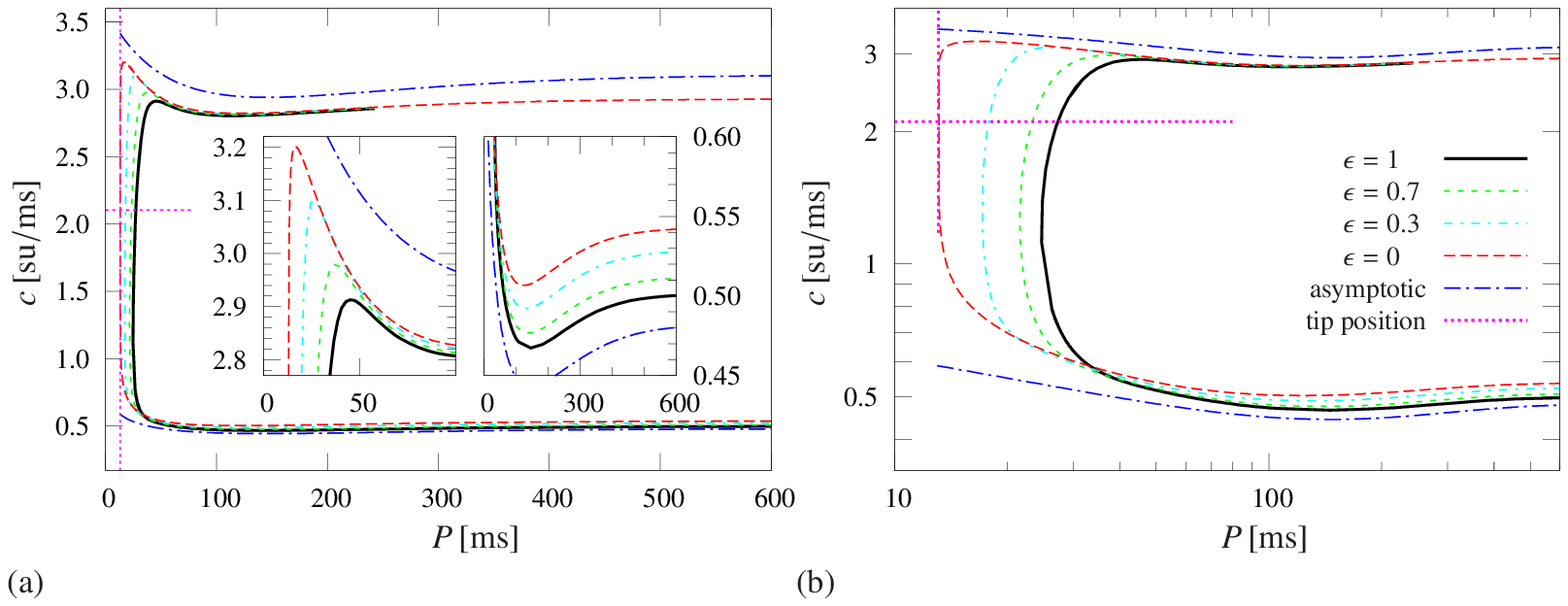}{
  (color online)
  Restitution curves of the Caricature Noble model,
  (a) in Cartesian and (b) logarithmic coordinates.
  Insets in panel (a) show selected features magnified. 
  Lines in all plots as described by the legend in panel (b),
  specifically, 
  accurate numerical solution of the full boundary-value
  problem \eq{full} for various values of
  $\eps>0$ (bold solid black, short-dashed green and dash-dotted cyan for $\epsilon=1,0.7, 0.3$ respectively),
  the accurate matching of the fast and slow subsystems
  \eq{caric-slow}--\eq{asympBCs-slow} which corresponds to the limit
  of $\eps=0$ (long-dashed red),
  and the approximations of the fast and slow branches of the RC at
  $\eps=0$ given by asymptotics \eq{t2-bcl}, \eq{Ealpha-t2},
  \eq{Climfast} and \eq{Climslowas} (dash-dotted blue).
  The magenta dotted lines indicate the position of the tip of the
  curve $\Valpha^*$ as estimated by \eq{tip}. 
}{pc}

\sglfigure{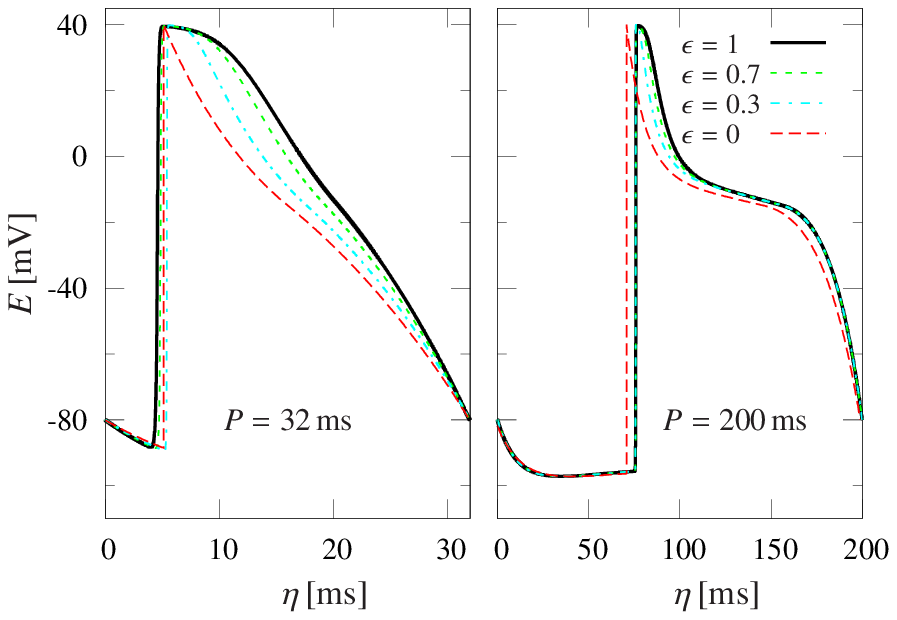}{
  (color online)
  Profiles of the action potentials corresponding to the faster branch
  of the solution of the Caricature Noble model. 
  The values of the period $\BCL$ and the values of $\eps$
  are given in the plot. The values $\eps\ne0$ correspond to the
  full CV restitution boundary value problem  \eq{full} for \eq{caric}
  with standard parameter values while $\eps=0$ corresponds to the asymptotic
  limit given by \eq{caric-slow}, \eq{caric-fast} \eq{asympBCs}.
}{ze}

\subsection{Comparison of the asymptotics with the exact solution}

With the aim to assess the proximity of the analytical solutions
obtained in the main asymptotic limit $\eps\to0$ and  the full
numerical solutions of the CV restitution boundary value problem, we
present in \fig{pc} sets of CV restitution curves of the Caricature
Noble model, and we show in \fig{ze} the action
potential profiles for two selected base cycle lengths $\BCL$.
We also demonstrate in \fig{pc} the asymptotic estimates of
the upper and lower branches and of the tip position $\BCL^*$ of the
curve found with the help of the secondary embedding $\Ftwo\to0$.
The asymptotic CV restitution curve was obtained by solving numerically
problem \eq{caric-slow}, \eq{caric-fast} and \eq{asympBCs} which
defines $\c(\Valpha)$ and $\BCL(\Valpha)$.  The full CV restitution
curve was obtained by solving the full boundary-value problem
\eq{full} formulated for equations \eq{caric}. \Fig{pc}(a) presents the
curves in Cartesian coordinates and \fig{pc}(b) in semi-logarithmic
coordinates to reveal in more details the behaviour at small
values of the wave period $\BCL$. We can see that as $\eps$ is decreased
the solution of the full problem converges to the solution
of the asymptotically reduced problem at $\eps=0$, and that the 
full model curve for $\eps=1$ at standard parameter values is
close to the asymptotic limit everywhere except at the smallest
values of the period $\BCL$. 

\sglfigure{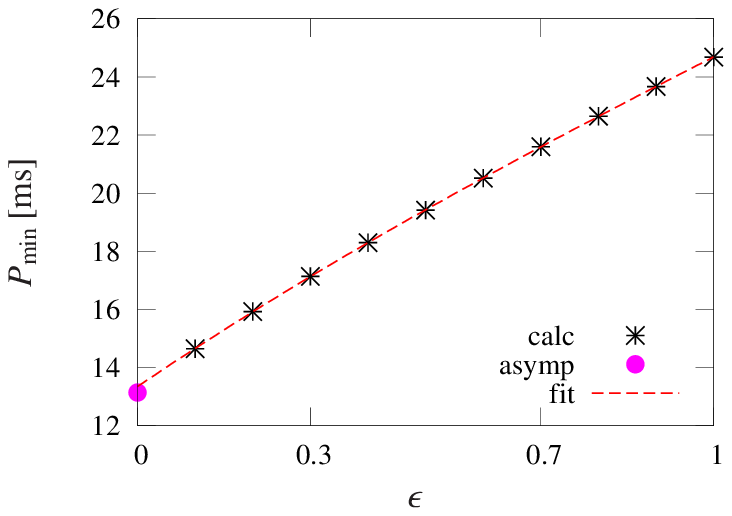}{
  (color online)
  Dependence of the minimal basic cycle length $\BCLmin$ on the embedding parameter $\eps$. 
  The asterisks represent $\BCLmin$ calculated at the specified values of $\eps$, the solid circle is the
  estimate \eq{tip}. The dashed line is the best cubical fit of all the asterisk points, which is
  $\BCLmin=13.34
  +13.40
  \eps-2.91
  \eps^2+0.85
  \eps^3\,[\ms]$.
}{pmin}

This indicates that at small $\BCL$, the parameter $\eps$ is not a
``good small parameter''. Note that in our asymptotic analysis, we
have calculated the period $\BCL$ as the length of the slow subsystem
solution, and we have neglected the contribution of the fast
subsystem, \ie\ the duration of the front, which is small of the order
$\O{\eps}$. However, at the smallest values of $\BCL$, the the front
length is comparable to duration of the solution of the slow
subsystem. This can be seen already in \fig{pc} and \fig{ze}, and is
further confirmed by the analysis of the dependence of the minimal
wave period $\BCL$ on $\eps$ shown in \fig{pmin}. We see that to
second order, the basic cycle length can be approximated by
$\BCL=\BCL_0+\eps\BCL_1+\O{\eps^2}$, where $\BCL_0$ is the cycle
length given by the asymptotic theory presented above, and
$\BCL_1\approx13.4$ may be interpreted as the front duration in the
fast subsystem. Note that $\BCL_0\sim\BCL_1$, hence neglecting
$\O{\eps}$ for smaller $\BCL$ produces relatively large error.  That
this is not the whole story, however, as not only the horizontal
position of the $\BCL^*=\BCLmin$ point changes with $\eps$, but also
its vertical position $\c^*$, so a proper next-order asymptotic should
take into account of the influence of the slow subsystem on the front
velocity as well.

\section{Asymptotic restitution curves in the Beeler-Reuter model}
\label{applicationBR}

\dblfigure{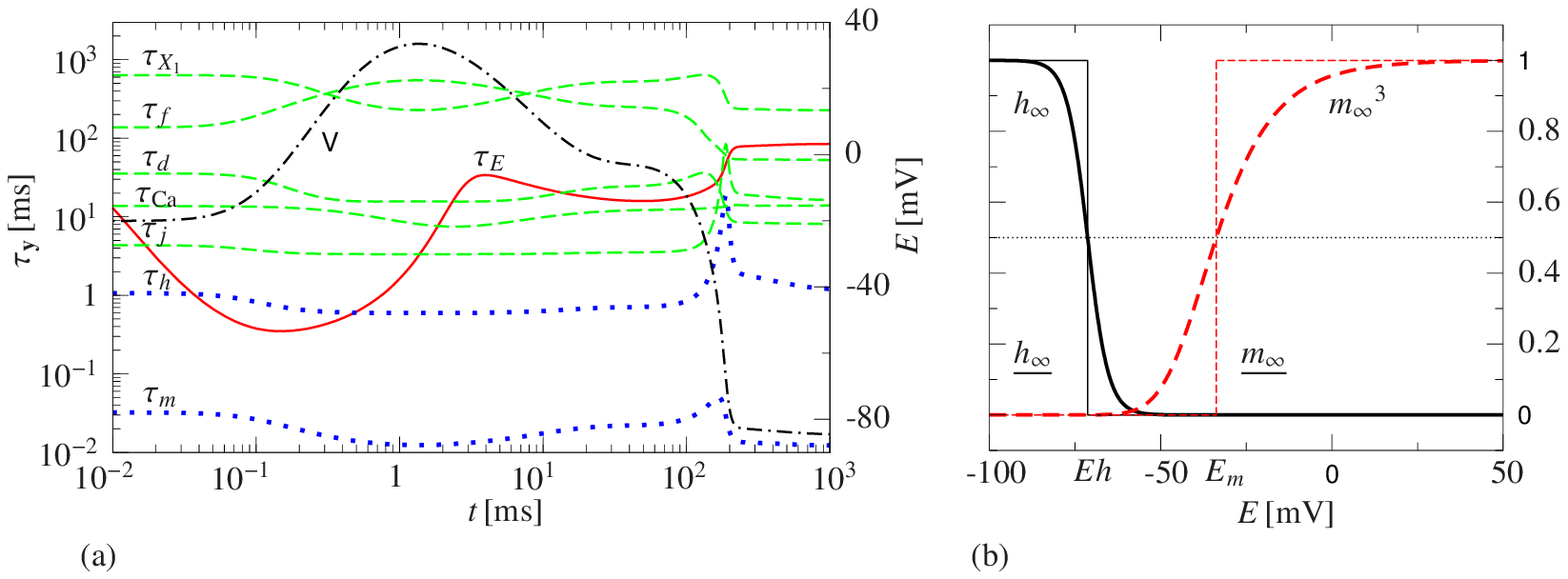}{
  (color online)
  (a) On the left ordinate: Time scaling functions $\tauy$ of the dynamical variables of the
  Beeler-Reuter model during a typical action potential. The time
  scaling function of the voltage $\tauE$ is given by a solid line
  (red online), the time scaling functions of the slow variables are
  given by dashed lines (green online) and of fast variables by dotted
  lines (blue online) with all lines labeled correspondingly in the
  plot. On the right ordinate: Voltage of a typical action potential
  given by a dash-dotted line. (b) The quasistationary values $\mbarcub$
  and $\hbar$ and their approximations $\emb{\mbar}(E,0)$ and
  $\emb{\hbar}(E,0)$.
}{bremb}

In this section we apply the methodology presented above to the
Beeler-Reuter ventricular model \cite{Beeler-Reuter-1977}.  This model is an example
of a realistic voltage-gated model which represents an intermediate
step between relatively simple early models and complicated
contemporary models. It has played an important role for
understanding of cardiac excitability with a large volume of
literature devoted to it, and it is still the model of
choice in many situations, 
\eg\ \cite{Dokos-Lovell-2004,Herman-etal-2007,Chen-etal-2007} are some recent
examples. In the following, we find the CV restitution curve of the 
Beeler-Reuter model using both the asymptotic formulation
and the full formulation of the periodic boundary value problem
as described above and demonstrate an excellent quantitative
agreement.

\subsection{The model and the asymptotic embedding}

The initial step of our approach requires an appropriate asymptotic
embedding of the original Beeler-Reuter model \cite{Beeler-Reuter-1977}. The embedding
is constructed following the procedures presented in detail in
our earlier works
\cite{Biktasheva-etal-2006,Simitev-Biktashev-2006,Biktashev-etal-2008} 
and here we shall summarize briefly the relevant arguments. We
would like to remark that an analogous embedding procedure applies to
the Caricature Noble model of section \ref{asymptotic} where $\eps$
appeared seemingly without much justification.

We rewrite the Beeler-Reuter model in the one-parameter form,
\begin{subequations}
    \eqlabel{BR}
  \begin{gather}
\df{\E}{\t} = \frac{1}{\eps} \gNa \, (\ENa-\E) \, \emb{\mbar}(\E;\eps)\,\j \,\h +
\Islow(\E,\y) + \eps \ddf{\E}{\x} ,     \eqlabel{BR.E}\\
\df{\h}{\t} = \frac{1}{\eps} \big(\emb{\hbar}(\E;\eps) - h\big)/\tauh(\E),     \eqlabel{BR.h}\\
\df{\j}{\t} = \big(\jbar(\E) - j\big)/\tauj(\E) \eqlabel{BR.j}\\
\df{\y}{\t} = \fy(\E,\y,\dots),    \eqlabel{BR.y}
  \end{gather}
\end{subequations}
where only the equations affected by the artificial small parameter
$\eps\ll 1$ are shown. 
  As in the previous model, the voltage $\E$ is measured in $\mV$,
  time in $\ms$, the gating variables $\h$, $\j$, $\m$, $\y$ are
  non-dimensional, and the space coordinate $\x$ is measured in
  $\su=\ms^{1/2}$.

The functions $\tauy(\E)$ and $\ybar(\E)$ are
time-scaling functions and quasi-stationary values of the gating
variables, respectively. 
Functions $\emb{\mbar}(\E;\eps)$ and $\emb{\hbar}(\E;\eps)$
are ``embedded'', \ie\ they are $\eps$-dependent versions of
$\mbar(\E)$ and $\hbar(\E)$ such that $\emb{\mbar}(\E;1)=\mbar(\E)$
and $\emb{\hbar}(\E;1)=\hbar(\E)$ on one hand and
$\emb{\mbar}(\E;0)= \Heav(\E-\Em)$ and
$\emb{\hbar}(\E;0)=\Heav(\Eh-\E)$ on the other hand,
with $\Em=-33.75\, \mV$ and
$\Eh=-71.33\, \mV$
so that $\mbarcub(\Em)=1/2$ and $\hbar(\Eh)=1/2$.
The last two
parameters are analogous to $\East$ and $\Edagger$ of the Caricature
Noble model. The rest of the model \eq{BR} is the same as defined in  
\cite{Beeler-Reuter-1977}, namely 
$\Islow=-(\INac+\IKi+\Ixi+\Is)$ is the sum of all
slow currents,
$\y=(\xone,\dgate,\fgate,\Ca)$ is the vector of all slow variables in
addition to gate $\j$ which is also slow, and $\fy(\cdot)$ stands for
the functions governing the dynamics of the gating variables $\y$. The
rationale for this parameterisation is the following. 
\begin{enumerate}
\item The dynamic variable $\m$ is a `superfast' variable  and has been
adiabatically eliminated by replacing it with its quasi-stationary
value $\mbar$.
The variables $\E$ and $\h$ are  `fast variables', \ie\
they change significantly during the upstroke of a typical AP
potential, unlike all other variables which change only slightly
during that period. The relative speed of the dynamic variables is
estimated by comparing the magnitude of their corresponding
`time-scaling functions' as illustrated in \fig{bremb}(a). For a
system of differential equations
$\d{\w_\l}/\d{\t}  = \F_\l(\w_1,\dots\w_\N)$, $\l=1,\dots\N$
the time
scale functions are defined as
$\taul(\w_1,\ldots) \equiv \left|\d{\F_{\l}}/\d{\w_{\l}}\right|^{-1}$,
$\l= 1\ldots\N$ and coincide with the
functions $\tauy(\cdot)$ already present in \eq{BR}.
\item The dynamic variable $\E$ is fast due only to
  one of the terms in the right-hand side, the large sodium current
  $\gNa \, (\ENa-\E) \, \emb{\mbar}(\E;\eps)\,\j \,\h$,
  whereas other currents are not that large and so do not have the
  large coefficient $\eps^{-1}$ in front of them.
\item The fast sodium current $\gNa \, (\ENa-\E) \,
  \emb{\mbar}(\E;\eps)\,\j \,\h$ is large only during the upstroke
of the AP, and not that large otherwise. This is due to the fact that
either gate $\m$ or gate $\h$ or both are almost closed outside the upstroke
since their quasistationary values $\mbar(\E)$ and $\hbar(\E)$ are
small there as seen in \fig{bremb}(b). Thus in the limit $\eps\to0$,
functions $\mbar(\E)$ and $\hbar(\E)$ have to be considered zero in
certain overlapping intervals $\E\in(-\infty,\Em]$ and
$\E\in[\Eh,+\infty)$, and $\Eh\le\Em$, hence the representations
$\emb{\mbar}(\E;0)=\Heav(\E-\Em)$ and $\emb{\hbar}(\E;0)=\Heav(\Eh-\E)$.
\end{enumerate}
A more detailed discussion of the parameterisation given by
\eq{BR}, as well as the justification of our
method of parametric embedding, \ie\ a seemingly ``arbitrary''
introduction of an artificial small parameter $\eps$, can be found in 
\cite{Biktashev-etal-2008}.

\subsection{The asymptotic reduction}

We are now ready to formulate the asymptotic CV restitution problem as
a set of coupled boundary value problems similar to those described in
section \ref{AsymptFormulation}.  Being
interested in propagation with constant velocity and fixed shape, we
introduce the travelling wave coordinates $\z=\x-\c\,\t$ for the slow 
subsystem and $\Z=\X-\c\,\T=\z/\eps$ for the fast subsystem.  As 
before, we distinguish the functions of the fast time by name and
set $\E(\z)=\Efast(\Z)$ and $\h(\z)=\hfast(\Z)$.  

The slow-time subsystem is obtained in the limit $\eps\to0$ of the
original slow independent variable $\z$,
\begin{subequations}
    \eqlabel{BRbvpslow}
  \begin{gather}
    \c\Df{\E}{\z} +    \Islow(\E,\y) = 0, \\
    \c\Df{\j}{\z} +     \frac{\jbar(\E) - \j}{\tauj(\E)} = 0, \\
    \c\Df{\y}{\z} +     \fy(\E,\y,\dots) = 0, \\
\E(\z=0) = \Ealpha, \quad \E(\z=\c\BCL)=\Eomega, \nonumber \\
\j(\z=0)=\j(\z=\c\BCL)=\jalp, \\
\y(\z=0)=\y(\z=\c\BCL). \nonumber 
  \end{gather}
\end{subequations}
The fast-time subsystem is obtained in the limit $\eps\to0$ of the
fast independent variable $\Z$,
\begin{subequations}
    \eqlabel{BRbvpfast}
  \begin{gather}
\DDf{\Efast}{\Z} + \c\Df{\Efast}{\Z} + \gNa(\ENa-\Efast)\,\jfast\,\Heav(\Efast-\Em)\,\hfast = 0,  \\
\c\Df{\hfast}{\Z} + \frac{\Heav(\Efast-\Eh) - \hfast}{\tauh(\Efast)} = 0, \\
\Efast(-\infty) =  \Efastalpha, \ \
\Efast'(+\infty) =    0, \ \
\Efast(0)= \Eh, \nonumber\\
\Efast(+\infty)=\Efastomega, \ \
\hfast(-\infty) = 1, \ \
  \end{gather}
\end{subequations}
The boundary conditions of the fast subsystem include the pinning
condition eliminating the translational invariance along the $\Z$
axis.  This problem depends on four parameters, namely the pre-front
voltage $\Efastalpha$, the post-front voltage $\Efastomega$,
the fixed value of the $\j$ gate inherited from the slow system
$\jfast$ and the wave speed $\c$ which are determined by
matching with the slow subsystem, given by the conditions,
  \begin{gather}
  \Ealpha=\Efastalpha, \quad
  \Eomega=\Efastomega, \quad
  \jalp=\jfast.
    \eqlabel{BRbvpmatch}
  \end{gather}

\subsection{The fast subsystem}

\dblfigure{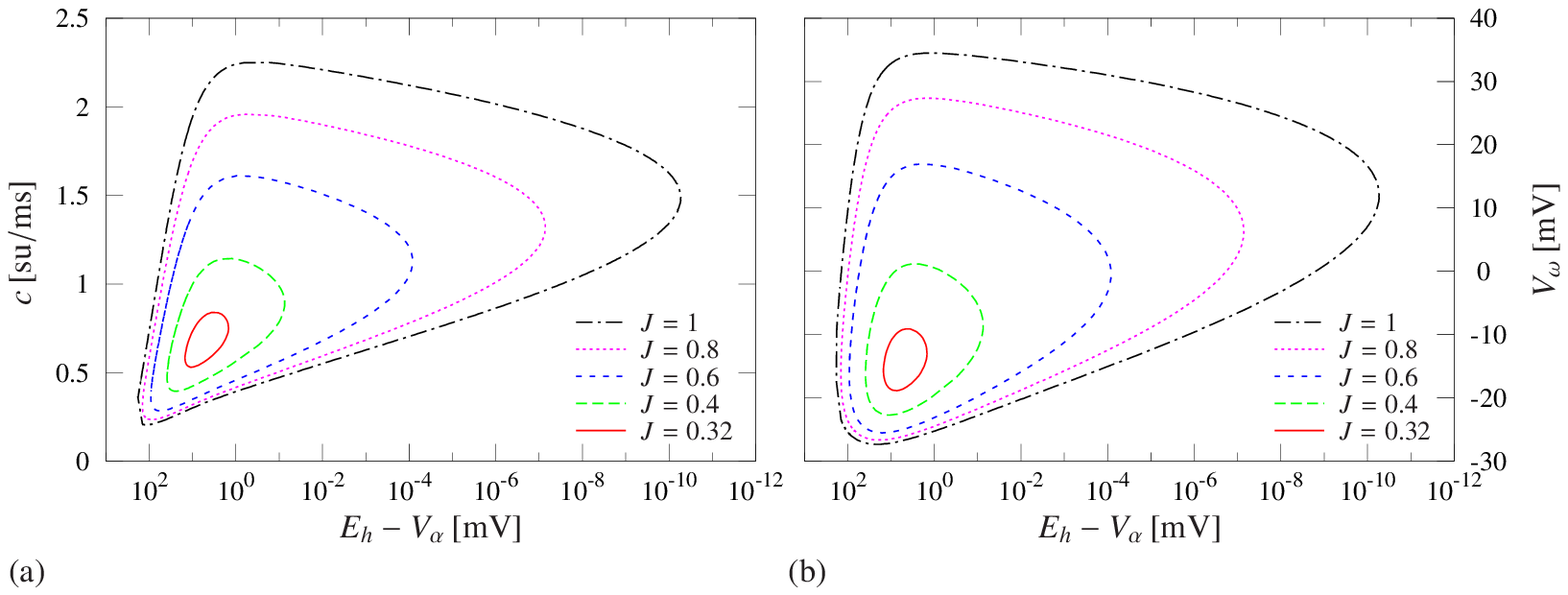}{
  (color online)
  Solutions of the fast subsystem \eq{BRbvpfast} of the
  Beeler-Reuter mode for a  selection of values of $\jfast$. 
  (a) Front speed vs prefront voltage.
  (b) Postfront voltage vs prefront voltage. 
  In both panels, the prefront voltage is shown in logarithmic scale
  with respect to $\Eh$, since the curves are very close to
  the line $\Efastalpha=\Eh$. Note that $\Efastalpha$ increases from left to
  right. 
}{braoc}

The fast subsystem \eq{BRbvpfast} describes the wave front of an
action potential as it propagates by diffusion. This problem
has, on one hand, differential equations of cumulative order three and
four parameters, and on the other hand, five constraints, thus the
solution will typically depend on two ``leading'' parameters chosen
arbitrarily, and the other parameters will be functions of these
two. The structure of the equations \eq{BRbvpfast} is very similar to
the fast subsystem of the Caricature Noble model \eq{caric-fast} and
\eq{asympBCs-fast}. In fact, if $\tauh(\Efast)$ was a piecewise
constant function with a step at $\Efast=\Eh$, then the Beeler-Reuter
fast subsystem would be equivalent to the Caricature Noble fast
subsystem up to parameter values and identification of $\gNa\jfast$
in the former with $\gNa$ in the latter. Hence we may expect that the
set of solutions here has a structure similar to that of the
Caricature Noble model. In particular, we 
expect that $\jfast$ and $\Efastomega$ can be determined as univalued
functions of $\Efastalpha$ and $\c$. Further, we expect that for a fixed
$\jfast$, we have the set of solutions in the $(\Efastalpha,\c)$ such
that there exists $\jmin$ such that if $\jfast>\jmin$ then there
exists an interval $(\Efastalphamin(\jfast),\Efastalphamax(\jfast))$
such that for any $\Efastalpha$ within this interval, there are two
solutions for the front velocity $\c=\c^{\pm}(\jfast,\Efastalpha)$, and
correspondingly two values of the postfront voltage
$\Efastomega=\Efastomega^{\pm}(\jfast,\Efastalpha)$. This is confirmed
by numerical analysis of this problem, see \fig{braoc}.

\subsection{The slow subsystem}

As in the Caricature Noble model, the slow subsystem in the leading
order does not depend on diffusion, and therefore coincides,
up to the scaling of the independent variable,
with the slow subsystem of the single-cell model.
The slow subsystem depends on four parameters, namely the
pre-front voltage  $\Ealpha$, the post-front voltage
$\Eomega$, the initial value of the $\j$-gate $\jalp$ and the
wave   period $\BCL$, and has differential equations of cumulative order 
of $\dim(\y)+2$. On the other hand it has $\dim(\y)+4$
constraints, hence, similarly to the fast system, it has typically
a two-parametric family of solutions. From the viewpoint of matching
with the fast-time problem and in analogy with the Caricature Noble
case, one possible convenient choice of leading parameters is
$\Ealpha$ and $\jalp$ from which $\Eomega$, $\BCL$ as well as $\y(0)$
can be found. However, unlike the Caricature Noble case, it is now
more difficult to establish rigorous conditions for existence of
solutions. This, however, can be easily done numerically.

\subsection{Matching and comparison with the exact solution}

The three constraints \eq{BRbvpmatch} offset the four free parameters
of the slow and fast subsystems so that the resulting set of solutions
is typically one-parametric, \ie\ it is  curve in the parameter
space. The projection of this curve on the $(\c,\BCL)$ is the
CV restitution curve. Given appropriate analytical approximation of
the relevant dependencies, which could be obtained for instance by
asymptotic means or by fitting, solving the resulting finite
(transcendental) system will produce analytical approximation for the
restitution curve. Doing so for Beeler-Reuter model is however beyond 
the scope of this paper and we restrict to demonstrate the validity of
our asymptotic approach for this model by solving the asymptotic
matching problem numerically.

In solving the problem  \eq{BRbvpslow}, \eq{BRbvpfast} and
\eq{BRbvpmatch} numerically, the following features need to be taken
into account
\begin{enumerate}[(a)]
\item The fast-time problem is posed on an infinite interval.
\item At the same time the slow-time system is posed on a finite
  interval.
\item The length of that finite interval
  is the wave length of the periodic travelling wave, \ie\ it is an
    unknown variable.
\item The fast-time system has piece-wise right-hand sides.
\item The pinning condition needs to be imposed in the fast subsystem.
  Since the fast system is piece-wise, it is convenient to impose it
  at a boundary between pieces.
\item The slow variable $\j$ appears as a parameter in the fast-time
  system.
\end{enumerate}

These features can be addressed by a number of
well-known techniques and we refer the interested reader to
\cite{Ascher-Russell-1981} for a general discussion and to
\cite{Simitev-Biktashev-2006} for a numerical implementation of a
similar problem. In short, the issue of the boundary
conditions at infinity can be resolved by considering a finite
interval with boundary conditions obtained as a solution of the problem
linearised about the asymptotic equilibria. This finite interval is
then dissected into three subintervals to take care of the piece-wise
definition of the equations. The three subinterlals together with the
interval on which the slow-time system is posed are then mapped to the
interval $[0,1]$ by introducing appropriate scaling factors. The pinning
condition can be easily incorporated at one of the internal matching
points. Finally, in this representation equations \eq{BRbvpfast}--\eq{BRbvpmatch}
can be solved
by any standard boundary value problem solver such as Maple's
{\tt dsolve} \cite{Maple}, NAG's {\tt d02raf} \cite{NAG} and others.

\dblfigure{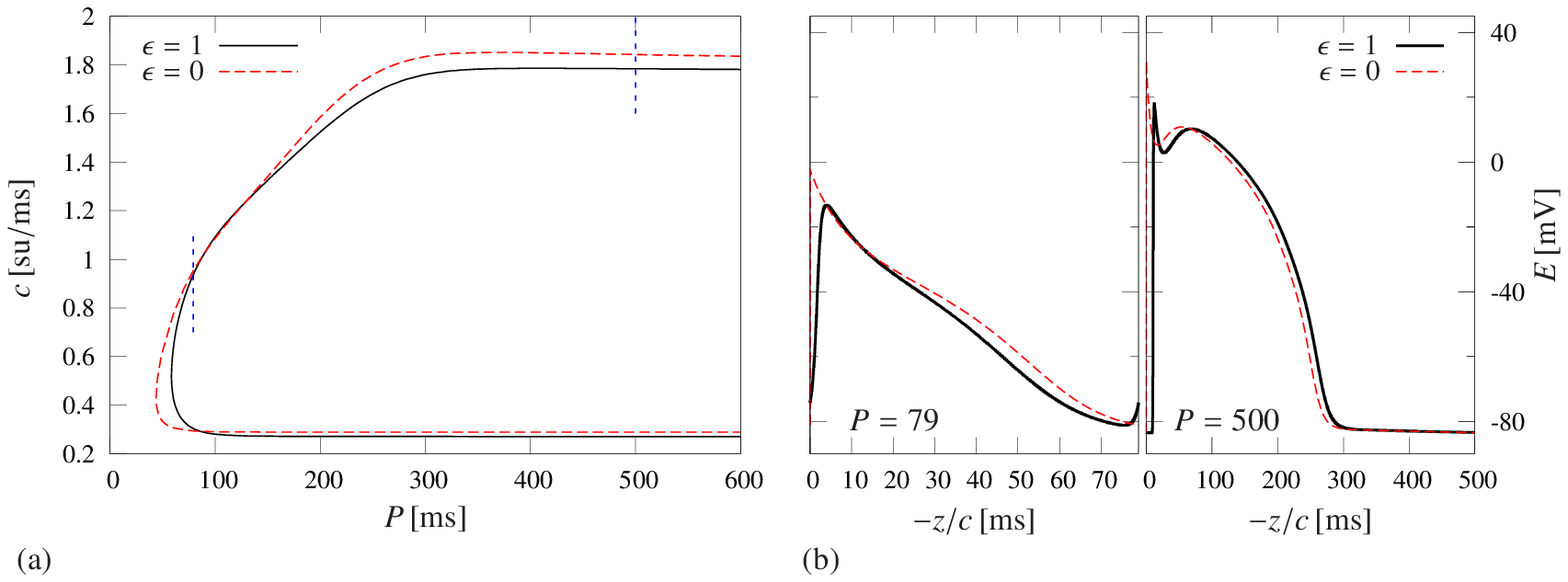}{
  (color online) 
  (a) The numerical CV restitution curves of the full non-asymptotic problem
  \eqref{eq:full} for the original Beeler-Reuter model \cite{Beeler-Reuter-1977}
  (solid line, red online) compared to the  numerical CV restitution curve of the
  asymptotic  problem \eq{BRbvpslow}, \eq{BRbvpfast} and
  \eq{BRbvpmatch} (dashed line, black online).  (b) The AP profiles
  corresponding to $\BCL=79\,\ms$ and $\BCL=500\,\ms$ (same line types). 
}{brsol}

The resulting asymptotic CV restitution curve is shown in \fig{brsol}
by a dashed thin red line. The bold solid black line in the figure
corresponds to the CV restitution curve found from the full
non-asymptotic boundary value problem \eq{full} written for the
full Beeler-Reuter model \eq{BR} at $\eps=1$.  The two curves
demonstrate a good quantitative agreement.

We would like to emphasize here that, while it was still possible to
solve the full problem numerically in this case, and the asymptotic
problem was solved numerically too, solution of the full problem was a
substantially more difficult task than the computation of the
asymptotic CV restitution curve. The problem is certainly well-posed
but it is very stiff and it required a prolonged experimentation with
a variety of software tools and parameter continuation techniques. It
is also worth recalling that the Beeler-Reuter model is not as
complicated as contemporary models are, which leads us to expect that
the non-asymptotic problem for such models is even more difficult to
solve.

\section{Discussion}
\label{discussion}

\dblfigure{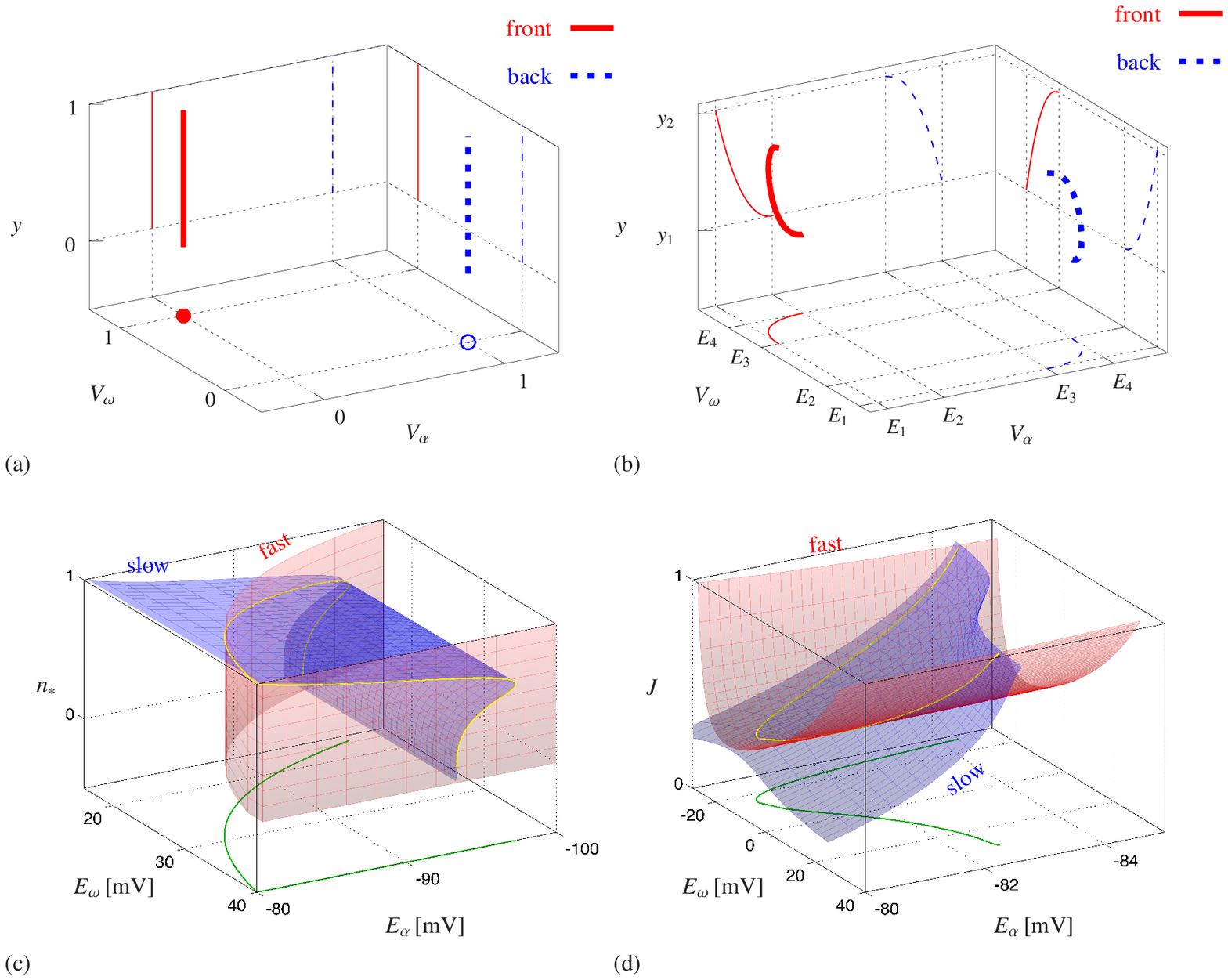}{
  (color online) 
  Pre/post-front voltage selection in the four different models, (a)
  Barkley, (b) FitzHugh-Nagumo, (c) Caricature Noble, (d)
  Beeler-Reuter.
  (a,b) Relationship between $\ualpha$, $\uomega$ and corresponding
  value of the slow variable $\v$ in bold lines, with their
  projections on the coordinate planes in thin lines. Each of the cases
  has two dependencies, for the front of a pulse and for the back of a
  pulse.
  (c,d) Relationship between $\Ealpha$, $\Eomega$ and an indicated
  slow variable, as semi-transparent surfaces, as following from the
  relevant fast and slow subsystems. The lines of intersection of
  the surfaces and their projection onto $(\Ealpha,\Eomega)$ plane are
  also shown.
}{ao}

\paragraph{Summary}
We have demonstrated that singular perturbation theory based on the
largeness of the maximal value of the sodium current $\INa$
compared to other currents and quality of the $\INa$ ionic gates
(smallness of $\mbar(\E)$ and $\hbar(\E)$ in certain voltage ranges),
is capable of reproducing essential spatiotemporal phenomena, using
conduction velocity restitution curve as the simplest nontrivial
example involving both the fast scale and slow scale. We have
explicitly compared the mathematical technique involved here, with
similar problems in the 
classical FitzHugh-Nagumo (FHN) like models of excitable media.
Apart from the different number of equations and the more
complicated right-hand sides, we have identified in the cardiac
models qualitatively new features of topological nature.

\paragraph{Classical simplified excitable models vs ionic cardiac models}
\Fig{ao} illustrates the fundamental difference between FHN-like and
cardiac models as far as the problem of $\Ealpha$ and $\Eomega$ selection
is concerned.  In FHN-like systems (panels (a) and (b)), if the values
of the slow variables are given, then $\Ealpha$ and $\Eomega$ are
uniquely defined, up to the choice between front (jump up) and back
(jump down). So, for one slow variable $\v$ as in the examples 
considered in section \ref{Tikhonov}, there are one-dimensional
manifolds representing all possible fronts and backs. In contrast
in cardiac equations (panels (c) and (d), we have two-dimensional
manifolds in place of one-dimensional 
manifolds.  This is related to the fact that in the parametric
embedding we use, the transmembrane voltage, like a double-faced
Janus, is both a ``slow'' and a ``fast'' variable. In its slow
capacity, it contributes to the pre-front conditions, among other slow
variables. In its fast capacity, it participates in the excitation
front. So given one ``truly slow'' variable ($\nstar$ in panel (c) and
$\jfast$ in panel (d)), we have not a one-dimensional, but a
two-dimensional manifold representing possible fronts, as even when
the value of that slow variable is fixed, the prefront voltage
$\Ealpha$ and, consequently, post-front voltage $\Eomega$ are still
undefined.  Hence to obtain the restitution curve, which is a
1-dimensional manifold, we have to find an intersection of the
manifold representing possible fast fronts with the manifold
representing possible slow solutions, which is something that we don't
do for the FHN-like systems.

Another qualitative difference is, of course, the number of fast
solutions per one excitation pulse. In FHN-type systems, it is
essential that there is a back corresponding to every front, as the
systolic and diastolic pieces of the reduced slow manifold are separated from
each other. In our asymptotic embedding of cardiac models, the
systolic and diastolic pieces of the slow set (which is now not a
manifold) are connected via a piece where both $\n$ and $\m$ gates are
firmly closed and $\E$ is in its slow-variable capacity. Here it should
be noted that existence of a back is not a necessary feature of a
FHN-type system, as in presence of two or more slow variables it is
possible to have a slow manifold with a cusp singularity so that its
systolic and diastolic pieces of the are connected via a monostable
pieces, as in the example suggested by
Zeeman~\cite{Zeeman-1972}. However, although the structure of ionic
models admits, in principle, such
manifolds~\cite{Suckley-Biktashev-2003}, it has not been identified in
any cardiac models so far.

\paragraph{The numerical method for  dynamic CV restitution curves}
In this work, we have proposed a computationally efficient method of
calculating an ideal case of the so called ``dynamic'' or
``steady-state'' restitution protocol, with exactly periodic
propagating pulses. The method exploits asymptotic splitting the
problem into two parts, the slow and the fast subsystems. The
advantage of such a split is that each of the subsystems no longer
depends on the small parameter due to the largeness of $\INa$
and quality of its ionic gates, so they are significantly less
stiff than the original full problem, hence the computational efficiency. The
well-posedness of the problems arising from the asymptotic splitting
is not obvious \apriori, since the asymptotic embedding we use is
non-Tikhonov, and the general results from
singular perturbation theory are not applicable to our case. We have
thus taken care to prove the well-posedness, at least for the case
considered.  

\paragraph{Perspectives: more complicated spatiotemporal regimes}
There are several other protocols used in defining restitution curves,
see \eg\ \cite{Schaeffer-etal-2007} and references therein. They do
not correspond to periodic steadily propagating wave
solutions. Nevertheless, we expect that the proposed asymptotic
splitting should still work there, under some natural assumptions.

So, if propagation of waves is not too complicated, then evolution
of the system away from the fronts can be expected to be well
approximated by the corresponding slow subsystem, which is a non-stiff ordinary
differential equation for every point of the medium. After
application of the propagating wave ansatz, the slow subsystems of the
Caricature Noble and Beeler-Reuter become systems~\eq{caric-slow}
and \eq{BRbvpslow} respectively. Here we note, that without such
ansatz, the slow subsystems would be differential equations with
time as the independent variable, with exactly the same structure as
\eq{caric-slow} and \eq{BRbvpslow} only with a different scaling, so
all the above analysis applies. Fronts themselves for most part of their
evolution can be considered locally as steadily propagating nearly
plane waves, well described by systems like~\eq{caric-fast} and 
\eq{BRbvpfast}. 

Under these assumptions, according to the analysis of the fast
subsystem, the conduction velocity and post-front voltage can be
found as functions of the pre-front voltage and, if relevant,
pre-front value of gate $\j$, and the dynamics of the front is
governed by an ordinary (in the case of one spatial dimension)
differential equation for the position of the front as a function of
time. This gives an unusually coupled system of ordinary
differential equations: the local dynamics provide right-hand sides
for the equation of motion of the next front, and trajectory of the
front provides initial condition for the subsequent piece of slow
dynamics.  This unusual ODE system can predict non-stationary
evolution of the excitation patterns, including restitution
protocols. In different settings, this approach has been used \eg\
in \cite{Biktashev-Tsyganov-2005} and
\cite{Echebarria-Karma-2007}. As demonstrated
in~\cite{Echebarria-Karma-2007}, such unusually coupled ODE system
can form a basis for investigation of such complicated spatiotemporal
phenomenon as the spatiotemporal dynamics of cardiac
alternans.~\footnote{ %
  Echebarria and Karma~\cite{Echebarria-Karma-2007} considered a simplified
  model with some features of the ionic models and similar to the Caricature
  Noble model considered there, but only two variables, without 
  any slow gates.
} %
So extension of the present results to other, more
complicated and important spatiotemporal regimes, seems to be a
natural and imminent next step. %

\paragraph{Perspective: the problem of restitution memory}
The above considerations lead us to the problem of
rate-dependence of restitution curves, or the so called ``memory''
effects, see \eg~\cite{Schaeffer-etal-2007} and references therein. A
typical approach to studying memory effects is purely phenomenological:
memory variables in the restitution relationships are usually
postulated and their properties are obtained inductively from
measurements of the differences between results from various
restitution protocols. Asymptotic analysis such as the one used in the
present work offers a deductive \textit{ab initio} way of treatment
of the memory variables. In such setting, the number of memory variables
equals dimensionality of the phase space of the slow subsystem minus
one, and the memory variables themselves may be identified with values
of the slow variables apart from $\E$, measured during an excitation
front. For example, in the Caricature Noble model here there would be
one memory variable, say taken to be $\nstar$, and in the
Beeler-Reuter model, there would be $\dim{\y}=5$ of them, and they may
be identified with the values of the slow variables $\j$, $\xone$
$\dgate$, $\fgate$ and $\Ca$ as measured at the front, thus making the
restitution much more difficult to predict. However, empirical
evidence from simulations of modern detailed ionic models suggests
that variety of slow trajectories may be much narrower, \ie\ they may
\defacto\ be restricted to a manifold of a smaller
dimensionality~\cite{Simitev-Biktashev-2006}.  A possible theoretical
explanation for such dimensionality reduction is the presence of
further small parameters within the slow subsystem; this mechanism was
considered in detail in \cite{Biktashev-etal-2008}. So detailed
studies of finer asymptotic structure of slow subsystems of
practically interesting ionic models are a promising direction for
further studies, which may help in understanding the memory effects in
restitution. It is well known that memory effects can play
considerable part in development of alternans and therefore in
development of cardiac arrhythmias (see~\eg~\cite{Mironov-etal-2008}
for a recent study) and are for this reason of considerable interest.

\paragraph{Perspective: a novel numerical method for excitation propagation}
The above considerations outline possible \emph{qualitative}
applications of the present study.  A possible \emph{quantitative}
application is an advanced method for calculation of activation
sequences, which can be achieved by the aforementioned coupling of differential
equations for the local slow dynamics and for the front motion. For
these applications, two or three spatial dimensions rather than one
are interesting, hence the equation of motion for the front is a
partial differential equation of motion of a line (in 2D) or surface
(in 3D). One immediate difference is that propagation of the front
shall no longer depend only on the prefront voltage and $\j$-gate, but
also on the spatial configuration of the front. It is well known that
unless the shape of the front deviates very strongly from plain, the
effect of its shape is mostly accounted for through its mean
curvature, and we have demonstrated that the effect of curvature can
be easily incorporated into the asymptotic description of the front
dynamics~\cite{Simitev-Biktashev-2006}. This approach can be used to
describe normal activation sequences in the heart, when the graph of
the front solution in the space-time is a manifold without internal
boundaries. More serious problems occur if there are propagation
blocks and/or wave breaks, which introduce boundaries of the front
manifold in space-time. In such cases, a separate asymptotic description
for the codimension-two areas, the wave break trajectories and the
propagation block loci, are needed; obtaining such asymptotic
description is another important direction for further research. 

An obvious caveat here is, as with any asymptotic approach, that the
asymptotics are valid in the limit of $\eps\to0$ whereas we apply it
for a finite value of that small parameter. Hence applicability of
this approach on the quantitative level will depend on whether the
error generated by such approximation is tolerable for the particular
application; however, it should be remembered that higher-order
terms if necessary can improve the accuracy, see an example in
\cite{Biktashev-etal-2008}.


\end{document}